\documentclass[11pt]{article}
\usepackage{amsfonts}
\usepackage[latin9]{inputenc}
\usepackage{geometry}
\usepackage{setspace}
\usepackage{amsmath}
\usepackage{amssymb}
\usepackage{bbm}
\usepackage{dcolumn}
\usepackage{graphicx}
\usepackage{enumerate}
\usepackage[inline, shortlabels]{enumitem}
\usepackage[multiple]{footmisc}
\usepackage{hhline}
\usepackage[position=bottom]{subfig}
\usepackage[authoryear]{natbib}
\usepackage[colorlinks,linkcolor=blue,citecolor=blue]{hyperref}
\usepackage{algorithm,algorithmic}

\newcolumntype{d}{D{.}{.}{-1}}
\newtheorem{theorem}{Theorem}
\newtheorem{remark}{Remark}
\newtheorem{lemma}{Lemma}
\newtheorem{corollary}{Corollary}

\newtheorem{assumption}{Assumption}

\newenvironment{proof}[1][Proof]{\noindent \textbf{#1.} }{\  \rule{0.5em}{0.5em}}

\begin{document}

\title{Inference on Linear Regressions \\
with Two-Way Unobserved Heterogeneity}
\author{Hugo Freeman\thanks{
Department of Economics, Michigan State University, East Lansing, USA.
E-mail: freem391@msu.edu} \and Dennis Kristensen\thanks{
Department of Economics, University College London, London, UK. E-mail:
d.kristensen@ucl.ac.uk} }
\date{}
\maketitle

\begin{abstract}
We develop a general estimation and inference procedure for the common
parameters in linear panel data regression models with nonparametric two-way
specification of unobserved heterogeneity. The procedure takes as input any
first-step estimators of the nonparametric regression function and the fixed
effects and relies on two key ingredients: First, we develop moment
conditions for the common parameters that are Neyman orthogonal with respect
to the nonparametric regression function. Second, we employ a novel
adjustment of the nonparametric regression estimator so the estimated fixed
effects do not generate incidental parameter biases. Together, these ensure
that the resulting estimator of the common parameters is $\sqrt{NT}$%
--asymptotically normally distributed under weak conditions on the
estimators of fixed effects and regression function. Next, we propose a
novel two-step estimator of the nonparametric regression function and the
fixed effects and verify that this particular estimator satisfies the
conditions of our general theory. A numerical study shows that the proposed
estimators perform well in finite samples.
\end{abstract}

\section{Introduction}

We are interested in inference on $\beta $ in the model, 
\begin{equation}
Y_{it}=X_{it}^{\prime }\beta +g(\alpha _{i},\gamma _{t})+\varepsilon
_{it},\quad i=1,\dots ,N,\quad t=1,\dots ,T,  \label{eqn:model}
\end{equation}%
where $X_{it}\in \mathbb{R}^{K}$ are a set of observed covariates, $\alpha
_{i}$ and $\gamma _{t}$ are unobserved finite--dimensional fixed effects
that potentially correlate with $X_{it}$, and $\varepsilon _{it}$ is an
idiosyncratic shock that is mean--independent of $X_{i,t}$, $\alpha _{i}$
and $\gamma _{t}$. The unknown function $g(\cdot ,\cdot )$ adheres to
certain smoothness conditions but is otherwise left unrestricted and treated
as a nonparametric object.

Our framework includes as special cases well-known parametric models such as
the two-way additive fixed-effects model, $g(\alpha _{i},\gamma _{t})=\alpha
_{i}+\gamma _{t}$ and the interactive fixed-effects model, $g(\alpha
_{i},\gamma _{t})=\sum_{r=1}^{R}u_{r}\left( \alpha _{i}\right) v_{r}\left(
\gamma _{t}\right) =\sum_{r=1}^{R}\lambda _{ir}f_{tr}$, where $\lambda
_{ir}=\sigma _{r}u_{r}(\alpha _{i})$ and $f_{tr}=v_{r}(\gamma _{t})$. In
both cases, the (transformed) individual effects and time effects can be
controlled for with, e.g., the least square estimator %
\citep{Bai2009,MoonWeidner2015}, or correlated common effects estimator %
\citep{Pesaran2006} and the resulting estimator of $\beta $ is $\sqrt{NT}$%
--asymptotically normally distributed.

However, the nonparametric case presents a challenging scenario that does
not trivially lead to $\sqrt{NT}$-consistent estimates for $\beta $. For
example, \cite{freeman2023linear} develop least-squares estimator of $\beta $
where a sieve--estimator of $g$ is employed, but find that the resulting
estimator of $\beta $ does not enjoy $\sqrt{NT}$-asymptotic normality; see
also \cite{fernandez2021low}.

The contribution of this paper is two--fold: First, we propose a general
estimation procedure for $\beta $ that takes as input any first-step
estimators of $g$ and $(\alpha _{i},\gamma _{t})$. Under weak conditions on
the first--step estimators, we show that the resulting estimator of $\beta $
is $\sqrt{NT}$-asymptotically normally distributed, thereby allowing for
standard inference tools to be employed. Second, we propose novel estimators
of $g$ and $(\alpha _{i},\gamma _{t})$ and verify that these satisfy the
regularity conditions for the theory of our general estimation procedure to
hold.

The general estimation procedure relies on two main ingredients: First, we
follow \cite{freeman2022multidimensional} and develop moment conditions for
the estimation of $\beta $ that are Neyman orthogonal w.r.t. to $g$; see,
e.g., \cite{chernozhukov2022locally} for an overview of this method in the
context of semiparametric estimation. Unsurprisingly, the Neyman orthogonal
moment conditions take the same form as the ones used for estimation of the
partially linear model, where $\left( \alpha _{i},\gamma _{t}\right) $ are
treated as observed co--variates; see, e.g., \cite{robinson1988root}. In
order to operationalise these moment conditions, the researcher will have to
plug in first-step estimators of the nonparametric functions $g_{X}\left(
\alpha _{i},\gamma _{t}\right) :=\mathbb{E}\left[ X_{it}|\alpha _{i},\gamma
_{t}\right] $ and $g_{Y}\left( \alpha _{i},\gamma _{t}\right) :=\mathbb{E}%
\left[ Y_{it}|\alpha _{i},\gamma _{t}\right] $ together with estimators of
the unknown fixed effects $\left( \alpha _{i},\gamma _{t}\right) $. In the
ideal scenario where $\left( \alpha _{i},\gamma _{t}\right) $ are observed,
the Neyman orthogonality of the moment conditions w.r.t. $g_{Z}:=\left(
g_{X},g_{Y}\right) $ ensure that the nonparametric estimation of $g_{Z}$ has
no first-order effect on the estimator of $\beta $. In particular, it
ensures that the estimated fixed effects used in the estimation of $g_{Z}$
do not contribute to the asymptotic variance the estimator of $\beta $.

However, the resulting estimator of $\beta $ will generally suffer from
well-known incidental parameter biases due to estimated fixed effects. The
leading term of these biases could be adjusted for using standard
techniques, e.g., analytical bias adjustment or the Jackknife. The second
ingredient of our procedure avoids any such post--estimation bias adjustment
by combining sample splitting and a generalised version of the two--way
nonparametric regression estimator of \cite{freeman2023linear} that removes
the incidental parameter biases. Under weak regularity conditions on the
first-step nonparametric estimator, we show that the resulting estimator of $%
\beta $ will be $\sqrt{NT}$-asymptotically normally distributed and with the
same distribution as the oracle estimator where $g$ is known and $\left(
\alpha _{i},\gamma _{t}\right) $ are observed.

Next, we develop a specific estimator of $\left( \alpha _{i},\gamma
_{t}\right) $ and $g_{Z}$ that satisfy the regularity conditions for our
general estimation theory to hold. These novel estimators impose weak
conditions on $g_{Z}$ and the fixed effects. Under weak regularity
conditions, \cite{FreemanKristensen2026} show that a finite number of
leading eigenfunctions of $g$ evaluated at $\left( \alpha _{i},\gamma
_{t}\right) $ can be used as proxies for $\left( \alpha _{i},\gamma
_{t}\right) $. Our proposed estimator is then obtained in two steps: In the
first step, we employ the approximate factor model estimator of \cite%
{freeman2023linear} to obtain estimates of the leading eigenfunctions of $g$%
. In the second step, we run a nonparametric regression of $Y_{it}$ and $%
X_{it}$, respectively, onto the estimated leading eigenfunctions from the
first step to obtain our final estimator of $g_{Z}$. The nonparametric
regression approach is related to the nonparametric smoothing technique
employed in \cite{freeman2024multidimensional}.

The nonparametric regression in the second step comes in two versions: The
first version constructs a multivariate index of the eigenfunctions that are
then used as covariates in a nonparametric regression. The second one
employs an nonparametric additive regression procedure with the attractive
feature of not suffering from any curse--of--dimensionality. The two
estimators rely on different assumptions on $g_{Z}$ and come with different
convergence rates: The first one imposes weaker conditions and comes with
smaller biases but bigger variances compared to the second one. We
demonstrate that both of the two estimators exhibit sufficiently fast rate
of convergence so that they can be combined with the Neyman orthogonal
moment conditions to obtain the desired result.

Alternative estimators of $\left( \alpha _{i},\gamma _{t}\right) $ and $%
g_{Z} $ are proposed in \cite{deaner2025inferring} and \cite%
{beyhum2025inference} with the latter using their proposal to estimate $%
\beta $ based on the Neyman orthogonal moments described above. The
regression estimator of \cite{deaner2025inferring} is closest in spirit to
ours but employ a different proxy for the fixed effects, namely an estimated
pseudo--distance, while \cite{beyhum2025inference} use sample moments of $%
Z_{it}$ to estimate the fixed effects and then $k$-means methods to estimate 
$g_{Z}$. We expect our estimators of $g_{Z}$ to come with smaller biases
compared to the one of \cite{beyhum2025inference} if $g_{Z}$ is a smooth
function. In addition, the formal assumptions under which the fixed effects
estimators of \cite{deaner2025inferring} and \cite{beyhum2025inference} are
valid are different from ours. As such, the three papers complement each
other.

One particularly attractive feature of our proposed estimator of $\beta $
over the one of \cite{beyhum2025inference} is that it will in general be
more efficient:\ Suppose that 
\begin{equation}
X_{it}=g_{X}(\alpha _{i},\gamma _{t})+\tilde{g}_{X}(\tilde{\alpha}_{i},%
\tilde{\gamma}_{t})+e_{it},\text{ \ \ }\mathbb{E}\left[ e_{it}|\alpha
_{i},\gamma _{t},\tilde{\alpha}_{i},\tilde{\gamma}_{t}\right] =0,
\label{eq: X DGP Beyhum}
\end{equation}%
where $(\tilde{\alpha}_{i},\tilde{\gamma}_{t})$ are additional latent
individual and time effects that are not present in (\ref{eqn:model}). The
proposed algorithm of \cite{beyhum2025inference} will control for both $%
g_{X}(\alpha _{i},\gamma _{t})$ and $\tilde{g}_{X}(\tilde{\alpha}_{i},\tilde{%
\gamma}_{t})$ in its first step and so in effect use an estimator of $e_{it}$
as regressors in their proposed estimator of $\beta $. In contast, our
procedure uses an estimator of $X_{it}-g_{X,1}(\alpha _{i},\gamma _{t})-%
\mathbb{E}\left[ g_{X,2}(\tilde{\alpha}_{i},\tilde{\gamma}_{t})|\alpha
_{i},\gamma _{t}\right] $ as regressors in estimation of $\beta $.
Importantly, the variance of the latter will be larger than the one of $%
e_{it}$ and so leads to lower asymptotic variance of our estimator of $\beta 
$ compared to the one of \cite{beyhum2025inference}. Moreover, our procedure
suffers from a lower curse--of--dimensionality since it only involves
learning about/estimating $(\alpha _{i},\gamma _{t})$; in contrast, the one
of \cite{beyhum2025inference} requires learning about $(\tilde{\alpha}_{i},%
\tilde{\gamma}_{t})$ in addition. Thus, unless $\tilde{g}_{X}(\tilde{\alpha}%
_{i},\tilde{\gamma}_{t})=0$, our procedure should dominate theirs both in
small and large samples.

We carry out an extensive simulation study that support our theoretical
results: The estimator of $\beta $ that relies on our novel
eigenfunction--based estimators of the fixed effects suffers from only small
finite-sample biases and dominates the estimators of $\beta $ that takes as
input fixed effects estimators based on aforementioned pseudo-distance and
sample moments across a range of different DGP's. In particular, in
scenarios where $(\alpha _{i},\gamma _{t})$ are multivariate, our proposal
significantly outperforms these alternative fixed effects estimators.

The remainder of the paper is organised as follows: In Section \ref{sect:NO}
, we present the general theory for two--step estimation of $\beta $ that
allows for a broad range of first--step estimators of $g_{Z}$ and fixed
effects. We present our novel estimators of $g_{Z}$ and fixed effects in
Section \ref{sect:g_Z ID}. The asymptotic properties of this estimator are
analysed in Section \ref{sect:Asymptotics}. The results of our simulation
study are presented in Section \ref{sect:Simulations} and we conclude in
Section \ref{sect:Conclusion}. All proofs have been relegated to the
Appendix.

\section{A general theory for estimation of $\protect\beta $}

\label{sect:NO}

We here first present our general estimation approach, the Neyman
Orthogonalised estimator of $\beta $, and show that it can achieve $\sqrt{NT}
$-asymptotic normality under weak conditions on the first-step estimation of
the nonparametric component $g$. However, the estimator will generally
suffer from incidental parameter biases. We show how a Jackknife-type
procedure can be used to remove such biases when combined with sample
splitting.

Recall the following definitions, 
\begin{equation}
g_{X}\left( \alpha _{i},\gamma _{t}\right) :=\mathbb{E}[X_{it}|\alpha
_{i},\gamma _{t}],\text{ \ \ \ }g_{Y}\left( \alpha _{i},\gamma _{t}\right) :=%
\mathbb{E}[Y_{it}|\alpha _{i},\gamma _{t}],
\label{eqn:conditionalExpectation}
\end{equation}%
and write $\Gamma _{X,it}=g_{X}\left( \alpha _{i},\gamma _{t}\right) $ and $%
\Gamma _{Y,it}=g_{Y}\left( \alpha _{i},\gamma _{t}\right) $ for brevity.
Note that $\Gamma _{Y,it}=\Gamma _{X,it}^{\prime }\beta +\Gamma _{it}$,
where $\Gamma _{it}:=g\left( \alpha _{i},\gamma _{t}\right) $. Moreover, 
\begin{equation*}
\mathbb{E}[\eta _{it}|\alpha _{i},\gamma _{t}]=0,\text{ \ \ }\eta
_{it}:=X_{it}-\Gamma _{X,it},
\end{equation*}%
holds by definition of $g_{X}\left( \alpha _{i},\gamma _{t}\right) $. For
ease of notation, we let $Z_{it}=\left( Y_{it},X_{it}\right) $, $%
g_{Z}=\left( g_{Y},g_{X}\right) $, $\Gamma _{Z,it}=\left( \Gamma
_{Y,it},\Gamma _{X,it}^{\prime }\right) ^{\prime }$ and $\varepsilon
_{Z,it}=\left( \varepsilon _{it},\eta _{it}\right) $ so that $Z_{it}=\Gamma
_{Z,it}+\varepsilon _{Z,it}$, where $\mathbb{E}[\varepsilon _{Z,it}|\alpha
_{i},\gamma _{t}]=0$.

Our estimator of $\beta $ will be based on the following moment function, 
\begin{equation*}
m\left( Z_{it},\beta ,\Gamma _{Z,it}\right) =\left( X_{i,t}-\Gamma
_{X,it}\right) \left( Y_{it}-\Gamma _{Y,it}-\left( X_{it}-\Gamma
_{X,it}\right) ^{\prime }\beta \right) \in \mathbb{R}^{K},
\end{equation*}%
where we note that%
\begin{equation*}
Y_{it}-\Gamma _{Y,it}-\left( X_{i,t}-\Gamma _{X,it}\right) ^{\prime }\beta
_{0}=\varepsilon _{it}.
\end{equation*}%
Note that this is the same moment function used in the estimation of the
partially lineear model with cross-sectional data, c.f. \cite%
{robinson1988root}. It is easily checked that the data--generating
parameter, denoted by $\beta _{0}$, is identified as the unique solution to $%
\mathbb{E}\left[ m\left( Z_{it},\beta ,\Gamma _{Z,it}\right) \right] =0$
w.r.t. $\beta $,%
\begin{equation*}
\beta _{0}=\mathbb{E}\left[ \left( X_{it}-\Gamma _{X,it}\right) \left(
X_{it}-\Gamma _{X,it}\right) ^{\prime }\right] ^{-1}\mathbb{E}\left[ \left(
X_{it}-\Gamma _{X,it}\right) \left( Y_{it}-\Gamma _{Y,it}\right) \right] ,
\end{equation*}%
under the following assumptions:

\begin{assumption}
\label{ass:ID}$\mathbb{E}\left[ \left( X_{it}-\Gamma _{X,it}\right) \left(
X_{it}-\Gamma _{X,it}\right) ^{\prime }\right] $ has full rank; $\mathbb{E}%
[\varepsilon _{Z,it}|\left\{ X_{js},\alpha _{j},\gamma _{s}\right\} _{js}]=0$%
.
\end{assumption}

We here impose strict exogeneity. Identification holds under the weaker
condition of contemporaneous exogeneity, $\mathbb{E}[\varepsilon
_{Z,it}|X_{it},\alpha _{i},\gamma _{t}]=0$, but parts of our theoretical
analysis will make use of strict exogeneity to simplify the arguments.

Importantly, $m\left( Z_{i,t},\beta ,\Gamma _{Z,it}\right) $ is
Neyman-orthogonal w.r.t. $\Gamma _{Z,it}$ at $\beta =\beta _{0}$,%
\begin{eqnarray*}
\mathbb{E}\left[ \frac{m\left( Z_{it},\beta _{0},\Gamma _{Z,it}\right) }{%
\partial \Gamma _{Y,it}}\right] &=&-\mathbb{E}\left[ X_{i,t}-\Gamma _{X,it}%
\right] =0, \\
\mathbb{E}\left[ \frac{m\left( Z_{i,t},\beta _{0},\Gamma _{Z,it}\right) }{%
\partial \Gamma _{X,it}}\right] &=&\mathbb{E}\left[ X_{it}-\Gamma _{X,it}%
\right] ^{\prime }\beta _{0}-\mathbb{E}\left[ Y_{it}-\Gamma _{Y,it}-\left(
X_{i,t}-\Gamma _{X,it}\right) ^{\prime }\beta _{0}\right] =0.
\end{eqnarray*}%
An important consequence of this feature is that estimation of $\beta $
based on these moment conditions is less sensitive to the first-step
estimation of $\Gamma _{Z,it}$.

To develop such estimator, we here take as given any first--step estimators $%
\hat{\Gamma}_{Z}=(\hat{\Gamma}_{Y},\hat{\Gamma}_{X})$ of $\Gamma _{Z}$ as
chosen by the researcher. We then define our Neyman--orthogonal estimator of 
$\beta $ as the solution to $\sum_{it}m(Z_{it},\hat{\beta}_{NO},\hat{\Gamma}%
_{Z,it})=0$, where $\sum_{it}:=\sum_{i=1}^{N}\sum_{t=1}^{T}$, which takes
the form%
\begin{equation}
\hat{\beta}_{NO}=\left[ \sum_{it}(X_{it}-\widehat{\Gamma }_{X,it})(X_{it}-%
\widehat{\Gamma }_{X,it})^{\prime }\right] ^{-1}\sum_{it}(X_{it}-\widehat{%
\Gamma }_{X,it})(Y_{it}-\widehat{\Gamma }_{Y,it}),
\label{Eq: hat-beta_NO def}
\end{equation}%
Due to the moment conditions being orthogonal w.r.t. $\hat{\Gamma}_{Z}$,
this estimator of $\beta $ will be $\sqrt{NT}$-asymptotically normally
distributed convergence as long as $\hat{\Gamma}_{Z}-\Gamma _{Z}=o_{P}\left(
(NT)^{-1/4}\right) $. This is a well-known result from the literature on
semiparametric estimators.

However, in our setting, above rate result is in general not achievable when
we treat $\alpha _{i}$ and $f_{t}$ as unknown fixed effects since in this
case $\hat{\Gamma}_{Z}$ will involve estimates of these. To see this,
suppose that in fact $g$ in eq. (\ref{eqn:model}) is known to us, in which
case we have fully parametric fixed effects model whose parameters could be
estimated by $(\hat{\beta},\{\hat{\alpha}_{i}\}_{i=1}^{N},\{\hat{\gamma}%
_{t}\}_{t=1}^{T})=\arg \min_{\beta ,\{\alpha _{i}\}_{i=1}^{N},\{\gamma
_{t}\}_{t=1}^{T}}\sum_{it}(Y_{it}-X_{it}^{\prime }\beta -g(\alpha
_{i},\gamma _{t}))^{2}$. This in turn yields the estimator $\hat{\Gamma}%
_{it}=g(\hat{\alpha}_{i},\hat{\gamma}_{t})$. Except in a few special cases,
such as when $g(\alpha _{i},\gamma _{t})$ is linear or multiplicative in its
arguments, it is well-known that $\hat{\beta}$ will generally suffer from
incidental parameter biases, $\mathbb{E[}\hat{\beta}]\simeq \beta
_{0}+B_{\alpha }/T+B_{\gamma }/N$, where $B_{\alpha }$ and $B_{\gamma }$ are
due to the incidental parameter biases caused by $\{\hat{\alpha}%
_{i}\}_{i=1}^{N}$ and $\{\hat{\gamma}_{t}\}_{t=1}^{T}$, respectively; see,
e.g., Theorem 4.1 of \cite{FernandezValWeidner2016}. Obviously, in our
setting with $g$ unknown, we cannot hope to do better than in the parametric
submodel with $g$ known, unless we are willing to impose further
restrictions on the unknown fixed effects $\alpha _{i}$ and $\gamma _{t}$ or
the mapping $g$.

In light of above, we develop a general asymptotic result for $\hat{\beta}%
_{NO}$ when $\hat{\Gamma}_{Z,it}=\hat{g}_{Z}(\hat{\alpha}_{i},\hat{\gamma}%
_{t})$ contains non--negiglible biases due to $\left( \hat{\alpha}_{i},\hat{%
\gamma}_{t}\right) $ being used in place of $(\alpha _{i},\gamma _{t})$. We
first develop an expansion of $\hat{\beta}_{NO}$ w.r.t. $\hat{\Gamma}_{Z}$:
Applying the mean--value theorem to $\sum_{it}m(Z_{it},\hat{\beta}_{NO},\hat{%
\Gamma}_{Z,it})=0$,%
\begin{equation*}
0=\frac{1}{NT}\sum_{it}m\left( Z_{it},\beta _{0},\hat{\Gamma}_{Z,it}\right) +%
\frac{1}{NT}\sum_{it}\frac{\partial m\left( Z_{it},\beta _{0},\hat{\Gamma}%
_{Z,it}\right) }{\partial \beta ^{\prime }}(\hat{\beta}_{NO}-\beta _{0}),
\end{equation*}%
where%
\begin{equation*}
\frac{1}{NT}\sum_{it}\frac{\partial m\left( Z_{it},\beta _{0},\hat{\Gamma}%
_{Z,it}\right) }{\partial \beta ^{\prime }}=\frac{1}{NT}\sum_{it}\left(
X_{i,t}-\hat{\Gamma}_{X,it}\right) \left( X_{it}-\hat{\Gamma}_{X,it}\right)
^{\prime }.
\end{equation*}%
A second--order Taylor expansion of the right--hand side of above equation
w.r.t. $\hat{\Gamma}_{X,it}$ at $\Gamma _{X,it}$ yields%
\begin{align*}
\frac{1}{NT}\sum_{it}(X_{it}-\widehat{\Gamma }_{X,it})& (X_{it}-\widehat{%
\Gamma }_{X,it})^{\prime }=\frac{1}{NT}\sum_{it}\eta _{it}\eta _{it}^{\prime
}+\frac{1}{NT}\sum_{it}\eta _{it}(\hat{\Gamma}_{X,it}-\Gamma
_{X,it})^{\prime } \\
& +\frac{1}{NT}\sum_{it}(\hat{\Gamma}_{X,it}-\Gamma _{X,it})\eta
_{it}^{\prime }+\frac{1}{NT}\sum_{it}(\hat{\Gamma}_{X,it}-\Gamma _{X,it})(%
\hat{\Gamma}_{X,it}-\Gamma _{X,it})^{\prime },
\end{align*}%
where, assuming $\sum_{it}\left\Vert \eta _{it}\right\Vert ^{2}/\left(
NT\right) \rightarrow ^{p}\mathbb{E}\left[ \left\Vert \eta _{it}\right\Vert
^{2}\right] $ and $\sum_{it}\left\Vert \hat{\Gamma}_{it}-\Gamma
_{X,it}\right\Vert ^{2}/\left( NT\right) =o_{P}\left( 1\right) $, 
\begin{eqnarray*}
\left\Vert \frac{1}{NT}\sum_{it}\eta _{it}(\hat{\Gamma}_{it}-\Gamma
_{X,it})^{\prime }\right\Vert &\leq &\sqrt{\frac{1}{NT}\sum_{it}\left\Vert
\eta _{it}\right\Vert ^{2}\times \frac{1}{NT}\sum_{it}\left\Vert \hat{\Gamma}%
_{it}-\Gamma _{X,it}\right\Vert ^{2}} \\
&=&O_{P}\left( 1\right) \times o_{P}\left( 1\right) =o_{P}\left( 1\right) ,
\end{eqnarray*}%
and similar for the other terms. Thus, $\frac{1}{NT}\sum_{it}(X_{it}-\hat{%
\Gamma}_{X,it})(X_{it}-\hat{\Gamma}_{X,it})^{\prime }=\frac{1}{NT}%
\sum_{it}\eta _{it}\eta _{it}^{\prime }+o_{P}\left( 1\right) $.

Next, another second--order Taylor expansion combined with $Y_{i,t}-\Gamma
_{Y,it}=\beta _{0}^{\prime }\eta _{it}+\varepsilon _{i,t}$ gives us%
\begin{eqnarray*}
\frac{1}{NT}\sum_{it}m\left( Z_{it},\beta _{0},\hat{\Gamma}_{Z,it}\right) &=&%
\frac{1}{NT}\sum_{it}\eta _{it}\varepsilon _{i,t}+\frac{1}{NT}\sum_{it}(\hat{%
\Gamma}_{X,it}-\Gamma _{X,it})(\hat{\Gamma}_{Y,it}-\Gamma _{Y,it}) \\
&&+\frac{1}{NT}\sum_{it}\varepsilon _{it}(\hat{\Gamma}_{X,it}-\Gamma
_{X,it})+\frac{1}{NT}\sum_{it}\eta _{it}(\hat{\Gamma}_{Y,it}-\Gamma _{Y,it}),
\end{eqnarray*}%
where the last three terms contain the first and second--order effects of $%
\hat{\Gamma}_{Z,it}$. The following result provides conditions under which $%
\hat{\beta}_{NO}$ is asymptotically normally distributed but suffers from
incidental parameter biases due to the presence of these terms:

\begin{theorem}
\label{Th: NO general}Suppose that Assumption \ref{ass:ID} holds and that $%
\frac{1}{NT}\sum_{it}||{\Gamma }_{X,it}-\widehat{\Gamma }%
_{X,it}||^{2}=o_{p}(1)$, 
\begin{eqnarray}
\frac{1}{NT}\sum_{it}(\hat{\Gamma}_{X,it}-\Gamma _{X,it})(\hat{\Gamma}%
_{Y,it}-\Gamma _{Y,it}) &=&\frac{B_{\alpha ,1}}{T}+\frac{B_{\gamma ,1}}{N}%
+o_{P}\left( 1/\sqrt{NT}\right)  \label{eq: rate cond 0} \\
\frac{1}{NT}\sum_{it}\eta _{it}(\hat{\Gamma}_{Y,it}-\Gamma _{Y,it}) &=&\frac{%
B_{\alpha ,2}}{T}+\frac{B_{\gamma ,2}}{N}+o_{P}\left( 1/\sqrt{NT}\right) ,
\label{eq: rate cond 1} \\
\frac{1}{NT}\sum_{it}\varepsilon _{it}(\hat{\Gamma}_{X,it}-\Gamma _{X,it})
&=&\frac{B_{\alpha ,3}}{T}+\frac{B_{\gamma ,3}}{N}+o_{P}\left( 1/\sqrt{NT}%
\right)  \label{eq: rate cond 2} \\
\frac{1}{NT}\sum_{it}\eta _{it}\eta _{it}^{\prime }\rightarrow ^{P}\Omega
_{X}:=\mathbb{E}\left[ \eta _{it}\eta _{it}^{\prime }\right] , &&\frac{1}{%
\sqrt{NT}}\sum_{it}\eta _{it}\varepsilon _{i,t}\rightarrow ^{d}\mathcal{N}%
\left( 0,\Sigma \right) .  \label{eq: rate cond 3}
\end{eqnarray}%
Then, with $B_{\alpha }=\Omega _{X}^{-1}\sum_{k=1}^{3}B_{\alpha ,k}$ and $%
B_{\gamma }=\Omega _{X}^{-1}\sum_{k=1}^{3}B_{\gamma ,k}$,%
\begin{equation*}
\sqrt{NT}(\hat{\beta}_{NO}-\beta _{0}-B_{\alpha }/T-B_{\gamma
}/N)\rightarrow ^{d}\mathcal{N}\left( 0,\Omega _{X}^{-1}\Sigma \Omega
_{X}^{-1}\right) .
\end{equation*}
\end{theorem}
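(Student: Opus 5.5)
Since $m$ is linear in $\beta$, no mean-value argument is needed. By (\ref{Eq: hat-beta_NO def}) we have the exact identity
\begin{equation*}
\hat{\beta}_{NO}-\beta _{0}=\hat{H}^{-1}\frac{1}{NT}\sum_{it}m\left( Z_{it},\beta _{0},\hat{\Gamma}_{Z,it}\right) ,\qquad \hat{H}:=\frac{1}{NT}\sum_{it}(X_{it}-\hat{\Gamma}_{X,it})(X_{it}-\hat{\Gamma}_{X,it})^{\prime }.
\end{equation*}
The plan is to show $\hat{H}\rightarrow ^{P}\Omega _{X}$ and to obtain a stochastic expansion of $\frac{1}{NT}\sum_{it}m(Z_{it},\beta _{0},\hat{\Gamma}_{Z,it})$. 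We then combine the two by Slutsky's lemma.

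\textbf{Step 1 (Hessian).} Expand $\hat{H}$ exactly as in the display preceding the theorem. The two cross terms are bounded by Cauchy--Schwarz by $\{\frac{1}{NT}\sum\Vert \eta _{it}\Vert ^{2}\}^{1/2}\{\frac{1}{NT}\sum \Vert \hat{\Gamma}_{X,it}-\Gamma _{X,it}\Vert ^{2}\}^{1/2}$. The first factor is $O_{P}(1)$, since the trace of the first part of (\ref{eq: rate cond 3}) gives $\frac{1}{NT}\sum\Vert \eta _{it}\Vert ^{2}\rightarrow ^{P}\mathrm{tr}\,\Omega _{X}$. The second factor is $o_{P}(1)$ by assumption. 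The quadratic term is $o_{P}(1)$ directly. Hence $\hat{H}\rightarrow ^{P}\Omega _{X}$, and $\Omega _{X}$ is invertible by Assumption \ref{ass:ID}. So $\hat{H}^{-1}\rightarrow ^{P}\Omega _{X}^{-1}$ by the continuous mapping theorem.

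\textbf{Step 2 (score).} Write $X_{it}-\hat{\Gamma}_{X,it}=\eta _{it}-\Delta _{X,it}$ and $Y_{it}-\hat{\Gamma}_{Y,it}-(X_{it}-\hat{\Gamma}_{X,it})^{\prime }\beta _{0}=\varepsilon _{it}-\Delta _{Y,it}+\Delta _{X,it}^{\prime }\beta _{0}$, where $\Delta _{Z}:=\hat{\Gamma}_{Z}-\Gamma _{Z}$. Multiplying out gives an exact algebraic decomposition. Its pieces are $\frac{1}{NT}\sum\eta \varepsilon $; terms of the type $\eta \Delta _{Y}$, $\varepsilon \Delta _{X}$ and $\Delta _{X}\Delta _{Y}$; and terms involving $\Delta _{X}^{\prime }\beta _{0}$, namely $\eta \Delta _{X}^{\prime }\beta _{0}$ and $\Delta _{X}\Delta _{X}^{\prime }\beta _{0}$.

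The key task is to match this with the three terms in (\ref{eq: rate cond 0})--(\ref{eq: rate cond 2}). This requires using $\Gamma _{Y}=\Gamma _{X}^{\prime }\beta _{0}+\Gamma $, in the way the paper's displayed expansion does. The $\Delta _{X}^{\prime }\beta _{0}$ contributions are reorganised so that they combine with the $\Delta _{Y}$ terms, leaving precisely the three displayed sums. I will also check the sign conventions, e.g.\ that the $\eta \Delta _{Y}$ and $\varepsilon \Delta _{X}$ terms enter with signs that are absorbed into the definitions of $B_{\cdot ,k}$.

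I expect this bookkeeping to be the main obstacle. Reconciling the raw expansion with the paper's stated three-term form may require interpreting $\hat{\Gamma}_{Y}$ as the plug-in $\hat{\Gamma}_{X}^{\prime }\beta _{0}+\hat{\Gamma}$, or regrouping accordingly. Once it is done, assumptions (\ref{eq: rate cond 0})--(\ref{eq: rate cond 2}) give
\begin{equation*}
\frac{1}{NT}\sum_{it}m\left( Z_{it},\beta _{0},\hat{\Gamma}_{Z,it}\right) =\frac{1}{NT}\sum_{it}\eta _{it}\varepsilon _{it}+\frac{\sum_{k}B_{\alpha ,k}}{T}+\frac{\sum_{k}B_{\gamma ,k}}{N}+o_{P}\left( (NT)^{-1/2}\right) .
\end{equation*}

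\textbf{Step 3 (conclusion).} Multiply the score expansion by $\hat{H}^{-1}=\Omega _{X}^{-1}+o_{P}(1)$. Scaled by $\sqrt{NT}$, the leading term is $\hat{H}^{-1}(NT)^{-1/2}\sum\eta \varepsilon \rightarrow ^{d}\mathcal{N}(0,\Omega _{X}^{-1}\Sigma \Omega _{X}^{-1})$ by Slutsky's lemma and (\ref{eq: rate cond 3}).

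For the bias terms we need $(\hat{H}^{-1}-\Omega _{X}^{-1})\sqrt{NT}(B/T+B/N)=o_{P}(1)$. Under the natural regime $N/T\rightarrow \kappa \in (0,\infty )$ we have $\sqrt{NT}/T$ and $\sqrt{NT}/N$ bounded, so this product is $o_{P}(1)\cdot O(1)$. Otherwise, I would obtain a rate for $\hat{H}-\Omega _{X}$ from the same Cauchy--Schwarz bounds, or simply state the result with $\hat{H}^{-1}$ in place of $\Omega _{X}^{-1}$ in the bias.

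Rearranging then yields $\sqrt{NT}(\hat{\beta}_{NO}-\beta _{0}-B_{\alpha }/T-B_{\gamma }/N)\rightarrow ^{d}\mathcal{N}(0,\Omega _{X}^{-1}\Sigma \Omega _{X}^{-1})$.
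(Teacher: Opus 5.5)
Your overall route (the exact linear identity for $\hat\beta_{NO}-\beta_0$, Cauchy--Schwarz for $\hat H$, then Slutsky) is the paper's, and Steps 1 and 3 are fine. Step 2, however, has a gap that cannot be closed: the regrouping of the $\Delta_X'\beta_0$ terms into the three displayed sums does not exist. Your own algebra gives exactly
\[
m(Z_{it},\beta_0,\hat\Gamma_{Z,it})=(\eta_{it}-\Delta_{X,it})(\varepsilon_{it}-D_{it}),\qquad D_{it}:=\Delta_{Y,it}-\Delta_{X,it}'\beta_0 .
\]
Compared with the sums in (\ref{eq: rate cond 0})--(\ref{eq: rate cond 2}), the score therefore contains two extra pieces, $\frac{1}{NT}\sum_{it}\eta_{it}\Delta_{X,it}'\beta_0$ and $-\frac{1}{NT}\sum_{it}\Delta_{X,it}\Delta_{X,it}'\beta_0$. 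The hypotheses give nothing better than $o_P(1)$ for these. In addition, $\eta\Delta_Y$ and $\varepsilon\Delta_X$ enter with a minus sign. That sign cannot be ``absorbed'', because $B_{\cdot,2}$ and $B_{\cdot,3}$ are pinned down by the hypotheses. Writing $\hat\Gamma_Y=\hat\Gamma_X'\beta_0+\hat\Gamma$ does not help either: any $\hat\Gamma_Y$ can be written that way, and the hypotheses still refer to $\Delta_Y$.

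In fact the statement as written is false, so no bookkeeping can rescue it. Take scalar $X$, $\beta_0\neq 0$, i.i.d.\ errors, $\hat\Gamma_Y=\Gamma_Y$, and $\hat\Gamma_{X,it}=\Gamma_{X,it}+\delta$ with $\delta=(NT)^{-1/8}$. Then $\frac{1}{NT}\sum_{it}\Delta_{X,it}^2=\delta^2\to 0$. The left-hand sides of (\ref{eq: rate cond 0}) and (\ref{eq: rate cond 1}) vanish identically, and that of (\ref{eq: rate cond 2}) is $\delta\bar\varepsilon=o_P((NT)^{-1/2})$. So every hypothesis holds with all $B$'s equal to zero. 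Yet
\[
\hat\beta_{NO}-\beta_0=\frac{\sum_{it}(\eta_{it}-\delta)(\varepsilon_{it}+\delta\beta_0)}{\sum_{it}(\eta_{it}-\delta)^2},
\]
whose numerator divided by $NT$ contains $-\delta^2\beta_0$. Hence $\sqrt{NT}(\hat\beta_{NO}-\beta_0)$ diverges at rate $(NT)^{1/4}$.

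The paper's displayed score expansion preceding the theorem drops the same two terms and has the same sign slips. This is consistent with the later linear-smoother subsection's remark that $\hat\Gamma_Y-\Gamma_Y$ gets replaced by $\hat\Gamma-\Gamma$. A correct version states (\ref{eq: rate cond 0}) and (\ref{eq: rate cond 1}) with $D_{it}$ in place of $\hat\Gamma_{Y,it}-\Gamma_{Y,it}$. For the linear smoothers of Section~\ref{sect:samplesplitLinear}, $D_{it}$ equals $\hat\Gamma_{it}-\Gamma_{it}$. It then sets $B_\alpha=\Omega_X^{-1}(B_{\alpha,1}-B_{\alpha,2}-B_{\alpha,3})$, and likewise for $B_\gamma$. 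With that change your Steps 1--3 go through as written. Your caveat also stands: replacing $\hat H^{-1}$ by $\Omega_X^{-1}$ in the bias needs $N/T$ bounded away from $0$ and $\infty$, or a rate for $\hat H-\Omega_X$, and the statement glosses over this too.
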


\begin{remark}
Sufficient conditions for the LLN and the CLT stated in eq. (\ref{eq: rate
cond 3}) to hold can be found in, e.g., \cite{Bai2009} and \cite%
{hahn2011bias}. These sufficient conditions allow for potential
cross-sectional and time series correlation/dependence and
heteroskedasticity.
\end{remark}

The result is quite general and allows for a broad class of estimators ${%
\hat{\Gamma}}_{Z,it}$. The main restriction is the implicit assumption that
the leading bias components of this estimator are of order $O\left(
1/T\right) +O\left( 1/N\right) $. The $o_{P}(1/\sqrt{NT})$ terms in
equations (\ref{eq: rate cond 0})--(\ref{eq: rate cond 3}) capture both the
full error from the nonparametric estimation of $g$ together with the
variance components of the fixed effects estimators. As we shall see, this
rate result will hold under weak restrictions on $g$, $\alpha _{i}$ and $%
\gamma _{t}$ and their estimators. The asymptotic biases $B_{\alpha }/T$ and 
$B_{\gamma }/N$ will be present due to first--order biases of the estimators
of $\alpha _{i}$ and $\gamma _{t}$ together with potentially covariation
between their estimation errors and $(\eta _{it},\varepsilon _{it})$, c.f.
Theorem 4.1 of \cite{FernandezValWeidner2016}.

Below, we first give more primitive conditions under which (\ref{eq: rate
cond 0})--(\ref{eq: rate cond 3}) will hold. Next, we show how sample
splitting combined with bias adjustment can be employed to remove the
incidental parameter biases.

\subsection{Sources of incidental parameter biases}

We here focus on a particular class of first-step estimators of $\Gamma _{Z}$
that combine fixed-effects estimation with nonparametric regression
techniques and for these provide a partial characterisation of the
incidental parameter biases introduced in (\ref{eq: rate cond 0})--(\ref{eq:
rate cond 2}). The class of estimators and accompanying theory includes as a
special case the particular estimators that we develop in Section \ref%
{sect:FM}. The estimators of \cite{beyhum2025inference} also fit into our
general framework. However, parts of our analysis requires the estimator of $%
g_{Z}$ to be sufficiently regular (smooth) while their procedure involves
discretisation and so these parts do not apply to their estimator.

We first note that, without further normalisations and restrictions on the
model, we cannot separately identify $g_{Z}$, $\alpha _{i}$ and $\gamma _{t}$%
. We will here assume that suitably identifying restrictions/normalisations
have been developed so that the following conditions are satisfied:

First, we will require that there exists a normalised version of $g_{Z}$,
denoted $g_{0,Z}$, and associated normalised version of the fixed effects $%
\left( \alpha _{i},\gamma _{t}\right) $, denoted $\left( \lambda
_{i},f_{t}\right) $, so that%
\begin{equation}
\text{(i) \ }\Gamma _{Z,it}=g_{0,Z}\left( \lambda _{i},f_{t}\right) ,\text{
\ \ (ii) }\left( \lambda _{i},f_{t}\right) \text{ are identified in the
population}  \label{eq: ID restrictions}
\end{equation}%
Second, we will require that fixed-effects estimators $(\hat{\lambda}_{i},%
\hat{f}_{t})$ of $\left( \lambda _{i},f_{t}\right) $ are available to us.
The precise forms of the normalised versions and their estimators depend on
the assumptions the researcher is willing to impose on the model. One
example of such is provided in Section \ref{sect:FM}; see \cite%
{FreemanKristensen2026} for further details. We will focus on the ideal
setting where the estimators are regular in the sense that they satisfy%
\begin{equation}
\hat{\lambda}_{i}=\lambda _{i}+\frac{1}{T}\sum_{t=1}^{T}\psi _{\lambda
}\left( Z_{it}\right) +\frac{1}{T}b_{\lambda ,i}+o_{P}\left( 1/T\right) ,%
\text{ \ \ }\hat{f}_{t}=f_{t}+\frac{1}{N}\sum_{i=1}^{N}\psi _{f}\left(
Z_{it}\right) +\frac{1}{N}b_{f,t}+o_{P}\left( 1/N\right) ,
\label{eq: lambda and f expan}
\end{equation}%
where $\mathbb{E}\left[ \psi _{\lambda }\left( Z_{it}\right) \right] =%
\mathbb{E}\left[ \psi _{f}\left( Z_{it}\right) \right] =0$ and $b_{\lambda
,i}$ and $b_{f,t}$ capture the leading bias terms of the two estimators; see 
\cite{FernandezValWeidner2016} for sufficient conditions for above to hold
in a parametric setting. The fixed-effects estimators of \cite%
{beyhum2025inference} also satisfy above with $b_{\lambda ,i}=b_{f,t}=0$.-
Under above assumption, we will now provide a characterisation of the biases
appearing in (\ref{eq: rate cond 0})--(\ref{eq: rate cond 2}). We expect all
qualitative conclusions reached in the following to also apply to irregular
estimators but the precise forms of the biases contained in the first and
second-order terms in (\ref{eq: rate cond 0})--(\ref{eq: rate cond 2}) will
in this case become more complicated, including their rates which no longer
will be of the parametric kind.

Given this set-up, we will then consider estimators of $\Gamma _{Z,it}$ that
runs a nonparametric regression of $Z_{it}$ onto the estimated fixed effects 
$(\hat{\lambda}_{i},\hat{f}_{t})$ to obtain $\hat{g}_{0,Z}$. This estimator
could take many forms; one example would be kernel regression, in which case%
\begin{equation}
\hat{\Gamma}_{Z,it}=\hat{g}_{0,Z}(\hat{\lambda}_{i},\hat{f}_{t}),\text{ \ \ }%
\hat{g}_{0,Z}\left( \lambda ,f\right) =\frac{\sum_{it}Z_{it}K_{1,h_{\lambda
}}(\hat{\lambda}_{i}-\lambda )K_{2,h_{f}}(\hat{f}_{t}-f)}{%
\sum_{it}K_{1,h_{\lambda }}(\hat{\lambda}_{i}-\lambda )K_{2,h_{f}}(\hat{f}%
_{t}-f)},  \label{eq: g_Z kernel reg}
\end{equation}%
where $K_{j}$, $j=1,2$, are kernel functions and $h_{\lambda },h_{f}>0$ are
bandwidths, but the subsequent theory allows for other nonparametric
regresion techniques. It is here important to note that the estimated fixed
effects enter $\hat{g}_{0,Z}(\hat{\lambda}_{i},\hat{f}_{t})$ through two
channels: First, as the values $(\hat{\lambda}_{i},\hat{f}_{t})$ of the
argument $\left( \lambda ,f\right) $ in $\hat{g}_{0,Z}\left( \lambda
,f\right) $; second, as the regressors used to compute $\hat{g}_{0,Z}\left(
\lambda ,f\right) $. Each channel could potentially generate incidental
parameter biases. The first channel is well-known, c.f. \cite%
{FernandezValWeidner2016}, while the second one is novel and has only been
analysed in the context of nonparametric estimation of the density of
one--way fixed effects; see, e.g., \cite{Okui2019} and \cite{Barras2021}.

We first analyse the first--order terms appearing in eqs. (\ref{eq: rate
cond 1})--(\ref{eq: rate cond 2}). For a given function $g_{Z}$, define 
\begin{equation}
\hat{\nu}\left( g_{Z}\right) :=\frac{1}{NT}\sum_{it}\left[ 
\begin{array}{c}
\varepsilon _{it}g_{X}(\hat{\lambda}_{i},\hat{f}_{t}) \\ 
\eta _{it}g_{Y}(\hat{\lambda}_{i},\hat{f}_{t})%
\end{array}%
\right] ,\text{ \ \ }\hat{\nu}^{\ast }\left( g_{Z}\right) :=\frac{1}{NT}%
\sum_{it}\left[ 
\begin{array}{c}
\varepsilon _{it}g_{X}(\lambda _{i},f_{t}) \\ 
\eta _{it}g_{Y}(\lambda _{i},f_{t})%
\end{array}%
\right] .  \label{eq: emp proc def}
\end{equation}%
Let $\hat{g}_{0,Z}$ denote the feasible nonparametric estimator of $g_{0,Z}$
that takes as input $\{\hat{\lambda}_{i}\}_{i=1}^{N}$ and $\{\hat{f}%
_{t}\}_{t=1}^{T}$, such as the kernel regression estimator in (\ref{eq: g_Z
kernel reg}), while $\hat{g}_{0,Z}^{\ast }$ is the infeasible version that
takes as input $\{\lambda _{i}\}_{i=1}^{N}$ and $\{f_{t}\}_{t=1}^{T}$. We
can then decompose the vector of first-order error terms in eqs. (\ref{eq:
rate cond 1})--(\ref{eq: rate cond 2}) as%
\begin{equation}
\frac{1}{NT}\sum_{it}\left[ 
\begin{array}{c}
\varepsilon _{it}(\hat{\Gamma}_{X,it}-\Gamma _{X,it}) \\ 
\eta _{it}(\hat{\Gamma}_{Y,it}-\Gamma _{Y,it})%
\end{array}%
\right] =\left\{ \hat{\nu}(\hat{g}_{0,Z})-\hat{\nu}^{\ast }(\hat{g}%
_{0,Z})\right\} +\left\{ \hat{\nu}^{\ast }(\hat{g}_{0,Z})-\hat{\nu}^{\ast }(%
\hat{g}_{0,Z}^{\ast })\right\} +\left\{ \hat{\nu}^{\ast }\left( \hat{g}%
_{0,Z}^{\ast }\right) -\hat{\nu}^{\ast }\left( g_{0,Z}\right) \right\} .
\label{eq: bias decomp}
\end{equation}%
The first two terms on the right--hand side of above display contain the
estimation errors due to $\{\hat{\lambda}_{i}\}_{i=1}^{N}$ and $\{\hat{f}%
_{t}\}_{t=1}^{T}$, while the third term contains the error of the infeasible
nonparametric estimator $\hat{g}_{0,Z}^{\ast }$; under weak regularity
conditions, the third term will be negiglible:

\begin{lemma}
\label{Lem: Stoch Eq}Suppose that $g_{0,Z},\hat{g}_{0,Z}\in \left( \mathcal{G%
},\left\Vert \cdot \right\Vert _{\mathcal{G}}\right) $ w.p.a.1 and $%
\left\Vert \hat{g}_{0,Z}^{\ast }-g_{0,Z}\right\Vert _{\mathcal{G}%
}=o_{P}\left( 1\right) $; $g_{Z}\mapsto \sqrt{NT}\hat{\nu}^{\ast }(g_{Z})$
defined in (\ref{eq: emp proc def}) is stochastically equicontinuous at $%
g_{0,Z}$ w.r.t. $\left\Vert \cdot \right\Vert _{\mathcal{G}}$. Then $\hat{\nu%
}^{\ast }\left( \hat{g}_{0,Z}\right) -\hat{\nu}^{\ast }\left( g_{0,Z}\right)
=o_{P}\left( 1/\sqrt{NT}\right) $.

Suppose that $\left\{ \varepsilon _{Z,it},\lambda _{i},f_{t}\right\} _{i,t}$
are mutually independent across $i$ and $t$ with $\max_{i,t}\mathbb{E}\left[
\left\Vert \varepsilon _{Z,it}\right\Vert ^{2}\right] <\infty $. Then the
stochastic equicontinuity condition holds if $(\mathcal{G},\left\Vert \cdot
\right\Vert _{\mathcal{G}})$ is chosen as the $L_{2}$-Sobolev space defined
in Eq. (2.14) of \cite{Andrews1994}.
\end{lemma}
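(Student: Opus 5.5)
The first claim follows directly from the definition of stochastic equicontinuity. I will use the formulation of \cite{Andrews1994}: for every $\epsilon,\eta>0$ there is $\delta>0$ such that
\[
\limsup_{N,T\to\infty}P\Big(\sup_{g_Z\in\mathcal{G}:\left\Vert g_Z-g_{0,Z}\right\Vert_{\mathcal{G}}<\delta}\sqrt{NT}\left\Vert\hat{\nu}^{\ast}(g_Z)-\hat{\nu}^{\ast}(g_{0,Z})\right\Vert>\eta\Big)<\epsilon .
\]
Fix $\epsilon,\eta$ and choose the corresponding $\delta$. By assumption the estimator lies in $\mathcal{G}$ w.p.a.1 and is within distance $\delta$ of $g_{0,Z}$ w.p.a.1. (The condition is stated for $\hat g^{\ast}_{0,Z}$, and I will apply the same argument to whichever estimator the statement refers to.) Then $P(\sqrt{NT}\Vert\hat\nu^{\ast}(\hat g)-\hat\nu^{\ast}(g_{0,Z})\Vert>\eta)$ is at most the probability of the supremum event plus $P(\hat g\notin\mathcal{G})+P(\Vert\hat g-g_{0,Z}\Vert_{\mathcal{G}}\ge\delta)$. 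Its limsup is therefore at most $\epsilon$, which gives $o_P(1/\sqrt{NT})$. One point needs care: the estimator may depend on the sample, but stochastic equicontinuity is a uniform statement, so a random argument causes no problem.

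For the second claim, I write $\sqrt{NT}\{\hat\nu^{\ast}(g_Z)-\hat\nu^{\ast}(g_{0,Z})\}$ as $(NT)^{-1/2}\sum_{it}\varepsilon_{Z,it}\,h(\lambda_i,f_t)$, with $h=g_Z-g_{0,Z}$ applied componentwise (the $X$ block paired with $\varepsilon_{it}$, the $Y$ block with $\eta_{it}$). Conditional on $\{\lambda_i,f_t\}$, and under independence with $\mathbb{E}[\varepsilon_{Z,it}\mid\lambda_i,f_t]=0$ (Assumption \ref{ass:ID}), this is a sum of independent mean-zero terms. I will therefore verify the sufficient condition of \cite{Andrews1994} for Sobolev classes. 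He shows that the $L_2$-Sobolev space of his Eq.\ (2.14), with smoothness exceeding half the dimension of $(\lambda,f)$, is a Hilbert space with a reproducing kernel. Every $h$ in it admits an expansion $h=\sum_j c_j\phi_j$ in an orthonormal basis satisfying $\sum_j\Vert\phi_j\Vert_\infty^2<\infty$, and $\Vert h\Vert_\infty\le C\Vert h\Vert_{\mathcal{G}}$.

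Cauchy--Schwarz then gives
\[
\sup_{\Vert h\Vert_{\mathcal{G}}<\delta}\Big\Vert\frac{1}{\sqrt{NT}}\sum_{it}\varepsilon_{Z,it}h(\lambda_i,f_t)\Big\Vert\le\delta\Big(\sum_j\Big\Vert\frac{1}{\sqrt{NT}}\sum_{it}\varepsilon_{Z,it}\phi_j(\lambda_i,f_t)\Big\Vert^2\Big)^{1/2}.
\]
Using independence and mean zero, the expectation of the squared term in parentheses is bounded by $\max_{it}\mathbb{E}\Vert\varepsilon_{Z,it}\Vert^2\sum_j\Vert\phi_j\Vert_\infty^2<\infty$. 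The supremum is therefore $O_P(\delta)$, uniformly in $N,T$, which yields stochastic equicontinuity by Markov's inequality.

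The main obstacle is the independence structure. The terms share $\lambda_i$ across $t$ and $f_t$ across $i$, so the cross terms in the variance must be shown to vanish. I plan to condition on all $\{\lambda_i,f_t\}$ and on the $X$'s, and then use that $\varepsilon_{Z,it}$ is mean zero and mutually independent across $(i,t)$ given these. This kills every cross term with $(i,t)\neq(j,s)$ and leaves only the diagonal terms bounded above.
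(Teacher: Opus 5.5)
\paragraph{Comparison with the paper.} Your first part is the paper's argument. You restrict to the event $\{\Vert\hat g-g_{0,Z}\Vert_{\mathcal G}<\delta\}$, whose probability tends to one, and bound by the supremum in the definition of stochastic equicontinuity. You are also right that the hypothesis concerns $\hat g^{\ast}_{0,Z}$ while the conclusion is written for $\hat g_{0,Z}$; the text preceding the lemma shows that $\hat g^{\ast}_{0,Z}$ is meant.

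For the second part the paper only cites the Proposition on p.~50 of \cite{Andrews1994}. You instead reprove it with the Hilbert-space argument: expand $h$ in an orthonormal basis of the reproducing-kernel Sobolev space, apply Cauchy--Schwarz to the coefficients, and bound a second moment. This route is more explicit in two ways. It makes visible the smoothness requirement (order above half the dimension of $(\lambda,f)$), which the lemma leaves to the citation. It also confronts the fact that $(\varepsilon_{Z,it},\lambda_i,f_t)$ is not an independent array, since $\lambda_i$ and $f_t$ are shared across observations; the citation leaves this implicit. The citation's advantage is brevity.

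\paragraph{Two repairs are needed.}

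\emph{Do not condition on the $X$'s.} Since $\eta_{it}=X_{it}-g_{0,X}(\lambda_i,f_t)$, the $\eta$ block is non-random given $\{X_{js}\}$ and $\{\lambda_j,f_s\}$. Then $\mathbb{E}[\eta_{it}\mid\cdot\,]=\eta_{it}\neq 0$, and the cross terms of the $\eta_{it}g_Y$ component do not vanish. Assumption~\ref{ass:ID}, read literally, also conditions $\eta_{it}$ on $X$, but that would force $\eta_{it}=0$; only its $\varepsilon_{it}$ part can be used this way. Condition on $\mathcal G=\sigma(\lambda_i,f_t:i,t)$ only. The lemma's mutual-independence hypothesis together with $\mathbb{E}[\varepsilon_{Z,it}\mid\alpha_i,\gamma_t]=0$ then gives:
\begin{itemize}
\item $\mathbb{E}[\varepsilon_{Z,it}\mid\mathcal G]=0$;
\item $\mathbb{E}[\varepsilon_{Z,it}\varepsilon_{Z,js}^{\prime}\mid\mathcal G]=0$ for $(i,t)\neq(j,s)$;
\item $\mathbb{E}[\Vert\varepsilon_{Z,it}\Vert^2\mid\mathcal G]=\mathbb{E}\Vert\varepsilon_{Z,it}\Vert^2$.
\end{itemize}

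\emph{Replace the basis claim with a weaker, standard fact.} You do not need a basis with $\sum_j\Vert\phi_j\Vert_\infty^2<\infty$, which you attribute to Andrews without support. Write $x_{it}:=(\lambda_i,f_t)$ and $G_j:=(NT)^{-1/2}\sum_{it}\varepsilon_{Z,it}\phi_j(x_{it})$. By Tonelli and the vanishing cross terms,
\[
\mathbb{E}\Big[\sum_j\Vert G_j\Vert^2\,\Big|\,\mathcal G\Big]=(NT)^{-1}\sum_{it}\mathbb{E}\Vert\varepsilon_{Z,it}\Vert^2\,K(x_{it},x_{it}).
\]
Here $\sum_j\phi_j(x)^2=K(x,x)$ holds for any orthonormal basis of a reproducing-kernel Hilbert space. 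Moreover $\sup_x K(x,x)=\sup_{\Vert h\Vert_{\mathcal G}\le 1}h(x)^2<\infty$ by the Sobolev embedding $\Vert h\Vert_\infty\le C\Vert h\Vert_{\mathcal G}$.

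With these two changes your Markov-inequality conclusion goes through.
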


The Donsker-type condition of \cite{Andrews1994} cited in above lemma to
ensure stochastic equicontinuity is satisfied by many nonparametric
regression estimators, including above kernel regression estimators. On the
other hand, it is unclear whether the k-means clustering algorithm of \cite%
{beyhum2025inference} satisfies a similar condition; in particular, their
estimator is by construction non--smooth and so will not belong to the $%
L_{2} $-Sobolev space of \cite{Andrews1994}. As such, their estimator may
not be covered by above lemma.

What remains is the analysis of the first two terms in eq. (\ref{eq: bias
decomp}). Consider the first component of the vector $\hat{\nu}(\hat{g}%
_{0,Z})-\hat{\nu}^{\ast }(\hat{g}_{0,Z})$, denoted $\hat{\nu}_{Y}\left( \hat{%
g}_{0,X}\right) -\hat{\nu}_{Y}^{\ast }(\hat{g}_{0,X}):=\frac{1}{NT}%
\sum_{it}\varepsilon _{it}\{\hat{g}_{X}(\hat{\lambda}_{i},\hat{f}_{t})-\hat{g%
}_{0,X}(\lambda _{i},f_{t})\}$. A second--order Taylor expansion yields 
\begin{align}
\hat{\nu}_{Y}\left( \hat{g}_{0,X}\right) -\hat{\nu}_{Y}^{\ast }(\hat{g}%
_{0,X})& =\frac{1}{NT}\sum_{it}\varepsilon _{it}\frac{\partial \hat{g}%
_{0,X}(\lambda _{i},f_{t})}{\partial \lambda ^{\prime }}(\hat{\lambda}%
_{i}-\lambda _{i})+\frac{1}{NT}\sum_{it}\varepsilon _{it}\frac{\partial \hat{%
g}_{0,X}(\lambda _{i},f_{t})}{\partial f^{\prime }}(\hat{f}_{t}-f_{t})
\label{eq: bias 1} \\
& +\frac{1}{2NT}\sum_{it}\varepsilon _{it}(\hat{\lambda}_{i}-\lambda
_{i})^{\prime }\frac{\partial ^{2}\hat{g}_{0,X}(\lambda _{i},f_{t})}{%
\partial \lambda \partial \lambda ^{\prime }}(\hat{\lambda}_{i}-\lambda _{i})
\notag \\
& +\frac{1}{2NT}\sum_{it}\varepsilon _{it}(\hat{f}_{t}-f_{t})^{\prime }\frac{%
\partial ^{2}\hat{g}_{0,X}(\lambda _{i},f_{t})}{\partial f\partial f^{\prime
}}(\hat{f}_{t}-f_{t})+R_{N,T},  \notag
\end{align}%
where the remainder term $R_{N,T}$ in great generality will be of higher
order than the other terms in the final expression. A similar expansion
holds for the second component of $\hat{\nu}(\hat{g}_{0,Z})-\hat{\nu}^{\ast
}(\hat{g}_{0,Z})$, $\hat{\nu}_{X}\left( \hat{g}_{0,Y}\right) -\hat{\nu}%
_{X}^{\ast }(\hat{g}_{0,Y}):=\frac{1}{NT}\sum_{it}\eta _{it}\{\hat{g}_{0,Y}(%
\hat{\lambda}_{i},\hat{f}_{t})-\hat{g}_{0,Y}(\lambda _{i},f_{t})\}$. The
leading terms of these expansions will create incidental parameter biases if
the first-order bias and/or variance components of $\hat{\lambda}_{i}$ or $%
\hat{f}_{t}$ in (\ref{eq: lambda and f expan}) correlate with $\varepsilon
_{Z,it}$. The precise expressions of these bias components can be derived
using arguments similar to the ones in \cite{FernandezValWeidner2016}. At
the same time, $\hat{\nu}(\hat{g}_{0,Z})-\hat{\nu}^{\ast }(\hat{g}_{0,Z})$
will generally not contribute to the variance of $\hat{\beta}$ due to the
use of Neyman orthogonality of the moment conditions.

The analysis of the second term in eq. (\ref{eq: bias decomp}) requires us
to take a firmer stand on the nonparametric regression method being
employed. For example, if kernel regression with a second--order kernel is
employed, then we expect 
\begin{equation}
\mathbb{E}\left[ \hat{g}_{0,Z}\left( \lambda ,f\right) -\hat{g}_{0,Z}^{\ast
}\left( \lambda ,f\right) \right] =\frac{1}{T}b_{g,1}\left( \lambda
,f\right) +\frac{1}{N}b_{g,2}\left( \lambda ,f\right) +o\left( 1/T\right)
+o\left( 1/N\right) ,  \label{eq: nonpar reg bias}
\end{equation}%
for some functions $b_{g,1}\left( \lambda ,f\right) $ and $b_{g,2}\left(
\lambda ,f\right) $. The expressions of these can be obtained by combining (%
\ref{eq: lambda and f expan}) with the arguments of \cite{Okui2019} and \cite%
{Barras2021}. Given that $\varepsilon _{it}$ and $\eta _{it}$ do not
correlate with $\lambda _{i}$ and $f_{t}$, these terms will not contribute
to the incidental parameter biases in (\ref{eq: rate cond 1})--(\ref{eq:
rate cond 2}). We expect similar results to hold for other nonparametric
regression estimators. In conclusion, the incidental parameter biases $%
B_{\alpha ,k}/T+B_{\gamma ,k}/N$, $k=2,3$, in Theorem \ref{Th: NO general}
are expected to arrive from $\hat{\nu}(\hat{g}_{0,A})-\hat{\nu}^{\ast }(\hat{%
g}_{0,A})$ if $\hat{\lambda}_{i}$ or $\hat{f}_{t}$ correlate with $%
\varepsilon _{Z,it}$. Moreover, again due to the Neyman orthogonal moment
conditions, the variance component of $\hat{\nu}(\hat{g}_{0,Z})-\hat{\nu}%
^{\ast }(\hat{g}_{0,Z}^{\ast })$ will generally be asymptotically negiglible.

For the second--order term in (\ref{eq: rate cond 0}), we can carry out a
similar decomposition. Assuming that $\hat{g}_{0,Z}^{\ast }$ is sufficiently
regular, $\frac{1}{NT}\sum_{it}\left\Vert \hat{g}_{0,Z}^{\ast }(\lambda
_{i},f_{t})-g_{0,Z}(\lambda _{i},f_{t})))\right\Vert ^{2}=o_{P}(1/\sqrt{NT})$%
; most known nonparametric regression estimators will satisfy this under
standard regularity conditions. Moreover, we expect the second--order
quadratic term will contain an incidental parameter bias terms on the form 
\begin{eqnarray}
\frac{1}{NT}\sum_{it}\frac{\partial g_{0,Y}(\lambda _{i},f_{t})}{\partial
\lambda ^{\prime }}(\hat{\lambda}_{i}-\lambda _{i})(\hat{\lambda}%
_{i}-\lambda _{i})^{\prime }\frac{\partial g_{0,Y}(\lambda _{i},f_{t})}{%
\partial \lambda } &=&\frac{B_{\alpha ,1}}{T}+o_{P}\left( 1/T\right) ,
\label{eq: bias 2} \\
\frac{1}{NT}\sum_{it}\frac{\partial g_{0,Y}(\lambda _{i},f_{t})}{\partial
f^{\prime }}(\hat{f}_{t}-f_{t})(\hat{f}_{t}-f_{t})^{\prime }\frac{\partial
g_{0,Y}(\lambda _{i},f_{t})}{\partial f} &=&\frac{B_{\gamma ,1}}{N}%
+o_{P}\left( 1/N\right) ,  \notag
\end{eqnarray}%
c.f. \cite{FernandezValWeidner2016}. We also expect $\hat{g}_{0,Z}\left(
\lambda ,f\right) -\hat{g}_{0,Z}^{\ast }\left( \lambda ,f\right) $ to
contribute; for example, for kernel--based estimators, we expect $\frac{1}{NT%
}\sum_{it}(\hat{g}_{0,Y}(\lambda _{i},f_{t})-\hat{g}_{0,Y}^{\ast }(\lambda
_{i},f_{t}))(\hat{g}_{0,X}(\lambda _{i},f_{t})-\hat{g}_{0,X}^{\ast }(\lambda
_{i},f_{t}))^{\prime }=O_{P}(1/N)+O_{P}(1/T)$ under weak conditions, c.f. 
\cite{Okui2019} and \cite{Barras2021}.

One could now attempt to carry out a more complete characterisation of the
incidental parameter biases and then use the resulting expressions of the
biases to carry out bias adjustment of $\hat{\beta}_{NO}$ in Theorem \ref%
{Th: NO general}. We will refrain from carrying out such an analysis since
this will require us to restrict ourselves to the ideal scenario of regular
estimators satisfying (\ref{eq: lambda and f expan}), and take a stand on
the precise form of $\hat{g}_{0,Z}$.

\subsection{Two--way nonparametric regression}

As explained in the previous subsection, the incidental parameter biases
will arise due to the use of $\hat{g}_{0,Z}(\hat{\lambda}_{i},\hat{f}_{t})$
in place of $\hat{g}_{0,Z}^{\ast }\left( \lambda _{i},f_{t}\right) $. We
here explain how these biases can be adjusted for by employing a generalised
version of the two-way kernel regression estimator of \cite%
{freeman2023linear,freeman2022multidimensional}.

For a given nonparametric regression technique, let $\hat{g}_{0,Z}(\lambda
,f)$ be the full--sample version that regresses $Z_{js}$ onto $(\hat{\lambda}%
_{j},\hat{f}_{s})$ for $j=1,...,N$ and $s=1,...,T$. Next, for each $%
t=1,...,T $, let $\hat{g}_{0,Z}^{\left( 1\right) }(\lambda ,f_{t})$ be the
estimator obtained by regressing $Z_{jt}=g_{0,Z}(\lambda
_{j},f_{t})+\varepsilon _{Z,it}$ onto $\hat{\lambda}_{j}$ for $j=1,...,N$;
that is, we run $T$ nonparametric regressions, each along the
cross-sectional dimension. Finally, for each $i=1,...,N$, let $\hat{g}%
_{0,Z}^{\left( 2\right) }(\lambda _{i},f)$ be the estimator obtained by
regressing $Z_{is}$ onto $\hat{f}_{s}$ for $s=1,...,T$; that is, we run $N$
nonparametric regressions, each along the time dimension. We then combine
these to obtain the following two--way nonparametric regression estimator, 
\begin{equation}
\hat{\Gamma}_{0,Z,it}^{TW}=\hat{g}_{0,Z}^{\left( 1\right) }(\hat{\lambda}%
_{i},f_{t})+\hat{g}_{0,Z}^{\left( 2\right) }(\lambda _{i},\hat{f}_{t})-\hat{g%
}_{0,Z}(\hat{\lambda}_{i},\hat{f}_{t}).  \label{eq: two-way regression}
\end{equation}%
When the kernel regression estimator in (\ref{eq: g_Z kernel reg}) is
employed, it takes the form 
\begin{equation}
\hat{\Gamma}_{0,Z,it}^{TW}=\frac{\sum_{j}Z_{jt}K_{1,h_{\lambda }}(\hat{%
\lambda}_{j}-\hat{\lambda}_{i})}{\sum_{j}K_{1,h_{\lambda }}(\hat{\lambda}%
_{j}-\hat{\lambda}_{i})}+\frac{\sum_{t}Z_{is}K_{2,h_{f}}(\hat{f}_{s}-\hat{f}%
_{t})}{\sum_{t}K_{2,h_{f}}(\hat{f}_{s}-\hat{f}_{t})}-\frac{%
\sum_{js}Z_{js}K_{1,h_{\lambda }}(\hat{\lambda}_{j}-\hat{\lambda}%
_{i})K_{2,h_{f}}(\hat{f}_{s}-\hat{f}_{t})}{\sum_{js}K_{1,h_{\lambda }}(\hat{%
\lambda}_{j}-\hat{\lambda}_{i})K_{2,h_{f}}(\hat{f}_{s}-\hat{f}_{t})}.
\label{eq: two-way kernel regression}
\end{equation}

The two--way estimator has the following two attractive features: First, if (%
\ref{eq: nonpar reg bias}) holds, then under great generality,%
\begin{equation}
\mathbb{E}\left[ \hat{g}_{0,Z}^{\left( 1\right) }(\lambda ,f_{t})-\hat{g}%
_{0,Z}^{\ast }\left( \lambda ,f_{t}\right) \right] =\frac{1}{T}b_{\lambda
}\left( \lambda ,f\right) +o\left( 1/T\right) ,\text{ \ \ }\mathbb{E}\left[ 
\hat{g}_{0,Z}^{\left( 2\right) }(\lambda ,f_{t})-\hat{g}_{0,Z}^{\ast }\left(
\lambda ,f_{t}\right) \right] =\frac{1}{N}b_{f}\left( \lambda ,f\right)
+o\left( 1/N\right)
\end{equation}%
Thus, $\mathbb{E}[\hat{\Gamma}_{0,Z,it}^{TW}|\hat{\lambda}_{i},\hat{f}%
_{t}]=o\left( 1/T\right) $ and so the leading biases due to $\{\hat{\lambda}%
_{j},\hat{f}_{s}\}_{js}$ being used in the nonparametric regression has been
removed. One can think this of this as a type of Jackknifing. Second,
assuming the first order derivatives of $\hat{g}_{0,Z}$ and $\hat{\lambda}%
_{i},\hat{f}_{t}$ are consistent,%
\begin{eqnarray*}
\frac{\partial \hat{\Gamma}_{0,Z,it}^{TW}}{\partial \hat{\lambda}_{i}} &=&%
\frac{\partial \hat{g}_{0,Z}^{\left( 1\right) }(\hat{\lambda}_{i},f_{t})}{%
\partial \hat{\lambda}_{i}}-\frac{\partial \hat{g}_{0,Z}(\hat{\lambda}_{i},%
\hat{f}_{t})}{\partial \hat{\lambda}_{i}}=\frac{\partial g_{0,Z}(\lambda
_{i},f_{t})}{\partial \lambda _{i}}-\frac{\partial g_{0,Z}(\lambda
_{i},f_{t})}{\partial \lambda _{i}}+o_{P}\left( 1\right) =o_{P}\left(
1\right) , \\
\frac{\partial \hat{\Gamma}_{0,Z,it}^{TW}}{\partial \hat{f}_{t}} &=&\frac{%
\partial \hat{g}_{0,Z}^{\left( 2\right) }(\lambda _{i},\hat{f}_{t})}{%
\partial \hat{f}_{t}}-\frac{\partial \hat{g}_{0,Z}(\hat{\lambda}_{i},\hat{f}%
_{t})}{\partial \hat{f}_{t}}=\frac{\partial g_{0,Z}(\lambda _{i},f_{t})}{%
\partial f_{t}}-\frac{\partial g_{0,Z}(\lambda _{i},f_{t})}{\partial f_{t}}%
+o_{P}\left( 1\right) =o_{P}\left( 1\right) ,
\end{eqnarray*}%
and similarly for the second-order derivatives. As such, $\hat{\Gamma}%
_{0,Z,it}^{TW}$ is Neyman orthogonal to $(\hat{\lambda}_{i},\hat{f}_{t})$.
As a consequence, if we carry out the expansions leading to eqs. (\ref{eq:
bias 1}) and (\ref{eq: bias 2}) with $\hat{g}_{0,Z}(\hat{\lambda}_{i},\hat{f}%
_{t})$ replaced by $\hat{\Gamma}_{0,Z,it}^{TW}$, all the partial derivatives
in the final expressions are now $o_{P}\left( 1\right) $ and so the leading
biases due to $(\hat{\lambda}_{i},\hat{f}_{t})$ being used as arguments in $%
\hat{g}_{0,Z}(\lambda ,f)$ also become asymptotically negiglible. All
together, we expect $\hat{\beta}^{NO}$ that takes $\hat{\Gamma}%
_{0,Z,it}^{TW} $ as input not to suffer from any incidental parameter biases.

The two--way estimator could also be employed in other settings. For
example, \cite{deaner2025inferring} feeds the following sample
pseudo--metrics of \cite{zhang2017estimating},%
\begin{equation}
{\hat{d}}_{Y,ij}^{(1)}=\frac{1}{T}\max_{k\notin \{i,j\}}\left\vert
\left\langle Y_{k,1:T},Y_{i,1:T}-Y_{i,1:T}\right\rangle \right\vert ,\text{
\ \ }{\hat{d}}_{Y,st}^{(2)}=\frac{1}{N}\max_{u\notin \{s,t\}}\left\vert
\left\langle Y_{1:N,u},Y_{1:N,s}-Y_{1:N,t}\right\rangle \right\vert 
\label{eqn:ZhangDistance}
\end{equation}%
into a kernel regression estimator. Their kernel regression estimator could
then be combined with above two--way estimator to remove parts of the
estimation error due to the "generated" kernel regressors ${\hat{d}}%
_{Y,ij}^{(1)}$ and ${\hat{d}}_{Y,st}^{(2)}$. Simulation evidence strongly
suggests that this indeed leads to substantial improvements in the resulting
regression estimator. This metric is, however, sensitive to the dimension of 
$\alpha _{i}$ and $\gamma _{t}$.\footnote{%
See the results of our simulation study in Section \ref{sect:AppSims}.}

\subsection{Debiasing by sample splitting}

\label{sect:samplesplit}

In order to formalise the arguments in the past two subsections and obtain
precise expressions of the incidental parameter biases, we would need to
impose restrictions on the precise form of $(\hat{\lambda}_{i},\hat{f}_{t})$
and $\hat{g}_{0,Z}$. We here develop sample--splitting procedures that
avoids us having to do so: First, sample splitting removes the leading
biases caused by the first-order terms under very weak conditions on the
chosen nonparametric regression method. Second, sample--splitting allows us
to show that the two--way regression method removes the leading biases of
the second--order term under weak conditions on $(\hat{\lambda}_{i},\hat{f}%
_{t})$ and $\hat{g}_{0,Z}$.

We develop two sample--splitting procedures. The first is standard in the
literature and works for any, possibly non--linear and possibly
non--regression--based estimators of $\Gamma _{Z,it}$; this version is able
to remove the incidental parameter biases incurred from the two first-order
terms in (\ref{eq: rate cond 1})--(\ref{eq: rate cond 2}), but the biases
arising from the second--order term in (\ref{eq: rate cond 0}) may still be
present. The second version appears to be new to the panel data literature
and applies to linear estimators of $g_{0,Z}$; this version will be able to
remove the biases from both the linear and second-order terms when combined
with the two--way estimator.

\subsubsection{Standard sample splitting}

We follow \cite{freeman2023linear}, amongst others, and here consider the
following sample splitting procedure: Define the following sample splits in
terms of the indices of the data points,%
\begin{equation}
\begin{split}
\mathcal{I}_{1,1}& =\left\{ i=1,....,N/2,t=1,....,T/2\right\} , \\
\mathcal{I}_{1,2}& =\left\{ i=1,....,N/2,t=T/2+1,....,T\right\} , \\
\mathcal{I}_{2,1}& =\left\{ i=N/2+1,....,N,t=1,....,T/2\right\} , \\
\mathcal{I}_{2,2}& =\left\{ i=N/2+1,....,N,t=T/2+1,....,T\right\} ,
\end{split}
\label{eqn:sampleSplit}
\end{equation}%
and $\mathcal{I=}\left\{ i=1,....,N,t=1,....,T\right\} $. We then compute
our first--step estimators so it does not use data from $\mathcal{I}%
_{k_{1},k_{2}}$,%
\begin{equation}
\hat{\Gamma}_{Z,it}^{\left( k_{1},k_{2}\right) }\in \mathcal{F}%
_{k_{1},k_{2}}:=\mathcal{F}\left\{ Z_{js}:\left( j,s\right) \in \mathcal{I}%
\backslash \mathcal{I}_{k_{1},k_{2}}\right\} ,\text{ \ }\left( i,t\right)
\in \mathcal{I}_{k_{1},k_{2}},  \label{eq: Gamma-hat sample splitting}
\end{equation}%
for $k_{1},k_{2}=1,2$, and use these to obtain%
\begin{eqnarray}
\hat{\beta}_{NO}^{SS} &=&\left[ \sum_{k_{1},k_{2}=1}^{2}\sum_{\left(
i,t\right) \in \mathcal{I}_{k_{1},k_{2}}}(X_{it}-\hat{\Gamma}_{X,it}^{\left(
k_{1},k_{2}\right) })(X_{it}-\hat{\Gamma}_{X,it}^{\left( k_{1},k_{2}\right)
})^{\prime }\right] ^{-1}  \notag \\
&&\times \sum_{k_{1},k_{2}=1}^{2}\sum_{\left( i,t\right) \in \mathcal{I}%
_{k_{1},k_{2}}}(X_{it}-\hat{\Gamma}_{X,it}^{\left( k_{1},k_{2}\right)
})(Y_{it}-\hat{\Gamma}_{Y,it}^{\left( k_{1},k_{2}\right) }).
\label{eq: beta-hat sample splitting}
\end{eqnarray}%
We can still apply the same arguments that lead to Theorem \ref{Th: NO
general} to $\hat{\beta}_{NO}^{SS}$ and so the rate conditions stated in (%
\ref{eq: rate cond 0})--(\ref{eq: rate cond 3}) still hold, except that in
the left-hand side expressions $\sum_{it}$ and $\hat{\Gamma}_{Z,it}$ are now
replaced by $\sum_{\left( i,t\right) \in \mathcal{I}_{k_{1},k_{2}}}$ and $%
\hat{\Gamma}_{Z,it}^{\left( k_{1},k_{2}\right) }$, respectively. Eqs. (\ref%
{eq: rate cond 1})--(\ref{eq: rate cond 2}) will hold under the following
weak restrictions on the variances of the two error terms, where $%
\varepsilon ^{\left( k_{1},k_{2}\right) }=\left\{ \varepsilon _{it}:\left(
i,t\right) \in \mathcal{I}_{k_{1},k_{2}}\right\} $ and $\eta ^{\left(
k_{1},k_{2}\right) }=\left\{ \eta _{it}:\left( i,t\right) \in \mathcal{I}%
_{k_{1},k_{2}}\right\} $:

\begin{theorem}
\label{thm:sampleSplit} Suppose Assumption \ref{ass:ID} and eqs. (\ref{eq:
rate cond 0}) and (\ref{eq: rate cond 3}) hold. Suppose further, for $%
k_{1},k_{2}=1,2$, 
\begin{eqnarray}
\frac{1}{NT}\sum_{\left( i,t\right) \in \mathcal{I}_{k_{1},k_{2}}}\left\vert 
\mathbb{E}\left[ \varepsilon _{it}|\mathcal{F}^{\left( k_{1},k_{2}\right) },%
\mathcal{G}\right] \right\vert &=&O_{P}\left( \frac{1}{NT}\right) ,
\label{eq: sample split cond} \\
\frac{1}{NT}\sum_{\left( i,t\right) \in \mathcal{I}_{k_{1},k_{2}}}\left\Vert 
\mathbb{E}\left[ \eta _{it}|\mathcal{F}^{\left( k_{1},k_{2}\right) },%
\mathcal{G}\right] \right\Vert &=&O_{P}\left( \frac{1}{NT}\right) ,  \notag
\\
\left\Vert \mathbb{E}\left[ vec\left( \varepsilon ^{\left(
k_{1},k_{2}\right) }\right) vec\left( \varepsilon ^{\left(
k_{1},k_{2}\right) }\right) ^{\prime }|\mathcal{F}^{\left(
k_{1},k_{2}\right) },\mathcal{G}\right] \right\Vert _{op} &=&O_{P}\left( 
\frac{1}{NT}\right) ,  \notag \\
\left\Vert \mathbb{E}\left[ vec\left( \eta ^{\left( k_{1},k_{2}\right)
}\right) vec\left( \eta ^{\left( k_{1},k_{2}\right) }\right) ^{\prime }|%
\mathcal{F}^{\left( k_{1},k_{2}\right) },\mathcal{G}\right] \right\Vert
_{op} &=&O_{P}\left( \frac{1}{NT}\right)  \notag
\end{eqnarray}%
where $\mathcal{G}=\mathcal{F}\left( \lambda
_{i},f_{t}:i=1,...,N,t=1,...,T\right) $, and that%
\begin{equation}
\frac{1}{NT}\sum_{\left( i,t\right) \in \mathcal{I}_{k_{1},k_{2}}}(\hat{%
\Gamma}_{Y,it}^{\left( k_{1},k_{2}\right) }-\Gamma _{Y,it})^{2}=o_{P}\left(
1\right) ,\text{ \ \ }\frac{1}{NT}\sum_{\left( i,t\right) \in \mathcal{I}%
_{k_{1},k_{2}}}||\hat{\Gamma}_{X,it}^{\left( k_{1},k_{2}\right) }-\Gamma
_{X,it}||^{2}=o_{P}\left( 1\right) .  \label{eq: L_2 convergence}
\end{equation}%
Then, $\sqrt{NT}(\hat{\beta}_{NO}^{SS}-\beta _{0}-B_{\alpha ,1}/T-B_{\gamma
,1}/N)\rightarrow ^{d}\mathcal{N}\left( 0,\Omega _{X}^{-1}\Sigma \Omega
_{X}^{-1}\right) $.
\end{theorem}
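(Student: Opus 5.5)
The approach is to reduce Theorem \ref{thm:sampleSplit} to Theorem \ref{Th: NO general}, applied to $\hat{\beta}_{NO}^{SS}$ block by block. It then suffices to show that, under sample splitting, the two first-order terms (\ref{eq: rate cond 1})--(\ref{eq: rate cond 2}) hold with $B_{\alpha,k}=B_{\gamma,k}=0$ for $k=2,3$. That is, for each block $(k_1,k_2)$ we need
\begin{equation*}
\frac{1}{NT}\sum_{(i,t)\in\mathcal{I}_{k_1,k_2}}\varepsilon_{it}(\hat{\Gamma}^{(k_1,k_2)}_{X,it}-\Gamma_{X,it})=o_P(1/\sqrt{NT}),\qquad \frac{1}{NT}\sum_{(i,t)\in\mathcal{I}_{k_1,k_2}}\eta_{it}(\hat{\Gamma}^{(k_1,k_2)}_{Y,it}-\Gamma_{Y,it})=o_P(1/\sqrt{NT}).
\end{equation*}
The remaining ingredients carry over directly from the proof of Theorem \ref{Th: NO general}: the mean-value expansion of the block-summed moment, and the Hessian argument using (\ref{eq: L_2 convergence}) with Cauchy--Schwarz to get the denominator equal to $\frac{1}{NT}\sum_{it}\eta_{it}\eta_{it}'+o_P(1)\to^P\Omega_X$. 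The second-order term (\ref{eq: rate cond 0}) is assumed and supplies $B_{\alpha,1}/T+B_{\gamma,1}/N$, and (\ref{eq: rate cond 3}) supplies the CLT. Since there are only four blocks, summing the block expansions reproduces the full-sample ones.

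For the key step, fix a block and write $\Delta_{it}:=\hat{\Gamma}^{(k_1,k_2)}_{X,it}-\Gamma_{X,it}$. This is measurable with respect to $\mathcal{F}^{(k_1,k_2)}\vee\mathcal{G}$, because $\Gamma_{X,it}=g_{0,X}(\lambda_i,f_t)$ is $\mathcal{G}$-measurable and the estimator uses only data outside the block. Conditioning on $\mathcal{H}:=\mathcal{F}^{(k_1,k_2)}\vee\mathcal{G}$, I decompose $S:=\frac{1}{NT}\sum_{\mathcal{I}_{k_1,k_2}}\varepsilon_{it}\Delta_{it}$ into a conditional mean and a centred part:
\begin{equation*}
S=\frac{1}{NT}\sum \mathbb{E}[\varepsilon_{it}|\mathcal{H}]\Delta_{it}+\frac{1}{NT}\sum(\varepsilon_{it}-\mathbb{E}[\varepsilon_{it}|\mathcal{H}])\Delta_{it}.
\end{equation*}
For the first term, I bound it by $\max_{it}\|\Delta_{it}\|$ times the first condition in (\ref{eq: sample split cond}). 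Alternatively, I use Cauchy--Schwarz together with (\ref{eq: L_2 convergence}), reading the $O_P(1/NT)$ condition as making the conditional means negligible at the required rate. Either way the result is $o_P(1/\sqrt{NT})$.

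For the centred term, I compute its conditional second moment. Writing $\tilde\varepsilon$ for the centred vector and $\Delta$ for the stacked $\Delta_{it}$, it equals $(NT)^{-2}\Delta'\mathbb{E}[\tilde\varepsilon\tilde\varepsilon'|\mathcal{H}]\Delta\le (NT)^{-2}\|\mathbb{E}[vec(\varepsilon)vec(\varepsilon)'|\mathcal{H}]\|_{op}\|\Delta\|^2$. Using the operator-norm condition and $\|\Delta\|^2=NT\cdot o_P(1)$ from (\ref{eq: L_2 convergence}), this second moment is $o_P(1/NT)$. Conditional Chebyshev (or Markov applied conditionally, then dominated convergence) then gives $o_P(1/\sqrt{NT})$. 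The identical argument with $\eta_{it}$ and $\hat{\Gamma}_{Y}^{(k_1,k_2)}-\Gamma_Y$ handles the second term, using the $\eta$ conditions.

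The main obstacle is bookkeeping in the conditioning step rather than any new idea. I must make sure $\hat{\Gamma}^{(k_1,k_2)}$ is genuinely $\mathcal{H}$-measurable, and that the scaling of the conditions in (\ref{eq: sample split cond}) indeed delivers $o_P(1/\sqrt{NT})$ for the conditional-mean term. Under strict exogeneity (Assumption \ref{ass:ID}) with cross-block independence, these conditional means vanish and the term is trivial; if the stated rates are too weak I would fall back on that case. Once both first-order terms are shown negligible, plugging into the expansion of Theorem \ref{Th: NO general} with $B_{\alpha,2}=B_{\alpha,3}=B_{\gamma,2}=B_{\gamma,3}=0$ gives the claimed limit, with centring $B_{\alpha,1}/T+B_{\gamma,1}/N$ (premultiplied by $\Omega_X^{-1}$ as in the earlier theorem).
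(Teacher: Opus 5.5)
Your proposal is correct and follows essentially the same route as the paper. Both reduce to Theorem \ref{Th: NO general} with block sums, then show the two first-order terms are $o_P(1/\sqrt{NT})$ by bounding their conditional mean and conditional second moment given the out-of-block data and $\mathcal{G}$ (via the operator-norm condition), and finally pass from conditional to unconditional convergence, which the paper does by citing Lemma 6.1 of \cite{Chernozhukov2018}, exactly your conditional Chebyshev/bounded-convergence step. Your fallback hedge on the conditional-mean term is unnecessary: $\max_{it}\|\Delta_{it}\|\le(\sum_{it}\|\Delta_{it}\|^2)^{1/2}=o_P(\sqrt{NT})$ by (\ref{eq: L_2 convergence}), so that term is $o_P(\sqrt{NT})\cdot O_P(1/NT)=o_P(1/\sqrt{NT})$ under the stated rate.
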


\begin{corollary}
If, conditional on $\mathcal{G}$, $\left\{ Z_{it}\right\} $ are mutually
independent with $\sup_{it}\mathbb{E}\left[ \varepsilon _{it}^{2}|\mathcal{G}%
\right] <\infty $ and $\sup_{it}\mathbb{E}\left[ \eta _{it}^{2}|\mathcal{G}%
\right] <\infty $, then (\ref{eq: sample split cond}) holds.
\end{corollary}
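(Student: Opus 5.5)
The plan is to check the four conditions in (\ref{eq: sample split cond}) directly. I read $\mathcal{F}^{\left( k_{1},k_{2}\right) }$ as the $\sigma$-field $\mathcal{F}_{k_{1},k_{2}}$ from (\ref{eq: Gamma-hat sample splitting}), which is generated by $\{Z_{js}:(j,s)\in \mathcal{I}\backslash \mathcal{I}_{k_{1},k_{2}}\}$. The key observation is that, conditional on $\mathcal{G}$, every $(\varepsilon _{it},\eta _{it})$ with $(i,t)\in \mathcal{I}_{k_{1},k_{2}}$ is independent of $\mathcal{F}_{k_{1},k_{2}}$. Given $\mathcal{G}$, the pair $\varepsilon _{Z,it}$ is a measurable function of $Z_{it}$, since $\Gamma _{Z,it}=g_{0,Z}(\lambda _{i},f_{t})$ is $\mathcal{G}$-measurable by (\ref{eq: ID restrictions})(i). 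Moreover, $Z_{it}$ with $(i,t)\in \mathcal{I}_{k_{1},k_{2}}$ is conditionally independent of all $Z_{js}$ with $(j,s)\notin \mathcal{I}_{k_{1},k_{2}}$ by assumption. Hence
\begin{equation*}
\mathbb{E}\left[ \varepsilon _{Z,it}|\mathcal{F}_{k_{1},k_{2}},\mathcal{G}\right] =\mathbb{E}\left[ \varepsilon _{Z,it}|\mathcal{G}\right] .
\end{equation*}

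The first step is to show $\mathbb{E}[\varepsilon _{Z,it}|\mathcal{G}]=0$.
\begin{itemize}
\item For $\varepsilon _{it}$, this follows from the strict exogeneity in Assumption \ref{ass:ID} by the tower property. I will use that $\mathcal{G}$ is contained in the $\sigma$-field generated by $\{X_{js},\alpha _{j},\gamma _{s}\}$, because the normalised effects $(\lambda _{i},f_{t})$ are functions of $(\alpha _{i},\gamma _{t})$.
\item For $\eta _{it}=X_{it}-g_{X}(\alpha _{i},\gamma _{t})$, I will argue that the conditional independence across $(i,t)$ gives $\mathbb{E}[X_{it}|\mathcal{G}]=\mathbb{E}[X_{it}|\lambda _{i},f_{t}]=g_{0,X}(\lambda _{i},f_{t})=\Gamma _{X,it}$.
\end{itemize}
Therefore both conditional means vanish identically, and the first two lines of (\ref{eq: sample split cond}) hold trivially, with left-hand sides equal to zero.

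The second step handles the conditional second moments. By the same independence argument, $\mathbb{E}[vec(\varepsilon ^{(k_{1},k_{2})})vec(\varepsilon ^{(k_{1},k_{2})})^{\prime }|\mathcal{F}_{k_{1},k_{2}},\mathcal{G}]=\mathbb{E}[vec(\varepsilon ^{(k_{1},k_{2})})vec(\varepsilon ^{(k_{1},k_{2})})^{\prime }|\mathcal{G}]$. The off-diagonal entries are $\mathbb{E}[\varepsilon _{it}|\mathcal{G}]\mathbb{E}[\varepsilon _{js}|\mathcal{G}]=0$ by conditional independence and the first step. The matrix is therefore diagonal with entries $\mathbb{E}[\varepsilon _{it}^{2}|\mathcal{G}]$, so its operator norm is $\max_{(i,t)}\mathbb{E}[\varepsilon _{it}^{2}|\mathcal{G}]\leq \sup_{it}\mathbb{E}[\varepsilon _{it}^{2}|\mathcal{G}]<\infty $. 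The $\eta $ case is identical, using the block-diagonal structure and $\sup_{it}\mathbb{E}[\eta _{it}^{2}|\mathcal{G}]<\infty $.

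The main obstacle is the normalisation in the last two lines of (\ref{eq: sample split cond}). The computation above gives an $O_{P}(1)$ bound on the operator norm of the raw covariance matrix. This is exactly what the proof of Theorem \ref{thm:sampleSplit} needs, because it bounds the conditional variance of $\frac{1}{NT}\sum \varepsilon _{it}a_{it}$, with $a_{it}$ being $\mathcal{F}_{k_{1},k_{2}}$-measurable, by
\begin{equation*}
(NT)^{-2}\left\Vert \Sigma \right\Vert _{op}\sum a_{it}^{2}=O_{P}(1/(NT))\times \frac{1}{NT}\sum a_{it}^{2}.
\end{equation*}
So I would state the conclusion with the covariance scaled by $(NT)^{-2}$, as it enters that variance bound, and verify that this matches how (\ref{eq: sample split cond}) is invoked. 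If it turns out not to match, I would flag the scaling rather than claim the literal $O_{P}(1/(NT))$ bound for the unscaled matrix.
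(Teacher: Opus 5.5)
The step that fails is your treatment of $\eta_{it}$. Mutual independence of the $Z_{it}$ conditional on $\mathcal{G}$ says nothing about \emph{which} coordinates of $\mathcal{G}$ the conditional law of a given $Z_{it}$ depends on. So it does not give $\mathbb{E}[X_{it}|\mathcal{G}]=\mathbb{E}[X_{it}|\lambda_i,f_t]$.

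For a counterexample, let $\{\lambda_j\}$ be i.i.d.\ and bounded, let $X_{it}=\lambda_{i+1}+e_{it}$ with $e_{it}$ i.i.d.\ and independent of everything else, and let $\varepsilon_{it}$ be i.i.d. Then the $Z_{it}$ are conditionally independent given $\mathcal{G}$ with bounded conditional second moments. Yet $\Gamma_{X,it}=\mathbb{E}[\lambda_1]$ and $\mathbb{E}[\eta_{it}|\mathcal{F}_{k_1,k_2},\mathcal{G}]=\lambda_{i+1}-\mathbb{E}[\lambda_1]$. So the second line of (\ref{eq: sample split cond}) is of order one rather than zero.

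What you need is exogeneity of $\eta$ with respect to all the fixed effects, $\mathbb{E}[\eta_{it}|\{\alpha_j,\gamma_s\}_{j,s}]=0$. The tower property then gives $\mathbb{E}[\eta_{it}|\mathcal{G}]=0$, exactly as in your argument for $\varepsilon_{it}$. This is the only coherent reading of the $\eta$-part of Assumption \ref{ass:ID}. Taken literally, that assumption conditions on $X_{it}$ itself, which forces $\eta_{it}=0$ and contradicts its own full-rank part. So state explicitly that you use it in this form. Alternatively, impose directly that the conditional law of $Z_{it}$ given $\mathcal{G}$ depends on $\mathcal{G}$ only through $(\lambda_i,f_t)$; then your chain of equalities goes through.

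The rest is correct and is what the paper has in mind. The paper gives no separate proof of this corollary, but the closing remark of the appendix lemma on $\mathcal{E}$ is exactly your diagonal-covariance argument: for independent, non-identically distributed errors the operator norm is $\max_{it}\mathbb{E}[\varepsilon_{it}^2]$.

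Your diagnosis of the normalisation is also right. The unscaled conditional covariance has operator norm $O_P(1)$, generally not $O_P(1/(NT))$. $O_P(1)$ is what the proof of Theorem \ref{thm:sampleSplit} actually needs, and it is the form the paper assumes elsewhere (Theorem \ref{thm:SSWWresult}, Assumption \ref{ass:boundedNorms}).

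However, the rescaling under which the displayed rate $O_P(1/(NT))$ is the right one is by $(NT)^{-1}$, not $(NT)^{-2}$. Your bound is $[(NT)^{-1}\Vert\Sigma\Vert_{op}]\cdot\frac{1}{NT}\sum a_{it}^2$. A $(NT)^{-2}$-scaled matrix that is merely $O_P(1/(NT))$ would only give a conditional variance of order $o_P(1)$, which is too weak.
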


This generalises Theorem 1 in \cite{beyhum2025inference} to allow for a very
broad class of first--step estimators $(\hat{\Gamma}_{X,it},\hat{\Gamma}%
_{Y,it})$ and to allow for possible time series and cross-sectional
dependence. In particular, we expect that (\ref{eq: sample split cond}) will
hold under suitable weak dependence conditions as explored in \cite%
{lunde2019sample}.

The theorem shows that sample splitting alone allows removes incidental
parameter biases from the two first-order terms in great generality.
However, the second--order term in (\ref{eq: rate cond 0}) may still
generate incidental parameter biases.

\subsubsection{Sample Splitting with Linear Smoothers}

\label{sect:samplesplitLinear}

We here develop an alternative sample splitting procedure that not only
removes incidental parameter biases arising from the first-order terms but
also from the second--order term when combined with the two--way estimator
where $\hat{g}_{0,Z}^{\left( 1\right) }(\hat{\lambda}_{i},f_{t})$, $\hat{g}%
_{0,Z}^{\left( 2\right) }(\lambda _{i},\hat{f}_{t})$, and $\hat{g}_{0,Z}(%
\hat{\lambda}_{i},\hat{f}_{t})$ are obtained using a linear nonparametric
regression procedure. One example is (\ref{eq: two-way kernel regression}),
but other options are also possible, such as a two--way series estimator.

Take a linear smoother $\mathcal{W}_{it}^{\left( k_{1},k_{2}\right) }$
computed from data in $\mathcal{I}\backslash \mathcal{I}_{k_{1},k_{2}}$ only,%
\begin{equation*}
\mathcal{W}_{it}^{\left( k_{1},k_{2}\right) }\in \mathcal{F}_{k_{1},k_{2}}:=%
\mathcal{F}\left\{ \left( Y_{js},X_{js}\right) :\left( j,s\right) \in 
\mathcal{I}\backslash \mathcal{I}_{k_{1},k_{2}}\right\} ,\text{ \ }\left(
i,t\right) \in \mathcal{I}_{k_{1},k_{2}},
\end{equation*}%
and let 
\begin{equation*}
Z^{\left( k_{1},k_{2}\right) }=\left\{ Z_{it}\right\} _{\left( i,t\right)
\in \mathcal{I}_{k_{1},k_{2}}}\in \mathcal{F}\left\{ Z_{js}:\left(
j,s\right) \in \mathcal{I}_{k_{1},k_{2}}\right\}
\end{equation*}%
denote data from $\mathcal{I}_{k_{1},k_{2}}$. We then define%
\begin{equation*}
\hat{\Gamma}_{Z,it}^{\left( k_{1},k_{2}\right) }=\mathcal{W}_{it}^{\left(
k_{1},k_{2}\right) }Z^{\left( k_{1},k_{2}\right) },\text{ }\left( i,t\right)
\in \mathcal{I}_{k_{1},k_{2}},\text{ \ }\hat{\Gamma}_{Z}=[\hat{\Gamma}%
_{Z}^{\left( 1,1\right) },\hat{\Gamma}_{Z}^{\left( 1,2\right) },\hat{\Gamma}%
_{Z}^{\left( 2,1\right) },\hat{\Gamma}_{Z}^{\left( 2,2\right) }],
\end{equation*}%
where $\hat{\Gamma}_{Z}$ simply concatenates the partitioned matrices of
estimates. This sample split ensures the smoother $\mathcal{W}_{it}^{\left(
k_{1},k_{2}\right) }$ has no stochastic dependence with $\eta _{it}$ and $%
\varepsilon _{it}$ for $\left( i,t\right) \in \mathcal{I}_{k_{1},k_{2}}$
under independence across $\left( i,t\right) $. Note that whilst the \cite%
{freeman2023linear} sample split in \eqref{eqn:sampleSplit} is sufficient
for this, with mutually independent $\eta _{it}$ and $\varepsilon _{it}$ a
leave-one-out split would also be sufficient but requires $N\times T$
estimations, i.e. for each $i$ and $t$, so we do not implement this version.

The sample splitting estimator still takes the form (\ref{eq: beta-hat
sample splitting}) and so we could again apply Theorem \ref{Th: NO general}
as we did in the previous subsection. However, when linear estimators are
employed in the first-stage, a more precise expansion can be obtained that
leads to sharper rate restrictions on the first--stage. For any two matrices 
$A,B\in \mathbb{R}^{N\times T}$, let $\langle A,B\rangle
_{F}:=\sum_{it}A_{it}B_{it}/\left( NT\right) $ denote the scaled entrywise
Frobenius inner product. With $\hat{\Gamma}=\mathcal{W}(\Gamma +\varepsilon
)\in \mathbb{R}^{N\times T}$, where $\Gamma _{it}:=g(\alpha _{i},\gamma
_{t}) $, we have $\hat{\Gamma}_{Y}=\hat{\Gamma}_{X}\beta +\hat{\Gamma}$ and
so for $\hat\Omega_X := (NT)^{-1} \sum_{it} (X_{it}-\hat{\Gamma}%
_{X,it})(X_{it}-\hat{\Gamma}_{X,it})^\prime$, 
\begin{equation*}
\hat\Omega_X^{-1}\langle X-\hat{\Gamma}_{X},Y-\hat{\Gamma}_{Y}\rangle
_{F}=\beta +\hat\Omega_X^{-1}\langle \Gamma _{X}-\hat{\Gamma}_{X},(\Gamma -%
\hat{\Gamma})+\varepsilon \rangle _{F}+\hat\Omega_X^{-1}\langle \eta
,(\Gamma -\hat{\Gamma})+\varepsilon \rangle _{F}.
\end{equation*}%
Hence, for the estimator to be asymptotically normally distributed without
any asymptotic biases, we need (\ref{eq: rate cond 3}) to hold together
with\ 
\begin{align}
\xi _{\Gamma X} &:=\langle \Gamma _{X}-\hat{\Gamma}_{X},\Gamma -\hat{\Gamma}%
\rangle _{F}=o_{p}(1/\sqrt{NT}),  \label{eqn:sampleSplitLinear2} \\
\langle \Gamma _{X}-\hat{\Gamma}_{X},\varepsilon \rangle _{F} &=o_{p}(1/%
\sqrt{NT}),\text{ \ \ }\langle \eta ,\Gamma -\hat{\Gamma}\rangle
_{F}=o_{p}(1/\sqrt{NT}).
\end{align}%
Importantly, compared to the general case, $\hat{\Gamma}_{Y}-\Gamma _{Y}$
has been replaced by $\hat{\Gamma}-\Gamma $. Sample splitting will now
ensure that $\langle \Gamma _{X}-\hat{\Gamma}_{X},\varepsilon \rangle
_{F}=o_{p}(1/\sqrt{NT})$ and $\langle \Gamma _{X}-\hat{\Gamma}%
_{X},\varepsilon \rangle _{F}=o_{p}(1/\sqrt{NT})$. What remains is to show (%
\ref{eqn:sampleSplitLinear2}). We conclude:

\begin{theorem}
\label{thm:SSWWresult} Let $\{\|\mathcal{W}\|\cdot \xi _{\Gamma },\|\mathcal{%
W}\|\cdot \xi _{X}\}=o_{p}(1)$, 
$\langle\mathcal{W}\mathbb{E}[\varepsilon|\mathcal{F}], \mathcal{W}\mathbb{E}%
[\eta|\mathcal{F}] \rangle _{F} = o_p(NT)^{-1/2}$. Suppose, $\Vert \mathbb{E}%
\left[ vec(\varepsilon )vec(\varepsilon )^{\prime }|\mathcal{F}\right] \Vert
_{2}=O(1)$, and $\Vert \mathbb{E}\left[ vec(\eta )vec(\eta )^{\prime }|%
\mathcal{F}\right] \Vert _{2}=O(1)$. Further, let, 
\begin{align}  \label{eqn:WeightedWithinCondtions}
\Vert \mathcal{W}^{\ast }\mathcal{W}\eta_k \Vert _{F}^2 = o_p(NT), & & 
\langle \Gamma _{X}-\mathcal{W}\Gamma _{X},\Gamma -\mathcal{W}\Gamma \rangle
_{F}=o_{p}(NT)^{-1/2}.& & 
\end{align}%
Then, under (\ref{eq: rate cond 3}), 
\begin{equation*}
\sqrt{NT}(\hat{\beta}_{NO}^{SS}-\beta )\xrightarrow[]{d}\mathcal{N}(0,\Omega
_{X}^{-1}\Sigma \Omega _{X}^{-1}).
\end{equation*}
\end{theorem}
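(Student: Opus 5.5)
We start from the exact algebraic decomposition displayed just before the theorem:
\[
\hat\Omega_X^{-1}\langle X-\hat{\Gamma}_{X},Y-\hat{\Gamma}_{Y}\rangle _{F}-\beta =\hat\Omega_X^{-1}\bigl\{\langle \Gamma _{X}-\hat{\Gamma}_{X},\Gamma -\hat{\Gamma}\rangle_F+\langle \Gamma _{X}-\hat{\Gamma}_{X},\varepsilon\rangle_F+\langle \eta ,\Gamma -\hat{\Gamma}\rangle _{F}+\langle \eta,\varepsilon\rangle_F\bigr\}.
\]
Because every term is linear in the data, no Taylor expansion is needed. Under (\ref{eq: rate cond 3}), $\sqrt{NT}\langle\eta,\varepsilon\rangle_F\to^d\mathcal N(0,\Sigma)$. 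It therefore remains to show two things: $\hat\Omega_X\to^P\Omega_X$, and each of the three cross terms is $o_P((NT)^{-1/2})$. The result then follows by Slutsky's lemma.

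Throughout, write $\hat\Gamma_X=\mathcal W(\Gamma_X+\eta)$ and $\hat\Gamma=\mathcal W(\Gamma+\varepsilon)$, blockwise by split. Then
\[
\Gamma_X-\hat\Gamma_X=(I-\mathcal W)\Gamma_X-\mathcal W\eta,\qquad \Gamma-\hat\Gamma=(I-\mathcal W)\Gamma-\mathcal W\varepsilon .
\]

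\textbf{Step 1 (the quadratic term $\xi_{\Gamma X}$).} Expanding the product gives four pieces.
\begin{enumerate}[(i)]
\item The pure bias piece $\langle(I-\mathcal W)\Gamma_X,(I-\mathcal W)\Gamma\rangle_F$ is $o_P((NT)^{-1/2})$ directly by the second part of (\ref{eqn:WeightedWithinCondtions}).
\item The noise--noise piece is $\langle\mathcal W\eta,\mathcal W\varepsilon\rangle_F$. We split $\eta$ and $\varepsilon$ into conditional means given $\mathcal F$ plus centred parts. The product of conditional means is $o_P((NT)^{-1/2})$ by assumption. For the terms involving centred parts, we compute conditional second moments and bound them using $\|\mathbb E[vec(\varepsilon)vec(\varepsilon)'|\mathcal F]\|_2=O(1)$. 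This gives bounds like $(NT)^{-2}\|\mathcal W^*\mathcal W\eta_k\|_F^2=o_P((NT)^{-1})$, via the first part of (\ref{eqn:WeightedWithinCondtions}).
\item The two cross pieces $\langle(I-\mathcal W)\Gamma_X,\mathcal W\varepsilon\rangle_F$ and $\langle\mathcal W\eta,(I-\mathcal W)\Gamma\rangle_F$ are handled by the same conditional-variance argument. Here the variance is bounded by $(NT)^{-2}\|\mathcal W\|^2\|(I-\mathcal W)\Gamma_X\|_F^2\cdot O(1)$. Interpreting $\xi_X,\xi_\Gamma$ as the normalised approximation errors $\|(I-\mathcal W)\Gamma_X\|_F/\sqrt{NT}$ and $\|(I-\mathcal W)\Gamma\|_F/\sqrt{NT}$, the condition $\|\mathcal W\|\xi=o_P(1)$ gives $o_P((NT)^{-1})$.
\end{enumerate}

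\textbf{Step 2 (the linear terms).} Consider $\langle\Gamma_X-\hat\Gamma_X,\varepsilon\rangle_F$ and $\langle\eta,\Gamma-\hat\Gamma\rangle_F$. By the sample-split construction, within each block $\mathcal I_{k_1,k_2}$ the smoother $\mathcal W^{(k_1,k_2)}$ and the bias $(I-\mathcal W)\Gamma_X$ are measurable with respect to information independent of the block's errors. Under Assumption~\ref{ass:ID}, conditioning on $\mathcal F$ together with the out-of-block data leaves $\langle(I-\mathcal W)\Gamma_X,\varepsilon\rangle_F$ mean zero, with conditional variance $(NT)^{-2}\|(I-\mathcal W)\Gamma_X\|_F^2O(1)=o_P((NT)^{-1})$.

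The terms $\langle\mathcal W\eta,\varepsilon\rangle_F$ and $\langle\eta,\mathcal W\varepsilon\rangle_F$ are bilinear forms in errors. Since $\mathcal W^{(k_1,k_2)}$ acts on $Z^{(k_1,k_2)}$, these have the form $\eta'\mathcal W'\varepsilon$. We split them into a conditional-mean part, controlled by the conditional-mean assumption, and a centred part whose variance is bounded by $(NT)^{-2}\|\mathcal W^*\mathcal W\eta\|_F^2\cdot O(1)$. Note that $\|\mathcal W\eta\|^2$ is bounded through $\langle\eta,\mathcal W^*\mathcal W\eta\rangle$ and Cauchy--Schwarz.

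\textbf{Step 3 (the Hessian $\hat\Omega_X$).} We have
\[
\hat\Omega_X=\langle\eta,\eta\rangle_F+2\langle\eta,\Gamma_X-\hat\Gamma_X\rangle_F+\|\Gamma_X-\hat\Gamma_X\|_F^2/(NT).
\]
The last term is $o_P(1)$ by the same bounds used above, namely $\|\mathcal W\|\xi_X=o_P(1)$ and $\|\mathcal W\eta\|_F^2=o_P(NT)$. The middle term is then $o_P(1)$ by Cauchy--Schwarz, as in the argument preceding Theorem~\ref{Th: NO general}.

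\textbf{Main obstacle.} I expect the hardest step to be the noise--noise terms $\langle\mathcal W\eta,\mathcal W\varepsilon\rangle_F$ and $\langle \mathcal W\eta,\varepsilon\rangle_F$. Their natural size is the trace of $\mathcal W$ divided by $NT$, and this is exactly what generates the incidental-parameter bias in Theorem~\ref{thm:sampleSplit}. The plan relies on the split making the relevant conditional cross-covariances vanish off the conditional-mean term, and I would first verify this carefully blockwise.
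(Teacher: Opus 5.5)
Your Step~1 is the paper's proof. It uses the same four-way split of $\xi_{\Gamma X}$. The noise--noise piece is handled by the conditional-mean hypothesis plus $(NT)^{-2}\Vert\mathcal W^{*}\mathcal W\eta_k\Vert_F^2\cdot O(1)$, and the cross pieces by $(NT)^{-2}\Vert\mathcal W\Vert^2\Vert(\mathbb I-\mathcal W)\Gamma_X\Vert_F^2\cdot O(1)$. Your Step~3 is a cleaner route to $\hat\Omega_X\to^P\Omega_X$ than Corollary~\ref{cor:Final}.

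Step~2 goes further than the paper. The paper disposes of the linear terms only by invoking \eqref{eq: sample split cond} and the argument of Theorem~\ref{thm:sampleSplit}. That argument presumes $\hat\Gamma^{(k_1,k_2)}$ is built from out-of-block data, whereas here $\hat\Gamma=\mathcal W Z^{(k_1,k_2)}$ uses the block's own data. One step in your Step~2 does not follow from the stated hypotheses.

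The term $\langle\eta,\mathcal W\varepsilon\rangle_F$ comes from $\langle\eta,\Gamma-\hat\Gamma\rangle_F$ and equals $\langle\mathcal W^{*}\eta,\varepsilon\rangle_F$. Conditioning on $X$, $\mathcal F$ and $\mathcal G$ bounds its second moment by $(NT)^{-2}\Vert\mathcal W^{*}\eta_k\Vert_F^2\cdot O(1)$, not by $(NT)^{-2}\Vert\mathcal W^{*}\mathcal W\eta_k\Vert_F^2\cdot O(1)$. Your Cauchy--Schwarz device then gives $\Vert\mathcal W^{*}\eta_k\Vert_F^2\le\Vert\eta_k\Vert_F\,\Vert\mathcal W\mathcal W^{*}\eta_k\Vert_F$, with $\mathcal W\mathcal W^{*}$ rather than $\mathcal W^{*}\mathcal W$. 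Row-normalised kernel weights and the backfitting products $\prod_r(\mathbb I-S_r)$ are not self-adjoint, so the hypothesis does not apply.

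The alternatives also fail. Conditioning on $\varepsilon$ instead requires $\Vert\mathcal W\varepsilon\Vert_F^2=o_p(NT)$, which is not assumed under your reading of $\xi_\Gamma$ as pure approximation error. The crude bound $\Vert\mathcal W^{*}\eta_k\Vert_F\le\Vert\mathcal W\Vert\,\Vert\eta_k\Vert_F$ only yields $O_p(\Vert\mathcal W\Vert(NT)^{-1/2})$.

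You need an extra condition. One option is $\Vert\mathcal W^{*}\eta_k\Vert_F^2=o_p(NT)$. Another, writing $\mathbb W$ for the $NT\times NT$ matrix of $\mathcal W$ acting on $vec(\cdot)$, is the degrees-of-freedom condition $\operatorname{tr}(\mathbb W^{*}\mathbb W)=o(NT)$. Together with $\Vert\mathbb E[vec(\eta)vec(\eta)^{\prime}|\mathcal F]\Vert_2=O(1)$, it gives $\mathbb E[\Vert\mathcal W^{*}\eta_k\Vert_F^2|\mathcal F]\le\Vert\mathbb E[vec(\eta)vec(\eta)^{\prime}|\mathcal F]\Vert_2\operatorname{tr}(\mathbb W^{*}\mathbb W)=o(NT)$.

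Two smaller points. First, in Step~2 you claim the block's errors are independent of the out-of-block information. That contradicts the dependence the theorem is built to allow: it is the reason for the conditional-mean hypothesis, and Remark~\ref{rem:weakDep} covers AR errors. It also clashes with your own Step~1(ii). The slip is harmless, because the hypothesis on $\mathbb E[vec(\varepsilon)vec(\varepsilon)^{\prime}|\cdot]$ is an uncentred bound. For $A$ measurable with respect to the conditioning information $\mathcal C$, $\mathbb E[\langle A,\varepsilon\rangle_F^2|\mathcal C]\le(NT)^{-2}\Vert A\Vert_F^2\Vert\mathbb E[vec(\varepsilon)vec(\varepsilon)^{\prime}|\mathcal C]\Vert_2$ controls the conditional mean and the fluctuation together. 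So drop the mean/centred split throughout. Otherwise it would need the conditional-mean hypothesis for one-sided products such as $\langle\mathcal W\mathbb E[\eta|\mathcal F],\mathbb E[\varepsilon|\mathcal F]\rangle_F$, which that hypothesis (with $\mathcal W$ on both sides) does not cover.

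Second, the obstacle you flag does not arise. By Assumption~\ref{ass:ID}, $\mathbb E[\varepsilon_{it}\eta_{js}]=0$ for all index pairs, so neither $\langle\mathcal W\eta,\varepsilon\rangle_F$ nor $\langle\mathcal W\eta,\mathcal W\varepsilon\rangle_F$ carries a trace-type mean.
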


We establish $\Vert \mathcal{W}^{\ast }\mathcal{W}\eta_k \Vert _{F}^{2}
=o_p(NT)$, and $\mathbb{E}[\langle\Gamma - \mathcal{W}\Gamma,\Gamma_X - 
\mathcal{W}\Gamma_X\rangle_F] = o_p(NT)^{-1/2}$ for our specific estimators
in Section~\ref{sect:Asymptotics}. Condition $\langle\mathcal{W}\mathbb{E}%
[\varepsilon|\mathcal{F}], \mathcal{W}\mathbb{E}[\eta|\mathcal{F}] \rangle
_{F} = o_p(NT)^{-1/2}$ holds in many weak dependence settings, see Remark~%
\ref{rem:weakDep}. It trivially holds for iid $\varepsilon$, since then $%
\mathbb{E}[\varepsilon|\mathcal{F}] = 0$.

Condition $\Vert \mathcal{W}^{\ast }\mathcal{W}\eta_k \Vert _{F}^{2}
=o_p(NT) $ is satisfied for many $\mathcal{W}$. Take a cross-sectional
smoother for simplicity, such that $\mathcal{W} = w\in\mathbb{R}^{N\times N}$%
. When weights are set to $1/N$, we get the usual $\mathcal{W}\eta_k =
N^{-1} \iota_N \cdot \iota_N^\prime \eta_k = \iota_N \iota_T^\prime \cdot
O_p(1/\sqrt{N})$. When nonparametric smoothers are used with bandwidth $h$, $%
\mathcal{W}\eta_k = \iota_N \iota_T^\prime \cdot O_p(1/\sqrt{Nh})$ under
common regularity conditions. When additive nonparametric smoothers are
used, with bandwidth $h$ and $R$ additive terms, this becomes $\mathcal{W}%
\eta_k = \iota_N \iota_T^\prime \cdot O_p(R/\sqrt{Nh})$. These rates get
slower with more flexible smoothers, but are easily all $o(1)$.\footnote{%
The condition for additive smoother in this example is $R^2/h = o(N)$. Since
we only consider $R \to \infty$ very slowly as a function of $\min\{N,T\}$,
this is not difficult to satisfy.}

\begin{remark}
\label{rem:weakDep} As an example of weak dependence for $\varepsilon_{it}$
and $\eta_{it}$, we show in Appendix~\ref{sect:AppProofs} that Theorem~\ref%
{thm:sampleSplit} and Theorem~\ref{thm:SSWWresult} holds under the following
time series models:%
\begin{equation*}
\varepsilon _{it}=\rho \varepsilon _{i,t-1}+e_{\varepsilon ,it},\text{ \ \ }%
\eta _{it}=\tilde{\rho}\eta _{it-1}+e_{\eta ,it},
\end{equation*}
\begin{equation*}
\varepsilon _{it}=\sum_{s=1}^{\infty }\theta _{s}e_{\varepsilon
,it-s}+e_{\varepsilon ,it},\text{ \ \ }\eta _{it}=\sum_{s=1}^{\infty }\tilde{%
\theta}_{s}e_{\eta ,it-s}+e_{\eta ,it},
\end{equation*}
where $\max \left\{ |\rho |,|\tilde{\rho}|\right\} <1$, $\int_{t}^{\infty
}|\theta _{s}|ds\lesssim t^{-a}$, $\int_{t}^{\infty }|\tilde{\theta}%
_{s}|ds\lesssim t^{-\tilde{a}}$ with $\min \{a,\tilde{a}\}>1/2,$ and $%
e_{\varepsilon ,it},e_{\eta ,it}$ are i.i.d. mean zero with $\mathbb{E}\left[
e_{\varepsilon ,it}^{4}\right] <\infty $ and $\mathbb{E}\left[ e_{\eta
,it}^{4}\right] <\infty $.
\end{remark}

\section{A novel estimator of $g_{Z}$ and fixed effects\label{sect:g_Z ID}}

In this section, we import a novel identification result for $g_{Z}$ and the
fixed effects from the companion paper \cite{FreemanKristensen2026} and use
this to develop a kernel regression-based estimators of these. We then
proceed to verify that this estimator satisfies the high--level conditions
in Theorem \ref{thm:SSWWresult}.

We first introduce a singular value decomposition (SVD) of $g$: Let, for a
given multi-index $\iota =\left( \iota _{1,1},\dots ,\iota _{1,d_{\alpha
}},\iota _{2,1},\dots ,\iota _{2,d_{\gamma }}\right) \in \mathbb{N}%
_{0}^{d_{\alpha }+d_{\gamma }}$, 
\begin{equation*}
g^{(\iota )}\left( \alpha ,\gamma \right) =\frac{\partial ^{|\iota |}g\left(
\alpha ,\gamma \right) }{\partial \alpha _{1}^{\iota _{1,1}},\dots ,\partial
\gamma _{d_{\gamma }}^{\iota _{2,d_{\gamma }}}}
\end{equation*}%
be the mixed partial derivative. Let the norm $\Vert g\Vert
_{L_{f}^{2}(\Omega _{\alpha }\times \Omega _{\gamma })}$ denote the usual $%
L_{2}$-norm over support of $\left( \alpha _{i},\gamma _{t}\right) $,
denoted $\Omega _{\alpha }\times \Omega _{\gamma }$ taken with respect to
joint distribution of $\left( \alpha _{i},\gamma _{t}\right) $, 
\begin{equation*}
\Vert g\Vert _{L_{f}^{2}(\Omega _{\alpha }\times \Omega _{\gamma })}=\mathbb{%
E}\left[ g^{2}(\alpha _{i},\gamma _{t})\right] =\int_{\Omega _{\alpha
}}\int_{\Omega _{\gamma }}g^{2}(a,c)\pi _{\alpha }(a)\pi _{\gamma }(c)dadc,
\end{equation*}%
where $\pi _{\alpha }$ and $\pi _{\gamma }$ denote the densities of $\alpha
_{i}$ and $\gamma _{t}$. We will then assume that:

\begin{assumption}
\label{ass:g_Z}$\left( \alpha _{i},\gamma _{t}\right) $ is finite
dimensional, $d_{\alpha }=\dim \left( \alpha _{i}\right) <\infty $ and $%
d_{\gamma }=\dim \left( \gamma _{t}\right) <\infty $, with time--invariant
joint density $\pi _{\alpha }(\alpha )\pi _{\gamma }(\gamma )$ and compact
support $\Omega _{\alpha }\times \Omega _{\gamma }\subseteq \mathbb{R}%
^{d_{\alpha }}\times \mathbb{R}^{d_{\gamma }}$; 
\begin{equation}
g\in H_{f}^{p}(\Omega _{\alpha }\times \Omega _{\gamma })=\{g\in
L_{f}^{2}(\Omega _{\alpha }\times \Omega _{\gamma }):g^{(\iota )}\in
L_{f}^{2}(\Omega _{\alpha }\times \Omega _{\gamma })\,\,\forall \,\,|\iota
|\leq p\}.  \label{eqn:HilbertCondition}
\end{equation}
\end{assumption}

Under this assumption, we obtain the following SVD of $g$,%
\begin{equation}
g(\alpha _{i},\gamma _{t})=\sum_{r=1}^{\infty }\sigma _{r}u_{r}(\alpha
_{i})v_{r}(\gamma _{t}),  \label{eqn: SV decomposition}
\end{equation}%
where $\sigma _{1}>\sigma _{2}>....$ are the ordered singular values, and $%
u_{r}\in H_{f}^{p}(\Omega _{\alpha }\times \Omega _{\gamma })$ and $v_{r}\in
H_{f}^{p}(\Omega _{\alpha }\times \Omega _{\gamma })$ are eigenfunctions.
These constitute an orthonormal basis of $L_{f}^{2}(\Omega _{\alpha })$ and $%
L_{f}^{2}(\Omega _{\gamma })$, respectively, so that%
\begin{equation*}
\int_{\Omega _{\alpha }}u_{r}(a)u_{r^{\prime }}(a)\pi _{\alpha
}(a)da=I\left\{ r=r^{\prime }\right\} \quad \int_{\Omega _{\gamma
}}v_{r}(c)v_{r^{\prime }}(c)\pi _{\gamma }(c)dc=I\left\{ r=r^{\prime
}\right\} 
\end{equation*}%
We refer to \cite{griebel2014approximation} and \cite{freeman2023linear} for
further details. Importantly, the eigenfunctions are identified, c.f. \cite%
{freeman2023linear}.

Next, we impose the following injectivity condition on $g$:

\begin{assumption}
\label{ass:g_Z inj}Suppose that, for some $p_{\alpha }\geq d_{\alpha }$ and $%
p_{\gamma }\geq d_{\gamma }$, there exits $\left( \alpha _{0,1},....,\alpha
_{0,p_{\gamma }}\right) \in \Omega _{\alpha }^{p_{\gamma }}$ and $\left(
\gamma _{0,1},....,\gamma _{0,p_{\alpha }}\right) \in \Omega _{\gamma
}^{p_{\alpha }}$so that%
\begin{align*}
\frac{\partial \left( g\left( \alpha ,\gamma _{0,1}\right) ,....,g\left(
\alpha ,\gamma _{0,p_{\alpha }}\right) \right) }{\partial \alpha }& \in 
\mathbb{R}^{d_{\alpha }\times p_{\alpha }}\text{ is rank $d_{\alpha }$ for
all }\alpha , \\
\frac{\partial \left( g\left( \alpha _{0,1},\gamma \right) ,....,g\left(
\alpha _{0,p_{\gamma }},\gamma \right) \right) }{\partial \gamma }& \in 
\mathbb{R}^{d_{\gamma }\times p_{\gamma }}\text{ is rank $d_{\gamma }$ for
all }\gamma .
\end{align*}
\end{assumption}

In discussing Assumption \ref{ass:g_Z inj}, it is important to note that
there exist many observational equivalent representations of $g\left( \alpha
_{i},\gamma _{t}\right) $. Suppose, for example, that $g\left( \alpha
_{i},\gamma _{t}\right) =\tilde{g}\left( A\alpha _{i},B\gamma _{t}\right) $,
where $A\in \mathbb{R}^{\tilde{d}_{\alpha }\times d_{\alpha }}$, $B\in 
\mathbb{R}^{\tilde{d}_{\gamma }\times d_{\gamma }}$ and $\tilde{g}_{Z}:%
\mathbb{R}^{\tilde{d}_{\alpha }}\times \mathbb{R}^{\tilde{d}_{\gamma }}$
with $\tilde{d}_{\alpha }<d_{\alpha }$ and $\tilde{d}_{\gamma }<d_{\gamma }$%
. In this case, a lower--dimensional observational equivalent representation
is $\tilde{g}\left( \tilde{\alpha}_{i},\tilde{\gamma}_{t}\right) $, where $%
\tilde{\alpha}_{i}=A\alpha _{i}$ and $\tilde{\gamma}_{t}=B\gamma _{t}$. More
generally, if, for any given tuple $(\alpha _{1},...,\alpha _{d_{\gamma }})$%
, the mapping $\gamma \mapsto \left( g\left( \alpha _{1},\gamma \right)
,....,g\left( \alpha _{d_{\gamma }},\gamma \right) \right) $ does not have
full rank or, for any given tuple $(\gamma _{1},...,\gamma _{d_{\alpha
}})\in R^{d_{\gamma }\times d_{\gamma }}$, the mapping \newline
$\alpha \mapsto \left( g\left( \alpha ,\gamma _{1}\right) ,....,g\left(
\alpha ,\gamma _{d_{\alpha }}\right) \right) $ does not have full rank, we
say that $g\left( \alpha _{i},\gamma _{t}\right) $ is reducible. In either
case the finite--dimensional distribution of $\left\{ Z_{it}\right\} _{1\leq
i\leq d_{\gamma },1\leq t\leq d_{\alpha }}$ can be represented by some
lower--dimensional mapping $\tilde{g}\left( \tilde{\alpha}_{i},\tilde{\gamma}%
_{t}\right) $, where $\tilde{d}_{\alpha }<d_{\alpha }$ or $\tilde{d}_{\gamma
}<d_{\gamma }$.

In the light of these observations, Assumption \ref{ass:g_Z inj} is quite
weak. If Assumption \ref{ass:g_Z inj} is violated, and so $g\left( \alpha
_{i},\gamma _{t}\right) $ is reducible, then we can work with the
lower--dimensional observational equivalent representation of $g$ that will
satisfy above assumption. As such, we find that the assumptions \ref{ass:g_Z
inj} imposes very weak restrictions on $g\left( \alpha _{i},\gamma
_{t}\right) $.

The following theorem shows that under Assumption \ref{ass:g_Z inj} a linear
combination of a finite number of the leading eigenfunctions are valid
proxies. Moreover, the associated regression function inherits the
smoothness properties of its mother.

\begin{theorem}
\label{thm:eigen inver}Suppose Assumptions \ref{ass:g_Z} and \ref{ass:g_Z
inj} hold. Then there exists $R_{0}\geq \max \left\{ p_{\alpha },p_{\gamma
}\right\} $, $A\in \mathbb{R}^{p_{\alpha }\times R_{0}}$, $B\in \mathbb{R}%
^{p_{\gamma }\times R_{0}}$ with $\{p_\alpha,p_\gamma\} \geq
\{d_\alpha,d_\gamma\}$ so that, with $U\left( \alpha \right) :=\left(
u_{1}\left( \alpha _{i}\right) ,...,u_{R_{0}}\left( \alpha _{i}\right)
\right) ^{\prime }$ and $V\left( \gamma \right) :=\left( v_{1}\left( \gamma
_{t}\right) ,...,v_{R_{0}}\left( \gamma _{t}\right) \right) ^{\prime }$,%
\begin{equation}
\alpha \mapsto AU\left( \alpha \right) ,\text{ \ \ }\gamma \mapsto BV\left(
\gamma \right) \text{ are injective.}  \label{eq: one-to-one cond}
\end{equation}%
Moreover, $U\left( \alpha _{i}\right) $ and $V\left( \gamma _{t}\right) $
are identified so that%
\begin{equation}
g_{0,Z}\left( \lambda _{i},f_{t}\right) :=E\left[ Z_{it}|\lambda _{i},f_{t}%
\right] ,\text{ \ \ }\lambda _{i}=AU\left( \alpha _{i}\right) ,\text{ \ \ }%
f_{t}=BV\left( \gamma _{t}\right)  \label{eq: g_0 index}
\end{equation}%
is identified and satisfies $g_{0,Z}\left( \lambda _{i},f_{t}\right)
=g_{Z}\left( \alpha _{i},\gamma _{t}\right) $. The function $g_{0,Z}\left(
\lambda _{i},f_{t}\right) $ has the same degree of smoothness as $g_{Z}$.
\end{theorem}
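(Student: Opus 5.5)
The plan is to deduce injectivity of $\alpha\mapsto AU(\alpha)$ from Assumption \ref{ass:g_Z inj} by approximating the finite vector of evaluations $(g(\alpha,\gamma_{0,1}),\dots,g(\alpha,\gamma_{0,p_\alpha}))$ with truncated SVD expansions. The argument for $\gamma\mapsto BV(\gamma)$ is symmetric, with the roles of $u_r$ and $v_r$ interchanged, so only the $\alpha$ side is described.

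\textbf{Step 1: the linear map.} Define $A\in\mathbb{R}^{p_\alpha\times R_0}$ by $A_{kr}=\sigma_r v_r(\gamma_{0,k})$. This is well defined because $g\in H^p_f$ and, for $p$ large relative to $d_\gamma$, Sobolev embedding makes $v_r$ continuous. With this choice, (\ref{eqn: SV decomposition}) gives
\[
\bigl(g(\alpha,\gamma_{0,1}),\dots,g(\alpha,\gamma_{0,p_\alpha})\bigr)'=AU(\alpha)+\rho_{R_0}(\alpha),
\qquad
\rho_{R_0}(\alpha)_k=\sum_{r>R_0}\sigma_r u_r(\alpha)v_r(\gamma_{0,k}).
\]

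\textbf{Step 2: control of the remainder.} Using the smoothness in Assumption \ref{ass:g_Z}, show that the truncated expansion converges to $g$ in $C^1(\Omega_\alpha\times\Omega_\gamma)$. The route is the singular value decay for $H^p$ kernels in \cite{griebel2014approximation}, combined with Sobolev embedding. It yields $\sup_\alpha\|\rho_{R_0}(\alpha)\|+\sup_\alpha\|\partial_\alpha\rho_{R_0}(\alpha)\|\to0$ as $R_0\to\infty$.

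\textbf{Step 3: injectivity.} By Assumption \ref{ass:g_Z inj}, the Jacobian $J(\alpha)=\partial_\alpha G(\alpha)$ of $G(\alpha):=(g(\alpha,\gamma_{0,k}))_k$ has rank $d_\alpha$ everywhere. On compact $\Omega_\alpha$ its smallest singular value is therefore bounded below by some $c>0$. It follows that $G$ is a local immersion.

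Local immersion does not by itself give global injectivity. I would therefore take the natural reading of the assumption, namely that the $\gamma_{0,k}$ can be chosen (enlarging $p_\alpha$ if needed) so that $G$ is injective. Global injectivity on compact $\Omega_\alpha$ plus a uniform immersion bound gives a quantitative estimate
\[
\|G(\alpha)-G(\alpha')\|\ge c'\|\alpha-\alpha'\| \quad\text{for all }\alpha,\alpha'.
\]
To see this, handle the near-diagonal region by the Jacobian lower bound (mean value argument), and the region away from the diagonal by continuity of $G$ and compactness. Then $AU=G-\rho_{R_0}$ satisfies $\|AU(\alpha)-AU(\alpha')\|\ge (c'-\mathrm{Lip}(\rho_{R_0}))\|\alpha-\alpha'\|>0$ once $R_0$ is large, which proves (\ref{eq: one-to-one cond}). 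Choose $R_0\ge\max\{p_\alpha,p_\gamma\}$ large enough for both sides.

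\textbf{Step 4: identification and smoothness.} The $u_r,v_r$ are identified by \cite{freeman2023linear}, up to sign and rotation conventions fixed by the normalisation. Hence $\lambda_i=AU(\alpha_i)$ and $f_t=BV(\gamma_t)$ are identified. Injectivity gives $\sigma(\lambda_i,f_t)=\sigma(\alpha_i,\gamma_t)$, so $E[Z_{it}\mid\lambda_i,f_t]=g_Z(\alpha_i,\gamma_t)$. Equivalently, $g_{0,Z}=g_Z\circ(\Phi^{-1},\Psi^{-1})$ on the image, where $\Phi=AU$ and $\Psi=BV$.

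For smoothness, note that $\Phi$ is an immersion by Step 2 (its Jacobian is within $o(1)$ of $J$) and injective on a compact set. It is therefore a $C^p$ embedding, with a $C^p$ inverse on its image manifold, and composing preserves the $H^p$ or $C^p$ regularity of $g_Z$.

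\textbf{Main obstacle.} I expect the hard part to be Step 2, the uniform $C^1$ convergence of the truncated SVD. $L^2$ decay of the singular values alone is insufficient; it has to be combined with sup-norm bounds on $u_r$, $v_r$ and their derivatives that grow slower than $\sigma_r$ decays. I would first try to get these from Sobolev embedding and the $H^p$ eigenfunction regularity stated after (\ref{eqn: SV decomposition}), requiring $p>d_\alpha/2+1$ and $p>d_\gamma/2+1$.
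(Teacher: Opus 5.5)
The paper gives no proof of this theorem; it is imported from the companion paper. Your argument therefore has to stand on its own.

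\textbf{Your added injectivity hypothesis is necessary.} The architecture is sound: approximate $G(\alpha)=(g(\alpha,\gamma_{0,k}))_k$ by $AU(\alpha)$ via the truncated SVD, then perturb a bi-Lipschitz injection. You are also right that Assumption \ref{ass:g_Z inj} only makes $G$ an immersion. In fact the statement is false without a global condition. Take $d_\alpha=d_\gamma=1$, uniform densities on $\Omega_\alpha=[0,3\pi]$ and $\Omega_\gamma=[1,2]$, and $g(\alpha,\gamma)=\gamma\cos\alpha+\gamma^2\sin\alpha$.
\begin{itemize}
\item With $\gamma_{0,1}=1$, $\gamma_{0,2}=2$, the Jacobian $(\cos\alpha-\sin\alpha,\,4\cos\alpha-2\sin\alpha)$ never vanishes, and $\alpha_{0,1}=0$ handles the $\gamma$ side.
\item Yet $g(\alpha,\cdot)=g(\alpha+2\pi,\cdot)$. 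So every $u_r=\sigma_r^{-1}\int g(\cdot,c)v_r(c)\pi_\gamma(c)\,dc$ with $\sigma_r>0$ satisfies $u_r(\alpha)=u_r(\alpha+2\pi)$ on $[0,\pi]$, and no $AU$ is injective.
\item Adding a small $2\pi$-periodic infinite-rank perturbation gives $\sigma_r>0$ for all $r$ without changing this.
\end{itemize}
State the hypothesis explicitly, most naturally as injectivity of $\alpha\mapsto g(\alpha,\cdot)$; this is necessary anyway, since $U(\alpha)$ is a function of $g(\alpha,\cdot)$. Compactness plus the rank condition near the diagonal then give finitely many $\gamma_{0,k}$ with $G$ injective. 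There may be more than $p_\alpha$ of them, so $A$ may need more rows than the statement allows.

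\textbf{The genuine gap is Step 2, and your suggested route does not close it.} The only control $g\in H^p_f$ gives on eigenfunction derivatives is
$\partial^\iota u_r=\sigma_r^{-1}\int\partial^\iota_\alpha g(\cdot,c)v_r(c)\pi_\gamma(c)\,dc$.
This yields $\|u_r\|_{H^p}\lesssim\sigma_r^{-1}$ (at best $o(\sigma_r^{-1})$), and likewise for $v_r$. Sobolev embedding then bounds the $r$-th term of $\partial_\alpha\rho_{R_0}$ by a multiple of $\sigma_r\cdot\sigma_r^{-1}\cdot\sigma_r^{-1}$, which is not summable. So $H^p$ regularity does not supply eigenfunction bounds growing slower than $\sigma_r$ decays. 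Interpolating between $\|u_r\|_{L^2}=1$ and the $H^p$ bound, together with a decay rate for $\sigma_r$, might rescue this. That would need $p$ well above $(d_\alpha+d_\gamma)/2+1$, not your per-coordinate thresholds.

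A cleaner fix avoids point evaluation in $\gamma$ altogether.
\begin{itemize}
\item Let $Tv:=\int g(\cdot,c)v(c)\pi_\gamma(c)\,dc$, and let $P_R$ be the $L^2(\pi_\gamma)$-projection onto $\mathrm{span}\{v_1,\dots,v_R\}$.
\item Set $A_{kr}=\sigma_r\int v_r(c)\phi_k(c)\pi_\gamma(c)\,dc$, with nonnegative bumps $\phi_k$ near $\gamma_{0,k}$ and $\int\phi_k\pi_\gamma=1$.
\item Since $\sigma_ru_r(\alpha)=\langle g(\alpha,\cdot),v_r\rangle$, you get $(AU(\alpha))_k=\langle P_{R_0}g(\alpha,\cdot),\phi_k\rangle$ and $\partial_\alpha(AU(\alpha))_k=\langle P_{R_0}\partial_\alpha g(\alpha,\cdot),\phi_k\rangle$.
\item Define $G_\phi(\alpha):=(\langle g(\alpha,\cdot),\phi_k\rangle)_k$. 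The $C^1$ distance from $AU$ to $G_\phi$ is bounded by a constant times $\sup_\alpha\|(I-P_{R_0})g(\alpha,\cdot)\|+\max_j\sup_\alpha\|(I-P_{R_0})\partial_{\alpha_j}g(\alpha,\cdot)\|$.
\item Both $g(\alpha,\cdot)$ and $\partial_{\alpha_j}g(\alpha,\cdot)$ lie in $(\ker T)^\perp=\overline{\mathrm{span}}\{v_r\}$: differentiate $\langle g(\alpha,\cdot),v\rangle\equiv0$ for $v\in\ker T$. So these norms decrease to zero for each $\alpha$ by Parseval.
\item Dini's theorem on compact $\Omega_\alpha$ makes that convergence uniform.
\end{itemize}
Fix the bumps first, small enough that $G_\phi$ is $C^1$-close to $G$ and keeps its bi-Lipschitz bound; this uses uniform continuity of $g$ and $\partial_\alpha g$. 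Then take $R_0$ large, and your Step 3 perturbation argument applies unchanged. This route needs only joint continuity of $g$ and $\partial_\alpha g$: no eigenfunction sup-norm bounds and no singular-value rates.

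\textbf{Two smaller points.}
\begin{itemize}
\item The mean-value and Lipschitz bounds in Step 3 need segments inside $\Omega_\alpha$. Assume convexity, or extend $g$ to a neighbourhood on which the rank condition persists.
\item In Step 4, $\Phi=AU$ is only as regular as the $u_r$. Under $g\in H^p_f$, Sobolev embedding guarantees only $C^k$ for $k<p-d_\alpha/2$. So $g_{0,Z}=g_Z\circ(\Phi^{-1},\Psi^{-1})$ inherits the smoothness of $g_Z$ only up to that loss, and only intrinsically on the images of $\Phi$ and $\Psi$. ``The same degree of smoothness'' needs that qualification.
\end{itemize}
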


Estimation of $g_{0,Z}$ is a high--dimensional nonparametric regression
problem if $p_{\alpha }$ and/or $p_{\gamma }$ are large, and so may suffer
from the well-known curse--of--dimensionality which leads to large errors in
the nonparametric estimation. We therefore now introduce restrictions under
which this curse is less of a concern. To simplify notation, we here assume
that $d_{\alpha }=d_{\gamma }=d$.

\begin{theorem}
\label{thm:additive model}Suppose that $g_{Z}\left( \alpha _{i},\gamma
_{t}\right) $ is additive,%
\begin{equation*}
g_{Z}\left( \alpha _{i},\gamma _{t}\right) =\sum_{k=1}^{d}h_{k}\left( \alpha
_{i,k},\gamma _{t,k}\right) ,
\end{equation*}%
and Assumptions \ref{ass:g_Z} and \ref{ass:g_Z inj} hold. Then $%
g_{0,Z}\left( \lambda _{i},f_{t}\right) $ defined in Theorem \ref{thm:eigen
inver} is also additive, 
\begin{equation}
g_{0,Z}\left( \lambda _{i},f_{t}\right) =\sum_{k=1}^{d}h_{k}\left( \lambda
_{i,k},f_{t,k}\right) ,\text{ \ \ }\lambda _{i,k}=a_{k}^{\prime }U\left(
\alpha _{i}\right) ,\text{ \ \ }f_{t,k}=b_{k}^{\prime }V\left( \gamma
_{t}\right) ,  \label{eq: additive model}
\end{equation}%
where $A=\left[ a_{1}^{\prime },...,a_{d}^{\prime }\right] ^{\prime }$ and $%
B=\left[ b_{1}^{\prime },...,b_{d}^{\prime }\right] ^{\prime }$ were defined
in Theorem \ref{thm:eigen inver}.
\end{theorem}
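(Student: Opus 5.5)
We prove Theorem \ref{thm:additive model} by building the linear maps $A$ and $B$ of Theorem \ref{thm:eigen inver} coordinate by coordinate, using the additive structure. The point is to choose $A,B$ so that each coordinate $\lambda_{i,k}$ depends on $\alpha_{i,k}$ alone and is injective in it, and similarly for $f_{t,k}$.

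\textbf{Step 1: eigenfunctions of each component lie in the span of those of $g_Z$.} Write $h_k(\alpha_k,\gamma_k)=\sum_r \sigma_{k,r}u_{k,r}(\alpha_k)v_{k,r}(\gamma_k)$ for the SVD of each component, which exists by Assumption \ref{ass:g_Z}. Because $\pi_\alpha$ and $\pi_\gamma$ are product densities across the separate arguments, the integral operator of $g_Z$ is the sum of the operators of the $h_k$. Hence the left singular space of $g_Z$ is spanned by $\{u_{k,r}\}_{k,r}$, up to the constants absorbed by centring.

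Conversely, $u_{k,r}(\alpha_k)=\sigma_{k,r}^{-1}\int h_k(\alpha_k,c)v_{k,r}(c)\pi(c)dc$. We express this as a limit of linear combinations of the $u_r(\alpha)$. The key claim is that finitely many leading $u_r$ suffice to span a function $\phi_k(\alpha_k)$ that is injective in $\alpha_k$ on $\Omega_{\alpha}$. Assumption \ref{ass:g_Z inj}, applied to the additive $g$, gives $\partial_{\alpha_k}h_k(\alpha_k,\gamma_{0,j})\neq0$ for some $j$. Projecting $g(\cdot,\gamma_{0,j})$ onto the first $R_0$ eigenfunctions and using the $H^p$ smoothness to control derivatives of the truncation error, we obtain a coordinate in the span of $U(\alpha)$ with nonvanishing derivative in $\alpha_k$. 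That coordinate is therefore monotone, hence injective in the scalar $\alpha_k$.

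\textbf{Step 2: choose $a_k$ and $b_k$.} Set $\lambda_{i,k}=a_k'U(\alpha_i)=\phi_k(\alpha_{i,k})$, so that $\lambda_{i,k}$ depends only on $\alpha_{i,k}$; define $f_{t,k}=b_k'V(\gamma_t)=\psi_k(\gamma_{t,k})$ analogously. Stacking gives $A$ and $B$. Since each coordinate is injective, $\alpha\mapsto AU(\alpha)$ is injective, so these $A,B$ qualify in Theorem \ref{thm:eigen inver}. Then
\begin{equation*}
g_{0,Z}(\lambda_i,f_t)=g_Z(\alpha_i,\gamma_t)=\sum_k h_k\bigl(\phi_k^{-1}(\lambda_{i,k}),\psi_k^{-1}(f_{t,k})\bigr),
\end{equation*}
and this is additive in the pairs $(\lambda_{i,k},f_{t,k})$. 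Writing $h_k$ for the reparametrised component $h_k\circ(\phi_k^{-1},\psi_k^{-1})$, as the statement does by abuse of notation, yields (\ref{eq: additive model}).

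\textbf{Main obstacle.} The hard part is Step 1: showing that a finite linear combination of the leading eigenfunctions isolates $\alpha_k$ exactly. Finite truncation could leak dependence on $\alpha_{k'}$ for $k'\neq k$. We handle this by arguing that $g_Z$ has an exactly block-structured singular subspace. Since the $h_k$ depend on disjoint coordinates and the density factorises, the mean-zero parts of different components are orthogonal, so the SVD of $g_Z$ is the union of the SVDs of the $h_k$ plus one constant/rank-one adjustment. Each $u_r$ is then a function of a single $\alpha_k$ (or a constant), and choosing $a_k$ supported on the indices belonging to block $k$ gives exact separation with finite $R_0$. If singular values tie across blocks, we pick a block-aligned basis of the tied eigenspace.
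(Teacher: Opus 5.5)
The paper contains no proof of this theorem; its appendix skips it, as it does Theorem \ref{thm:eigen inver}, which is imported from the companion paper. Your argument therefore has to stand on its own, and it has a genuine gap.

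Your Step 2 is fine once each $\lambda_{i,k}$ is an injective function of $\alpha_{i,k}$ alone. You are also right that $h_k$ in the statement must be read as reparametrised. The gap is in producing such proxies, exactly where you locate the main obstacle. Your claim that the SVD of $g_Z$ is block-structured, with each $u_r$ a function of a single $\alpha_k$ or a constant, is false under the stated assumptions, for two reasons.

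\textbf{Independence.} Assumption \ref{ass:g_Z} only makes $\alpha_i$ independent of $\gamma_t$. Nothing makes $\alpha_{i,1},\dots,\alpha_{i,d}$ mutually independent, and you need that for mean-zero functions of different coordinates to be orthogonal in $L_f^2(\Omega_\alpha)$.

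\textbf{Main effects.} Even with independent coordinates, write $h_k=c_k+m_k(\alpha_k)+n_k(\gamma_k)+\tilde h_k(\alpha_k,\gamma_k)$ with $\tilde h_k$ doubly centred. Only the $\tilde h_k$ have mutually orthogonal singular spaces. The remainder $c+\sum_k m_k(\alpha_k)+\sum_k n_k(\gamma_k)$ is an at most rank-two term whose factors involve every coordinate. These factors need not be orthogonal to the singular functions of the $\tilde h_k$, since both are mean-zero functions of $\alpha_k$. So the remainder couples all blocks; it is not a harmless constant.

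For instance, take i.i.d.\ $U(-1,1)$ coordinates and $g_Z=\alpha_1(1+\gamma_1)+\alpha_2(1+\gamma_2)$, which satisfies Assumptions \ref{ass:g_Z}--\ref{ass:g_Z inj}. Its singular values are $\sqrt{7}/3>1/3$, with $u_1\propto\alpha_1+\alpha_2$ and $u_2\propto\alpha_1-\alpha_2$. The singular values are distinct, so tie-breaking does not help. There the span still contains $\alpha_1$ and $\alpha_2$, but only because the rank is finite.

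Now take $h_k=m(\alpha_k)+\sum_{r\ge1}s_rP_r(\alpha_k)P_r(\gamma_k)$, with orthonormal Legendre $P_r$, distinct $s_r>0$, and $m$ mean zero with every $\langle m,P_r\rangle\neq0$. The $\alpha$-side Gram operator is then diagonal plus rank one. Splitting into parts symmetric and antisymmetric under $\alpha_1\leftrightarrow\alpha_2$, one checks that no finite $\mathrm{span}\{u_1,\dots,u_{R_0}\}$ contains a nonzero function of $\alpha_1$ alone. So the $a_k$ your Step 2 needs can fail to exist.

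Step 1 has a second, independent gap. Assumption \ref{ass:g_Z inj} is pointwise. For additive $g_Z$ it says that, for each $\alpha$, the row $(\partial_{\alpha_k}h_k(\alpha_k,\gamma_{0,j,k}))_{j\le p_\alpha}$ is nonzero, so some $j=j(\alpha_k)$ works. It gives no single $j$, and no single linear combination, whose derivative is nonvanishing on the whole support, which is what ``therefore monotone'' requires.

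Moreover, unless the block structure holds, the $R_0$-term truncation of $g_Z(\cdot,\gamma_{0,j})$ depends on the whole vector $\alpha$. A nonvanishing $\partial/\partial\alpha_k$ then neither makes it a function of $\alpha_k$ alone nor gives joint injectivity.

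This cannot be patched by a finer argument. Take $d=1$, $\alpha_i,\gamma_t\sim U[0,3\pi/2]$ and $g_Z(\alpha,\gamma)=\cos(\alpha-\gamma)$.
\begin{itemize}
\item Assumptions \ref{ass:g_Z}--\ref{ass:g_Z inj} hold with $p_\alpha=p_\gamma=2$, using the points $0$ and $\pi/2$ on each side.
\item The left singular space is $\mathrm{span}\{\cos\alpha,\sin\alpha\}$, so every scalar proxy is $a'U(\alpha)=r\cos(\alpha-\phi)$. This is non-injective on an interval of length $3\pi/2>\pi$.
\item Yet $g_Z(\alpha,\cdot)\neq g_Z(\alpha',\cdot)$ whenever $\alpha\neq\alpha'$, so $g_Z$ is not a function of any scalar $\lambda_i=a'U(\alpha_i)$.
\end{itemize}

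A correct argument must therefore let each $\lambda_{i,k}$ be vector-valued, i.e.\ a block of several rows of $A$, which is what $p_\alpha\ge d_\alpha$ in Theorem \ref{thm:eigen inver} allows. The block separation also needs assumptions the paper does not state. One example is mutually independent coordinates of $\alpha_i$ and of $\gamma_t$ together with doubly-centred $h_k$; under those, your block argument does go through.
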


Next, we develop two--step regression estimators of $g_{Z}$ based on above
identification result: In the first step, we estimate the leading
eigenfunctions of $g$ that in the second step are used as proxies for $%
\alpha _{i}$ and $\gamma _{t}$ in a nonparametric regression procedure.

\subsection{First--step estimation of eigenfunction proxies \label{sect:FM}}

This section presents our first-step estimators of the leading
eigenfunctions of the SVD\ representation of $g$ in (\ref{eqn: SV
decomposition}). Substituting (\ref{eqn: SV decomposition}) into (\ref%
{eqn:model}) and truncating the singular value decomposition at some $%
R_{1}\geq 1$ chosen by the econometrician yields%
\begin{equation*}
Y_{it}=\beta ^{\prime }X_{it}+\sum_{r=1}^{R_{1}}\lambda
_{ir}f_{tr}+e_{R_{1},it}+\varepsilon _{it},
\end{equation*}%
where $\lambda _{ir}=\sigma _{r}u_{r}(\alpha _{i})$, $f_{tr}=v_{r}(\gamma
_{t})$, and $e_{R_{1},it}=\sum_{r=R_{1}+1}^{\infty }\lambda _{ir}f_{tr}$ is
the truncation error. We follow \cite{freeman2023linear} and obtain
first-step estimates of $\lambda \in \mathbb{R}^{N\times R_{1}}$ and $f\in 
\mathbb{R}^{T\times R_{2}}$ by applying the estimator of \cite{Bai2009} to
above approximate factor model, 
\begin{equation}
(\hat{\beta}_{LS},\hat{\lambda}_{1:R_{1}},\hat{f}_{1:R_{1}})=\arg \min 
_{\substack{ \lambda _{1:R_{1}}\in \mathbb{R}^{N\times R_{1}},  \\ %
f_{1:R_{1}}\in \mathbb{R}^{T\times R_{1}}}}\min_{\beta \in \mathbb{R}%
^{K}}\sum_{i,t}\left( Y_{it}-\beta ^{\prime
}X_{it}-\sum_{r=1}^{R_{1}}\lambda _{ir}f_{tr}\right) ^{2}
\label{eq: beta-hat-LS}
\end{equation}%
where we impose the normalisations from \cite{Bai2009,bai2023approximate}, $%
N^{-1}\sum_{i}\lambda _{1:R_{1},i}\lambda _{1:R_{1},i}^{\prime }$ is
diagonal and $T^{-1}\sum_{t}f_{1:R_{1},t}f_{1:R_{1},t}^{\prime }=\mathbb{I}%
_{R_{1}}$.

Above algorithm delivers estimators of the leading $R_{1}$ eigenfunctions, $%
\hat{\lambda}_{1:R_{1}}$ and $\hat{f}_{1:R_{1}}$. Importantly, compared to
the alternative estimation procedure of \cite{beyhum2025inference}, the
algorithm effectively reduces the dimension of the fixed effects to be
controlled for since it takes into account the presence of $\beta ^{\prime
}X_{it}$ in the model. If the DGP for $X_{it}$ takes the form (\ref{eq: X
DGP Beyhum}), then the algorithm of \cite{beyhum2025inference} will
generally take into account not only the fixed effects $(\alpha _{i},\gamma
_{t})$ that enter the model for $Y_{it}$ but also the fixed effects $(\alpha
_{i}^{\left( 2\right) },\gamma _{t}^{\left( 2\right) })$ that are specific
to $X_{it}$. In contrast, above algorithm does not suffer from such
shortcomings since it controls for $\beta ^{\prime }X_{it}$ and so directly
targets $(\alpha _{i},\gamma _{t})$.

The algorithm also delivers estimators $\hat{\beta}_{LS}$ and $\hat{\Gamma}%
_{it}=\sum_{r=1}^{R_{1}}\hat{\lambda}_{ir}\hat{f}_{tr}$ of $\beta $ and $%
\Gamma _{it}$, respectively. However, these estimators suffer from large
errors due to the truncation error $e_{R,it}$ and so $\hat{\beta}_{LS}$ will
not enjoy $\sqrt{NT}$-asymptotic normality: The factor model approach to
Neyman Orthogonal estimator uses $\Gamma -\hat{\Gamma}=(\mathbb{I}-P_{\hat{%
\lambda}})g(\alpha ,\gamma )(\mathbb{I}-P_{\hat{f}})$, which leads to, 
\begin{equation*}
\frac{1}{\sqrt{NT}}\Vert \Gamma -\hat{\Gamma}\Vert \approx
O_{p}(R_{1}^{1-\rho })+O_{p}(R_{1}^{2+2\rho }\min \{N,T\}^{-1}).
\end{equation*}%
c.f. \cite{bai2023approximate} and Section \ref{sect:FirstAsymptotics}. The
first term $O_{p}(R_{1}^{1-\rho })$ is due to the truncation error and
shrinks as $R_{1}$ grows, and is also decreasing in smoothness $\rho $, i.e.
smoother functions lead to smaller bias. Variance term $O_{p}(R_{1}^{2+2\rho
}\min \{N,T\}^{-1})$, however, is increasing in model complexity $R_{1}$,
and also increasing in smoothness $\rho $. Setting $R_{1}=c\cdot \min
\{N,T\}^{\frac{1}{1+3\rho }}$ to balance bias and variance leads to $\frac{1%
}{\sqrt{NT}}\Vert \Gamma -\hat{\Gamma}\Vert =\min \{N,T\}^{\frac{1-\rho }{%
1+3\rho }}$. Hence, we can at best obtain $||\hat{\beta}_{LS}-\beta
||=O_{P}\left( \min \{N,T\}^{\frac{2-2\rho }{1+3\rho }}\right) $ which is
too slow for $\sqrt{NT}$-inference.

Figure \ref{fig:signalNoise} shows the two limiting components to the
first-step estimator. Either singular values decay too slowly, and this
leaves a large bias from $e_{it}$, or singular values decay too fast and
variation from $\sum_{r=1}^{R_{1}}\lambda _{ir}f_{tr}$ is indiscernible from 
$\varepsilon _{it}$, leading to higher variance. Figure \ref{fig:signalNoise}
shows a signal to noise comparison for the distribution of singular values
generated from a function and from Gaussian noise. When signal drops below
noise, factors from the function are no longer estimable, or estimated with
noise. We see in the left panel that for non-smooth functions with slowly
decaying singular values, we can potentially estimate and control for many
factors, however, there is a large error that persists in the tail of the
approximation. This leads to large bias. In the right panel, whilst the
approximation error in the tail is small, variation from the function
quickly becomes indiscernible from noise, hence estimates are noisy. As as a
consequence, in either scenario, the over--all estimation error of $\hat{g}$
is too large and do not vanish at the rate required in Theorem \ref{Th: NO
general}.

\begin{figure}[tbp]
\centering
\includegraphics[width=0.45\linewidth]{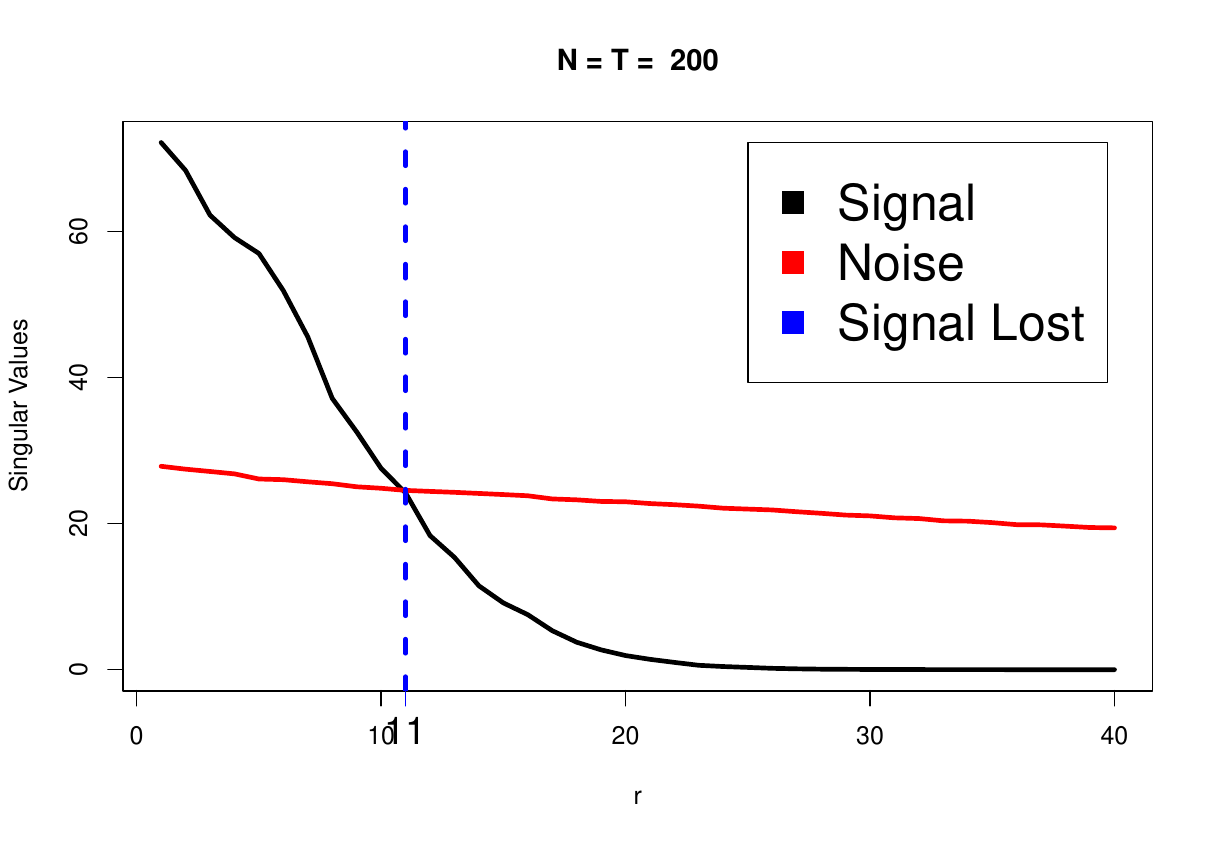} %
\includegraphics[width=0.45\linewidth]{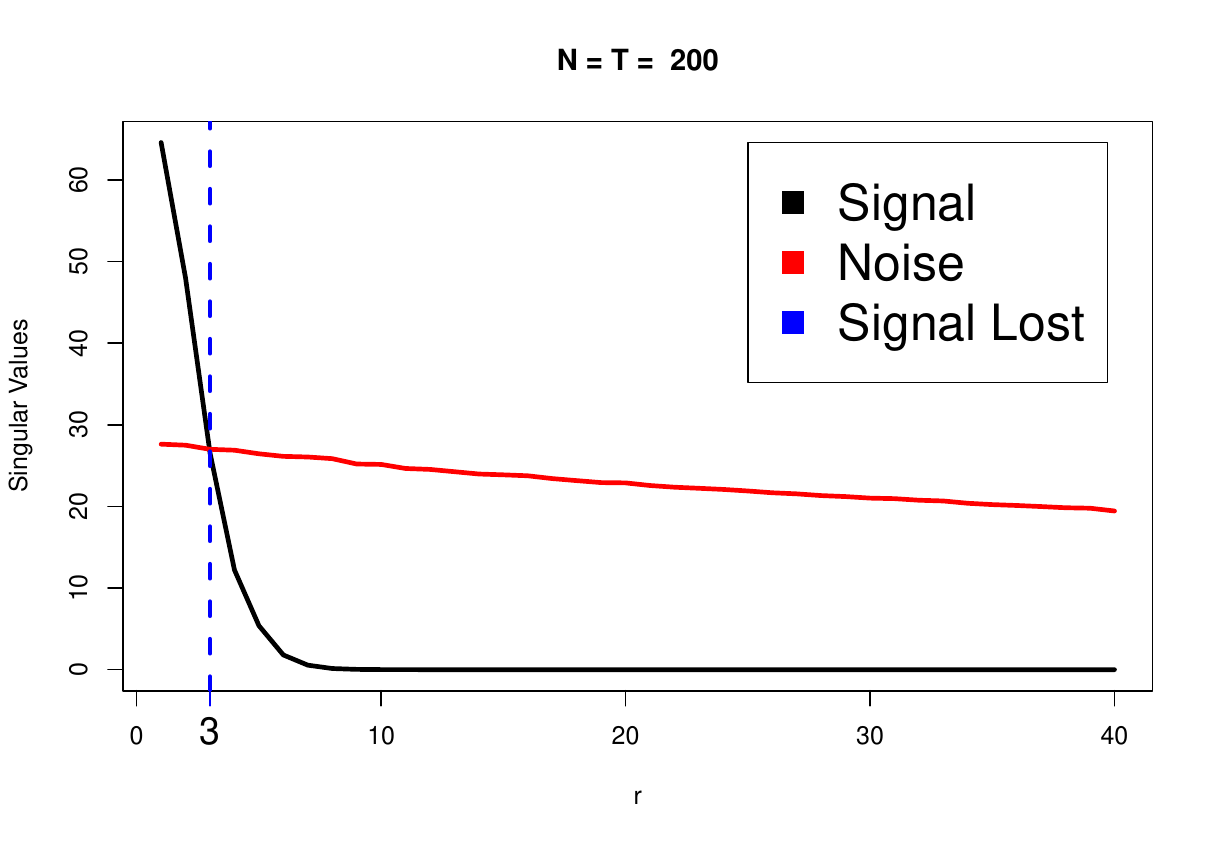}
\caption{Singular value decay: Function Signal vs. Gaussian Noise \newline
{\protect\footnotesize Primitive function: $g(a,b)=(\protect\theta \protect%
\sqrt{2\protect\pi })^{-1}\exp (-(a-b)^{2}/\protect\theta ^{2})$ with
different $\protect\theta $: higher $\protect\theta $ implies less smooth
function. $a,b\sim U(-1,1)$.} }
\label{fig:signalNoise}
\end{figure}

\subsection{Second--Step Estimation of $g_{0,Z}$}

We here develop two nonparametric regression algorithms that both take as
input the subset of the first $R_{2}$ of the $\ R_{1}$ estimated leading
eigenfunctions in the first step, where again $R_{2}$ is chosen by the
econometrician. With some abuse of notation, we let $\hat{\lambda}_{i}=(\hat{%
\lambda}_{1,i},.....,\hat{\lambda}_{R_{2},i})^{\prime }$ and $\hat{f}_{t}=(%
\hat{f}_{1,t},.....,\hat{f}_{R_{2},t})^{\prime }$ denote these final $R_{2}$
estimated eigenfunctions. We will require $R_{2}\geq R_{0}$ so that we can
apply Theorems \ref{thm:eigen inver} and \ref{thm:additive model} and obtain
consistent estimators based on the representation results in equations (\ref%
{eq: g_0 index}) and (\ref{eq: additive model}), respectively.

\subsubsection{Multi--index eigenfunction regression\label{sect:MI}}

The following kernel regression estimator is a consistent estimator of $%
g_{0,Z,k}\left( \lambda ,f\right) $ defined in \eqref{eq: g_Z kernel reg}, 
\begin{equation*}
\hat{g}_{0,Z,k}\left( \lambda ,f;\hat{A}_{k},\hat{B}_{k}\right) =\frac{%
\sum_{i=1}^{n}\sum_{t=1}^{T}Z_{k,it}K_{\hat{A}_{k},h_{1}}(\hat{\lambda}%
_{i}-\lambda )K_{\hat{B}_{k},h_{2}}(\hat{f}_{t}-f)}{\sum_{i=1}^{n}%
\sum_{t=1}^{T}K_{\hat{A}_{k},h_{1}}(\hat{\lambda}_{i}-\lambda )K_{\hat{B}%
_{k},h_{2}}(\hat{f}_{t}-f)},
\end{equation*}%
where $K_{\hat{A}_{k},h_{1}}(\hat{\lambda}_{i}-\lambda ):=K_{1}(\hat{A}_{k}\{%
\hat{\lambda}_{i}-\hat{\lambda}_{i_{0}}\}/h_{1})/h_{1}^{d_{\lambda }}$, $K_{%
\hat{B}_{k},h_{2}}(\hat{f}_{t}-f)=K_{2}(\hat{B}_{k}\{\hat{f}%
_{t}-f\}/h_{2})/h_{2}^{d_{f}}$, $h_{1}$ and $h_{2}$ are bandwidths, $K_{1}$
and $K_{2}$ are kernels, and%
\begin{equation*}
(\hat{A}_{k},\hat{B}_{k})=\arg \min_{A,B}\sum_{i=1}^{n}\sum_{t=1}^{T}\left(
Z_{k,it}-\hat{g}_{Z,k}(\hat{\lambda}_{i},\hat{f}_{t};A,B)\right) ^{2}.
\end{equation*}

Above estimator is a so--called multi--index regression estimator with
generated regressors $\hat{\lambda}_{i}$ and $\hat{f}_{t}$. When the
regressors are observed without errors, this has estimator has been analyzed
in, among others, \cite{Ma2012} and \cite{Ma2013}. This estimator will
suffer from a curse-of--dimensionality of order $\max \left\{ p_{\alpha
},p_{\gamma }\right\} $.

\subsubsection{Additive eigenfunction regression\label{sect:WW}}

Theorem \ref{thm:additive model} allows us to use additive nonparametric
estimation techniques to remove the curse--of--dimensionality that above
multi--index estimator suffers from. We here propose to employ the
kernel--based backfitting algorithm for nonparametric additive models to
remove this curse of dimensionality. We refer to \cite{opsomer1999root} for
details on this algorithm in a cross--sectional setting. We here extend it
to a panel data setting.%

Figure~\ref{fig:svDecay} shows a comparison of the linear projection method
from a factor model versus a nonparametric difference, where true $%
g(\alpha_i,\gamma_t)$ is observed. We see that whilst error for the first
three directly estimated terms is zero for linear projection methods, and
positive for nonparametric methods, there is a huge benefit in the tail of
residual singular values. By definition, factor components are exactly
orthogonal to tail eigenfunctions, however the nonparametric estimator can
still difference these out as a result of the additive model. In contrast,
nonparametric methods allow for infinite linear dimension, albeit with some
additional smoothness conditions.

\begin{figure}[tbp]
\centering
\includegraphics[width=0.7\linewidth]{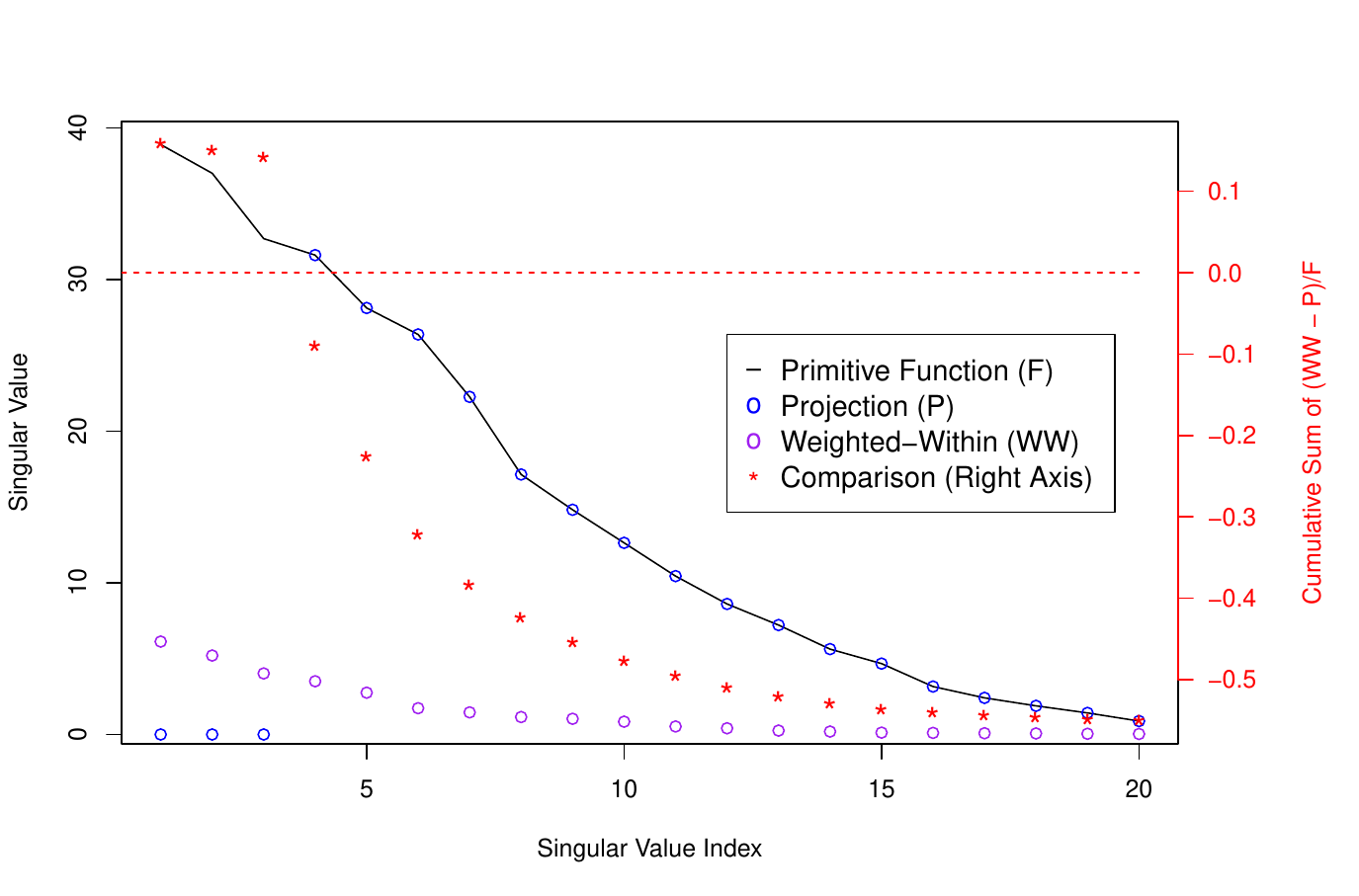}
\caption{Singular value decay: Function vs. Projection vs. Weighted-within 
\newline
{\protect\footnotesize Primitive function: $g(a,b)=(\protect\theta \protect%
\sqrt{2\protect\pi })^{-1}\exp (-(a-b)^{2}/\protect\theta ^{2})$ with $%
\protect\theta =1/8$. $a,b\sim U(-1,1)$. Comparison measures the cumulative
comparative performance of WW with respect to P. Cumulative sum on right
axis measures cumulative sum of the difference of singular values divided by
cumulative sum of the primitive function singular values. } }
\label{fig:svDecay}
\end{figure}

Here we combine the first-step estimators with a backfitting algorithm that
iterates over weighted-within transformations from \cite%
{freeman2022multidimensional} to obtain our final estimates. The
weighted-within transformation performs dimension specific differencing of
the fixed-effects. Take the set of estimates $\{\hat{\lambda}_{ir},\hat{f}%
_{tr}\}_{r=1}^{R_{1}}$. Smoother weights are formed for the $i$,
respectively $t$ direction as, 
\begin{equation*}
S_{ij,r}^{(1)}=\frac{K_{1,h_{1}}(\hat{\lambda}_{ir}-\hat{\lambda}_{jr})}{%
\sum_{j}K_{1,h_{1}}(\hat{\lambda}_{ir}-\hat{\lambda}_{jr})},\text{ \ \ }%
S_{ts,r}^{(2)}=\frac{K_{2,h_{2}}(\hat{f}_{tr}-\hat{f}_{sr})}{%
\sum_{s}K_{2,h_{2}}(\hat{f}_{tr}-\hat{f}_{sr})}.
\end{equation*}%
Dependent and independent variables are sequentially residualised with
respect to weighted-differences according to the sequence of weights for $%
r=1,\dots ,R_{2}$. For our asymptotic theory, we require a backfitting
update. With $\check{Y}^{(0)}=Y-\bar{Y}$, step $\ell $ in the backfitting
iteration can be written, 
\begin{equation*}
\check{Y}^{(\ell )}=\prod_{r=1}^{R_{2}}\big(\mathbb{I}_{N}-S_{r}^{(1)}\big)%
\check{Y}^{(\ell -1)}\prod_{r^{\prime }=1}^{R_{2}}\big(\mathbb{I}%
_{T}-S_{r^{\prime }}^{(2)}\big).
\end{equation*}

The algorithm works as follows. Take a generic smoothing function $s:\mathbb{%
R}^{n}\times \mathbb{R}^{n}\rightarrow \mathbb{R}^{n\times n}$, and $\hat{%
\lambda}_{ir}$ and $\hat{f}_{tr}$, $r=1,...,R_2$, We use backfitting
iteration from Algorithm~\ref{alg:backfitting}:

\begin{algorithm}
\caption{Additive Eigenfunction Backfitting}\label{alg:backfitting}
\begin{algorithmic}[1]
  \STATE Initialise $\tilde Z_0 = \bar Z$, residual $\check Z_0 = Z - \bar Z$, $\hat{%
\lambda}_{ir}$ and $\hat{f}_{tr}$. Smoothing function $s:\mathbb{%
R}^n\times\mathbb{R}^n\to \mathbb{R}^{n\times n}$, 
  \STATE For $r = 1,\dots, R_2$,
\begin{enumerate}[(i).]
\item Set $S_r^{(1)} = s(\hat\lambda_{r},\hat\lambda_{r})$, $S_r^{(2)} = s(\hat f_{r},\hat f_{r})$
\item Set $\tilde Z_r = S_r^{(1)}\check Z_{r-1} + \check Z_{r-1}S_r^{(2)} -
S_r^{(1)}\check Z_{r-1}S_r^{(2)} $ and residual $\check Z_{r} = \check
Z_{r-1} - \tilde Z_r$
\end{enumerate}
\STATE Do pooled OLS of $\check{Y}$ on $\check X$ to obtain $\hat\beta$. Take 
$\hat{\lambda}_{i},\hat{f}_{t}$ as leading $R_2$ eigenvectors of $Y - X\cdot\hat\beta$ 
\STATE Iterate steps 2,3 to convergence. 
\end{algorithmic}
\end{algorithm}

If convergence occurs at $L$ iterations, the procedure can be written, 
\begin{equation*}
\check{Y}=\prod_{\ell }^{L}\left[ \prod_{r=1}^{R_{2}}\big(\mathbb{I}%
_{N}-S_{r}^{(1)}\big)\right] \cdot (Y-\bar{Y})\cdot \prod_{\ell }^{L}\left[
\prod_{r^{\prime }=1}^{R_{2}}\big(\mathbb{I}_{T}-S_{r^{\prime }}^{(2)}\big)%
\right] .
\end{equation*}%
We define the estimator under these weighted differences as $\hat{\beta}_{W}$%
: 
\begin{equation*}
\hat{\beta}_{NO}=\left( \sum_{it}\check{X}_{it}\check{X}_{it}^{\prime
}\right) ^{-1}\sum_{it}\check{X}_{it}\check{Y}_{it}
\end{equation*}

\section{Asymptotic Theory\label{sect:Asymptotics}}

In this section we formally analyse asymptotic rates for the two--step
estimators of $g_{Z}$.

\subsection{First--step estimators of eigen proxies}

\label{sect:FirstAsymptotics} \cite{freeman2023linear} derive convergence
rates of the least squares estimator in Section~\ref{sect:FM}. These,
however, can be improved by using refined estimates from \cite%
{griebel2019singular}, which gives faster decay in the singular values for
the tail term, $\sum_{r=R_{1}+1}^{\infty }\lambda _{ir}f_{tr}$. Let $\rho
:=p/\min \{d_{\alpha },d_{\gamma }\}$, where $p$ is introduced in %
\eqref{eqn:HilbertCondition}, and $\{d_{\alpha },d_{\gamma }\}$ are
dimensions of $\alpha $, and $\gamma $. This comprises Lemma~\ref%
{lemma:frewei2023}.

\begin{assumption}
\label{ass:singularvalues} For $\rho >3/2$, as $N,T\rightarrow \infty $, $%
\sigma _{r}^{2}(\Gamma )$ the singular values of $\Gamma $ descending in $r$%
, then, 
\begin{equation*}
\sigma _{r}^{2}(\Gamma )=cNTr^{-2\rho -1}\text{ as }r\rightarrow \infty
,\quad \text{ such that, }\quad \frac{1}{NT}\sum_{r=R_{1}+1}^{\min
\{N,T\}}\sigma _{r}^{2}(\Gamma )=O_{p}(R_{1}^{-2\rho })\text{ as }%
R_{1}\rightarrow \infty .
\end{equation*}
\end{assumption}

Assumption~\ref{ass:singularvalues} refines Lemma 1 in \cite%
{freeman2023linear} by applying Corollary 3.4 and Proposition 3.5 from \cite%
{griebel2019singular}. Here we state regularity conditions from \cite%
{freeman2023linear,MoonWeidner2015} on covariates and $\varepsilon$.

\begin{assumption}[Bounded norms covariates and $\protect\varepsilon$]
\label{ass:boundedNorms} For $k = 1,\dots,dim(X_{it})$, and $e: =
vec(\varepsilon)$, 
\begin{align*}
\frac{1}{NT}\sum_{it} X_{it,k}^2 = O_p(1), & & \|\varepsilon\|_2 = O_p(
\max\{N,T\}^{1/2}), & & \| \mathbb{E}[ee^{\prime }|X]\|_2 = O_p(1) .
\end{align*}
\end{assumption}

Condition $\| \mathbb{E}[ee^{\prime }|X]\|_2 = O_p(1)$ bounds cross and
serial correlation. For example, i.i.d. $\varepsilon_{it}$ implies $\mathbb{E%
}[ee^{\prime }|X] = \mathbb{E}\varepsilon_{it}^2 \mathbb{I}_{NT}$ such that $%
\| \mathbb{E}[ee^{\prime }|X]\|_2 = \mathbb{E}\varepsilon_{it}^2 = O(1)$.
With weak correlation in $i$ and $t$ such that $\sum_{j\neq i}\sum_{s\neq t}|%
\mathbb{E}[\varepsilon_{it}\varepsilon_{js}|X]| = O(1)$, then Ostrowski
theorem implies $\| \mathbb{E}[ee^{\prime }|X]\|_2 = O_p(1)$.

\begin{assumption}[Weak exogeneity]
\label{ass:weakExogeneity} 
\begin{equation*}
\sum_{it}X_{it,k}\varepsilon _{it}=O_{p}(\sqrt{NT})\,\,\text{ for }%
\,\,k=1,\dots ,\dim (X_{it}).
\end{equation*}
\end{assumption}

\begin{assumption}[Non-collinearity in $X$]
\label{ass:nonCollinearity} For linear combinations $\delta \cdot
X:=\sum_{k}\delta _{k}X_{k}$ such that $\Vert \delta \Vert =1$, assume
exists $b>0$: 
\begin{equation*}
\min_{\delta \in \mathbb{R}^{K}:\Vert \delta \Vert =1}\sum_{r\geq
2R_{1}+1}\sigma _{r}\left( \frac{\delta \cdot X}{NT}\right) \geq
b\,\,\,\,\,\,wpa1.
\end{equation*}
\end{assumption}

We first analyse the first-step estimators in (\ref{eq: beta-hat-LS}):

\begin{lemma}
\label{lemma:frewei2023} 
Impose Assumption \ref{ass:singularvalues}, \ref{ass:boundedNorms}, \ref%
{ass:weakExogeneity}, and \ref{ass:nonCollinearity}. 
For ${R_1}=\min \{N,T\}^{1/2\rho }$, the estimator in (\ref{eq: beta-hat-LS}%
) satisfies%
\begin{equation*}
\widehat{\beta }_{LS}-\beta _{0}=O_{p}(R_1^{1-\rho })+O_{p}(R_1\min
\{N,T\}^{-1/2})=O_{p}\left( \min \{N,T\}^{\frac{1-\rho }{2\rho }}\right) .
\end{equation*}
\end{lemma}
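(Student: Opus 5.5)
We follow the least-squares approximate factor argument of \cite{MoonWeidner2015} and \cite{freeman2023linear}, treating the truncation error $e_{R_1}=\sum_{r>R_1}\lambda_{r}f_{r}^{\prime}$ as an additional, non-stochastic-rate nuisance component. Concentrating out $(\lambda,f)$ in (\ref{eq: beta-hat-LS}) gives the profile objective
\begin{equation*}
L_{NT}(\beta)=\frac{1}{NT}\sum_{r=R_1+1}^{\min\{N,T\}}\sigma_r^2\big(Y-\beta\cdot X\big),
\end{equation*}
i.e.\ the sum of the non-leading squared singular values of the residual matrix. Write $Y-\beta\cdot X=\Gamma+\varepsilon-(\beta-\beta_0)\cdot X$ and decompose $\Gamma=\Gamma_{R_1}+e_{R_1}$ with $\mathrm{rank}(\Gamma_{R_1})\le R_1$.

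\textbf{Lower bound.} For any $\beta$, $L_{NT}(\beta)\ge \frac{1}{NT}\sum_{r>2R_1}\sigma_r^2\big((\beta-\beta_0)\cdot X - e_{R_1}-\varepsilon\big)$ after removing the rank-$R_1$ component $\Gamma_{R_1}$; this uses Weyl's inequality, at the cost of doubling the index shift, which is exactly why Assumption \ref{ass:nonCollinearity} is stated with $2R_1+1$. Expanding the square and using Assumption \ref{ass:nonCollinearity} gives a quadratic term of at least $b\|\beta-\beta_0\|^2$. The cross terms are bounded as follows. Against $\varepsilon$, Assumption \ref{ass:weakExogeneity} gives an inner-product part of order $O_p((NT)^{-1/2})\|\beta-\beta_0\|$, and the projection part is bounded by $R_1\|\varepsilon\|_2\|X_k\|_F/(NT)=O_p(R_1\min\{N,T\}^{-1/2})\|\beta-\beta_0\|$, using Assumption \ref{ass:boundedNorms} and the fact that a rank-$R_1$ projection can only capture a trace of at most $R_1$ times the operator norm. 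Against $e_{R_1}$, Cauchy--Schwarz gives $\|\beta-\beta_0\|\,\|X\|_F\|e_{R_1}\|_F/(NT)$, where $\|e_{R_1}\|_F^2/(NT)=O_p(R_1^{-2\rho})$ by Assumption \ref{ass:singularvalues}.

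\textbf{Upper bound.} Evaluating at $\beta_0$ with the truncated factors gives $L_{NT}(\beta_0)\le \frac{1}{NT}\|e_{R_1}+\varepsilon\|_F^2$ minus its best rank-$R_1$ fit. The terms not depending on $\beta$ cancel between the lower and upper bounds, except for cross terms between $e_{R_1}$ and the projection of $\varepsilon$; these are controlled by $\|\varepsilon\|_2$ and contribute $O_p(R_1\min\{N,T\}^{-1/2})$ times lower-order factors.

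\textbf{Conclusion.} Comparing $L_{NT}(\hat\beta_{LS})\le L_{NT}(\beta_0)$ with the lower bound yields
\begin{equation*}
b\|\hat\beta-\beta_0\|^2\le \|\hat\beta-\beta_0\|\big(O_p(R_1^{1-\rho})+O_p(R_1\min\{N,T\}^{-1/2})\big)+\text{l.o.}
\end{equation*}
Solving this inequality gives the first claimed rate. The $R_1^{1-\rho}$ term, rather than $R_1^{-\rho}$, arises because the tail $e_{R_1}$ interacts with up to $R_1$ estimated factor directions; the bound $\sum_{r\le R_1}\sigma_r(\cdot)\le\sqrt{R_1}\,\|\cdot\|_F$, applied twice in the perturbation expansion, produces the extra factor of $R_1$. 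Plugging in $R_1=\min\{N,T\}^{1/(2\rho)}$ gives $R_1^{1-\rho}=\min\{N,T\}^{(1-\rho)/(2\rho)}$ and $R_1\min\{N,T\}^{-1/2}=\min\{N,T\}^{(1-\rho)/(2\rho)}$, so the two terms balance and the second equality holds.

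\textbf{Main obstacle.} The hard step is the perturbation argument for the profiled eigenvalue sum when the number of factors $R_1$ diverges. The fixed-$R$ expansions of \cite{MoonWeidner2015} do not apply directly, because they require an eigen-gap that here shrinks with $\sigma_{R_1}^2\asymp NT R_1^{-2\rho-1}$. I would therefore try to avoid expansions entirely and rely only on Weyl-type inequalities and trace bounds of the form $\sum_{r\le R}\sigma_r(A)\le R\|A\|_2$. This is sufficient to get rates, though not distributional results. Assumption $\rho>3/2$ guarantees that $R_1\min\{N,T\}^{-1/2}\to0$, and that the eigen-gap dominates $\|\varepsilon\|_2^2$ where it is needed in the upper-bound step.
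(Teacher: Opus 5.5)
Your proposal is correct and is the argument the paper relies on. The paper's proof simply defers to Theorem~1 of \cite{freeman2023linear}, which is this Moon--Weidner profile-objective comparison (Weyl shift to $2R_1$, non-collinearity lower bound, $L_{NT}(\hat\beta_{LS})\le L_{NT}(\beta_0)$) with $\|S\|_F^2=O_p(NTR_1^{-2\rho})$ plugged in for the tail. The paper's $R_1^{1-\rho}$ is just the crude bound $R_1\|S\|_F/\sqrt{NT}$, whereas your own Cauchy--Schwarz step gives $O_p(R_1^{-\rho})$ for that term, so the stated rate holds a fortiori and your ``extra factor of $R_1$'' explanation is unnecessary.

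A few side claims should be fixed, though none affects the rate. First, take $\Gamma_{R_1}$ to be the best rank-$R_1$ approximation of the matrix $\Gamma$, so that Assumption~\ref{ass:singularvalues} controls $\|e_{R_1}\|_F$ directly. Second, the upper bound is simply $L_{NT}(\beta_0)\le\|e_{R_1}+\varepsilon\|_F^2/(NT)$, not that minus its best rank-$R_1$ fit. The non-cancelling constant is then $\frac{1}{NT}\sum_{r\le 2R_1}\sigma_r^2(e_{R_1}+\varepsilon)=O_p(R_1/\min\{N,T\}+R_1^{-2\rho})$, which contains the squared top singular values of $\varepsilon$ and not only cross terms. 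Third, no eigen-gap is used anywhere, which is fortunate because your claim that $\rho>3/2$ makes it dominate $\|\varepsilon\|_2^2$ is false: $\sigma_{R_1}^2\asymp\max\{N,T\}\min\{N,T\}^{-1/(2\rho)}$, while $\|\varepsilon\|_2^2\ge\|\varepsilon\|_F^2/\min\{N,T\}\asymp\max\{N,T\}$.
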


For $N\sim T$, this is at best $O_{p}(NT)^{-1/4}$ when $\rho \rightarrow
\infty $; this is too slow for standard inference tools to be valid.

In Appendix~\ref{sect:AppProofs} we state regularity conditions on $\{\eta
_{it},\varepsilon _{it}\}$ and their correlation with $\{\lambda
_{i},f_{t}\} $ to apply results from \cite{Bai2009} and \cite%
{bai2023approximate} for factor model estimates. Results in \cite%
{choi2025high} may also apply to this setting. Define, 
\begin{equation}  \label{eqn:factorRates}
\xi _{f}^{2}:=\frac{1}{T}\sum_{t=1}^{T}\Vert \hat{f}_{t}-H^{\prime
}f_{t}\Vert ^{2},\text{ \ \ }\xi _{\lambda }^{2}:=\frac{1}{N}%
\sum_{i=1}^{N}\Vert \hat{\lambda}_{i}-(H^{\prime })^{-1}{\lambda }_{i}\Vert
^{2},
\end{equation}%
where $H$ are rotation matrices defined in the appendix, see \cite%
{bai2023approximate}.\footnote{%
These are not important to our analysis so we refer to discussion found in
existing literature.}

\begin{lemma}
\label{lemma:bai2009fe} Impose Assumption \ref{ass:singularvalues}, \ref%
{ass:boundedNorms}, \ref{ass:weakExogeneity}, \ref{ass:nonCollinearity}, \ref%
{ass:AppNorm}, \ref{ass:AppError}, and \ref{ass:AppCorrelation}. When $%
R_2\lesssim \min \{N,T\}^{\frac{1}{2\rho +1}}$ with $\rho >2$, and a
preliminary estimator $\widehat{\beta }$, 
\begin{equation*}
\xi _{f}^{2}=\xi _{\lambda }^{2}=O_{p}\left( \Vert \widehat{\beta }-\beta
^{0}\Vert ^{2}+R_2^{2\rho +1}\min \{N,T\}^{-1}+R_2^{1-4\rho }\right) ,
\end{equation*}%
Further, for $N\sim T$, set $R_2=\min \{N,T\}^{1/6\rho }$, such that, 
\begin{align*}
& {R_2^{2}}\cdot \xi _{f}^{2}\xi _{\lambda }^{2}=O_{p}\left( \min \{N,T\}^{%
\frac{1}{3\rho }}\Vert \widehat{\beta }-\beta ^{0}\Vert ^{4}+\min \{N,T\}^{%
\frac{2-4\rho }{3\rho }}\right) . \\
& \frac{R_2^{6}}{NT}\cdot \xi _{f}^{2}\xi _{\lambda }^{2}=O_{p}\left( \min
\{N,T\}^{\frac{1-2\rho }{\rho }}\Vert \widehat{\beta }-\beta ^{0}\Vert
^{4}+\min \{N,T\}^{\frac{4-10\rho }{3\rho }}\right) .
\end{align*}
\end{lemma}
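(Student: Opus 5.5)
The plan is to follow the factor-model argument of \cite{Bai2009} and \cite{bai2023approximate}, with the truncated part of the SVD treated as an extra error term.

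\textbf{Step 1 (residual decomposition).} Fix $R_2$ and write $W:=Y-X\cdot\widehat\beta=\lambda_{1:R_2}f_{1:R_2}^{\prime}+E$, where
\begin{equation*}
E=-X\cdot(\widehat\beta-\beta^0)+e_{R_2}+\varepsilon ,\qquad e_{R_2,it}=\sum_{r>R_2}\lambda_{ir}f_{tr}.
\end{equation*}
The estimates $\hat f$ are $\sqrt{T}$ times the leading $R_2$ eigenvectors of $W^{\prime}W/(NT)$, with eigenvalue matrix $V_{NT}$. Setting $H=(\lambda^{\prime}\lambda/N)(f^{\prime}\hat f/T)V_{NT}^{-1}$ gives the usual identity
\begin{equation*}
\hat f-fH=\frac{1}{NT}\bigl[f\lambda^{\prime}E+E^{\prime}\lambda f^{\prime}+E^{\prime}E\bigr]\hat f\,V_{NT}^{-1}.
\end{equation*}

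\textbf{Step 2 (bounding each piece).} I bound $T^{-1}\Vert\cdot\Vert_F^2$ of each term.

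The $\beta$ piece contributes $O_p(\Vert\widehat\beta-\beta^0\Vert^2)$, using Assumption \ref{ass:boundedNorms}.

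The $\varepsilon$ piece uses $\Vert\varepsilon\Vert_2=O_p(\max\{N,T\}^{1/2})$ together with Assumptions \ref{ass:AppError} and \ref{ass:AppCorrelation} on the cross moments of $\lambda^{\prime}\varepsilon$. This gives $O_p(R_2/\min\{N,T\})$ per unit eigenvalue.

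The $e_{R_2}$ piece is where the orthogonality of the SVD matters. Since $\lambda_{1:R_2}^{\prime}e_{R_2}=0$ exactly, the cross terms $f\lambda^{\prime}e_{R_2}$ vanish. What remains is quadratic: $e_{R_2}^{\prime}e_{R_2}$, whose norm is controlled by $\sigma_{R_2+1}^2\sim NT R_2^{-2\rho-1}$ by Assumption \ref{ass:singularvalues}. Combined with a Davis--Kahan type step, this gives a squared contribution of order $R_2^{1-4\rho}$.

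\textbf{Step 3 (main obstacle: the smallest retained eigenvalue).} The hardest step is controlling $V_{NT}^{-1}$. Its smallest entry is $\sigma_{R_2}^2/(NT)\asymp R_2^{-2\rho-1}$, so it decays rather than staying bounded. Every noise term therefore gets inflated by $R_2^{2\rho+1}$. This inflation produces the variance term $R_2^{2\rho+1}\min\{N,T\}^{-1}$.

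For this to be valid, the noise must not swamp the signal gap, i.e. $\min\{N,T\}^{-1}\ll R_2^{-2\rho-1}$. That is exactly the restriction $R_2\lesssim\min\{N,T\}^{1/(2\rho+1)}$, and $\rho>2$ keeps the remaining cross terms dominated. I expect to need the rotation $H$ and its inverse to be bounded uniformly in $R_2$, which I would get from the normalisation in Assumption \ref{ass:AppNorm}.

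The bound for $\xi_\lambda$ follows symmetrically from $\hat\lambda=W\hat f/T$. This yields
\begin{equation*}
\xi_\lambda^2,\ \xi_f^2=O_p\bigl(\Vert\widehat\beta-\beta^0\Vert^2+R_2^{2\rho+1}\min\{N,T\}^{-1}+R_2^{1-4\rho}\bigr).
\end{equation*}

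\textbf{Step 4 (substituting the choice of $R_2$).} Take $R_2=\min\{N,T\}^{1/6\rho}$ and use $(a+b+c)^2\lesssim a^2+b^2+c^2$ to square the bound. Then:
\begin{itemize}
\item the $\beta$ term gives $R_2^2\Vert\widehat\beta-\beta^0\Vert^4=\min\{N,T\}^{1/3\rho}\Vert\widehat\beta-\beta^0\Vert^4$;
\item the bias term gives $R_2^{2+2-8\rho}=\min\{N,T\}^{(4-8\rho)/6\rho}=\min\{N,T\}^{(2-4\rho)/3\rho}$, as stated;
\item the variance term $R_2^{2+4\rho+2}\min\{N,T\}^{-2}$ is of smaller order under $\rho>2$.
\end{itemize}
The second display follows by multiplying by $R_2^4/(NT)$ and using $N\sim T$.
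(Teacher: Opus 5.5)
Your overall route is the paper's: the eigen-identity of \cite{Bai2009} (Proposition A.1) with the SVD tail added to the error, then inflation of the idiosyncratic rate by the inverse of the smallest retained eigenvalue, $\Vert\Sigma_{R_2}^{-2}\Vert\lesssim R_2^{2\rho+1}$, as in \cite{bai2023approximate}, then substitution of $R_2$.

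The step that does not go through as written is the tail term. Your Step~1 identity, and your Step~3 remark that every term is inflated, put $V_{NT}^{-1}$ on it. Your operator-norm bound then only gives $T^{-1/2}\Vert (NT)^{-1}e_{R_2}^{\prime}e_{R_2}\hat{f}V_{NT}^{-1}\Vert_F\lesssim\sqrt{R_2}\,(\sigma_{R_2+1}/\sigma_{R_2})^2\asymp\sqrt{R_2}$, which does not vanish. The exact orthogonality you invoke makes the problem structural. Since $e_{R_2}f=0$, this piece equals $(NT)^{-1}e_{R_2}^{\prime}e_{R_2}(\hat{f}-fH)V_{NT}^{-1}$. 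That is a feedback term in $\hat{f}-fH$ itself, with coefficient $(\sigma_{R_2+1}/\sigma_{R_2})^2\to 1$. Absorbing it is exactly the Davis--Kahan step, which runs on the gap $\sigma_{R_2}^2-\sigma_{R_2+1}^2\asymp NT R_2^{-2\rho-2}$, a factor of order $R_2$ smaller than $\sigma_{R_2}^2$. This costs extra powers of $R_2$ on the noise contribution rather than producing a separate additive $R_2^{1-4\rho}$ term, and nothing in your sketch shows how the stated rate would come out.

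The paper gets the display differently. It applies the eigenvalue inflation only to the idiosyncratic part, through the bound $T^{-1}\Vert\hat{f}-fH\Vert^2=O_p(\min\{N,T\}^{-1})\Vert\Sigma_R^{-2}\Vert$ from \cite{bai2023approximate}. It treats the tail, like the $\widehat{\beta}$ piece, as one more term in Bai's expansion with no $V_{NT}^{-1}$ factor. It bounds that term by $\sqrt{R}\,(NT)^{-1}\sum_{r>R}\sigma_r^2(\Gamma)=O_p(R^{1/2-2\rho})$ and squares. That trace bound is where $R_2^{1-4\rho}$ comes from. Your uninflated $\Vert\widehat{\beta}-\beta^0\Vert^2$ term is likewise consistent only with this convention, not with your Step~3.

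Two bookkeeping slips also matter, because $R_2=\min\{N,T\}^{1/6\rho}$ is exactly the value that equates $R_2^{2\rho+1}\min\{N,T\}^{-1}$ with $R_2^{1-4\rho}$.

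First, in Step~4 the variance contribution is not of smaller order: $R_2^{4\rho+4}\min\{N,T\}^{-2}=\min\{N,T\}^{(2-4\rho)/(3\rho)}$, identical to the bias contribution. This is harmless for the display, but it is the reason for that choice of $R_2$.

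Second, and more seriously, your Step~2 rate $O_p(R_2/\min\{N,T\})$ ``per unit eigenvalue'', inflated by $R_2^{2\rho+1}$, is $R_2^{2\rho+2}\min\{N,T\}^{-1}$, not $R_2^{2\rho+1}\min\{N,T\}^{-1}$. Carried through, it turns $\min\{N,T\}^{(2-4\rho)/(3\rho)}$ into $\min\{N,T\}^{(3-4\rho)/(3\rho)}$ in the first display after substitution. To reach the lemma you need the idiosyncratic rate without that extra $R_2$, which is how the paper uses the weak-loading bound.
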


In Corollary~\ref{cor:bai2009fe} we show $R_2^{2}\cdot \xi _{f}^{2}\xi
_{\lambda }^{2}=o_{p}(\min \{N,T\}^{-1})$ and $\frac{R_2^{6}}{NT}\cdot \xi
_{f}^{2}\xi _{\lambda }^{2}=o_{p}(\min \{N,T\}^{-2})$, which is sufficient
for our result stated in Corollary~\ref{cor:Final} below. Since $\hat{\beta}$
is arbitrary in the statement of Lemma~\ref{lemma:bai2009fe}, we can set a
different number of factors for $\beta $ estimate and factor component
estimates.

\begin{corollary}
\label{cor:bai2009fe} Impose Lemma~\ref{lemma:frewei2023} assumptions and
set $R_1=c\cdot \min \{N,T\}^{1/2\rho }$ for $\hat{\beta}_{LS}$. For $\rho
>7/3 $: 
\begin{equation*}
\min \{N,T\}^{\frac{1}{3\rho }}\Vert \widehat{\beta }_{LS}-\beta _{0}\Vert
^{4}=o_{p}(\min \{N,T\}^{-1}), \min \{N,T\}^{\frac{1-2\rho }{\rho }}\Vert 
\widehat{\beta }_{LS}-\beta _{0}\Vert ^{4}=o_{p}(\min \{N,T\}^{-2}).
\end{equation*}
\end{corollary}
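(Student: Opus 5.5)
The plan is to substitute the rate from Lemma~\ref{lemma:frewei2023} and reduce each claim to an inequality between exponents of $\min\{N,T\}$. Write $n:=\min\{N,T\}$. With $R_1=c\,n^{1/2\rho}$, Lemma~\ref{lemma:frewei2023} gives
\begin{equation*}
\Vert \widehat{\beta}_{LS}-\beta_0\Vert = O_p\big(R_1^{1-\rho}\big)+O_p\big(R_1 n^{-1/2}\big)=O_p\big(n^{\frac{1-\rho}{2\rho}}\big),
\end{equation*}
since both terms are of the same order for this choice of $R_1$. Raising this to the fourth power gives $\Vert \widehat{\beta}_{LS}-\beta_0\Vert^4=O_p\big(n^{\frac{2-2\rho}{\rho}}\big)$. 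No further probabilistic argument is needed: an $O_p$ bound raised to a fixed power remains $O_p$, and multiplying by a deterministic power of $n$ only shifts the exponent.

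For the first claim, the quantity is $n^{\frac{1}{3\rho}}\Vert\widehat{\beta}_{LS}-\beta_0\Vert^4=O_p\big(n^{\frac{1}{3\rho}+\frac{2-2\rho}{\rho}}\big)=O_p\big(n^{\frac{7-6\rho}{3\rho}}\big)$. This is $o_p(n^{-1})$ if and only if $\frac{7-6\rho}{3\rho}<-1$, which is equivalent to $7-6\rho<-3\rho$, i.e.\ $\rho>7/3$. This is exactly the stated threshold, and the strict inequality gives the little-$o$.

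For the second claim, the quantity is $n^{\frac{1-2\rho}{\rho}}\Vert\widehat{\beta}_{LS}-\beta_0\Vert^4=O_p\big(n^{\frac{1-2\rho+2-2\rho}{\rho}}\big)=O_p\big(n^{\frac{3-4\rho}{\rho}}\big)$. We need $\frac{3-4\rho}{\rho}<-2$, i.e.\ $3<2\rho$, i.e.\ $\rho>3/2$. This is implied by $\rho>7/3$ (and is already required by Assumption~\ref{ass:singularvalues}), so the binding constraint comes from the first claim.

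The only step needing care is checking that Lemma~\ref{lemma:frewei2023} applies with $R_1=c\,n^{1/2\rho}$ and that the two rate terms in it balance. A constant $c$ does not affect the exponents, so I expect no real obstacle. Note that the separate choice $R_2=n^{1/6\rho}$ from Lemma~\ref{lemma:bai2009fe} does not enter this corollary, as the remark before it explains.
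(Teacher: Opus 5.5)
Your proposal is correct and matches the paper's argument: substitute the rate $O_p(\min\{N,T\}^{(1-\rho)/(2\rho)})$ from Lemma~\ref{lemma:frewei2023}, raise it to the fourth power, and compare exponents, which gives $\min\{N,T\}^{(7-6\rho)/(3\rho)}$ and hence the threshold $\rho>7/3$. The paper's proof only checks the first display, and it states the condition as $\rho\geq 7/3$, where equality would give only $O_p$; your check of the second display via $(3-4\rho)/\rho<-2$, i.e.\ $\rho>3/2$, and your use of the strict inequality complete it.
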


It may be possible to attain similar, or potentially faster rates, by
iterating between factor estimation and our final $\hat\beta_{NO}$
estimator. However, the above rate is sufficient, and easy to verify, so we
do not confirm this.

\subsection{Second--step estimators of $g_{0,Z}$}

\label{sect:EstSecondStep}

We here only analyze the additive version of the two proposed kernel
regression estimators of $g_{0,Z}$. Since both the additive and the
multi-index version are kernel regressions, our analysis of the additive
version will carry over the multi-index estimator with obvious
modifications. To handle the estimation of the multi-index parameters, we
can apply the techniques developed in \cite{Ma2012} and \cite{Ma2013}.

We add some regularity to the kernel functions and distribution of
eigenfunctions.

\begin{assumption}[Kernels]
\label{ass:Kernels} Kernel function, $k(\cdot)$, 
\begin{enumerate*}[%
(i).,series = tobecont, itemjoin = \quad]

\item Bounded with compact support, \newline

\item $\int u^jk(u)du = 0$ for odd $j$

\item $\int k(u)du = 1$

\item $0<\int u^2 k(u)du <\infty$. 
\end{enumerate*}
\end{assumption}

Assumption~\ref{ass:Kernels} are standard restrictions. 
We regularise the distribution of proxies.

\begin{assumption}[Eigenfunctions]
\label{ass:Proxies} Marginals $f_{u_r}({u_{ir}})$, and $f_{v_r}({v_{ir}})$
are bounded with compact support and admit for all $r=1\dots, R$: 
$f_{u_r}({u_{ir}})>0\,\,\forall {u}_{ir}$, $f_{v_r}({v_{ir}}) >0 \,\,\forall 
{v}_{ir}$. 
\end{assumption}

Assumptions~\ref{ass:Kernels},~\ref{ass:Proxies} are standard nonparametric
estimation assumptions, see Assumptions 1 and $2^{\prime }$-$4^{\prime }$
from \cite{opsomer1999root}. Assumptions~\ref{ass:Proxies} can be justified
with estimated eigenfunctions of functions that adhere to Assumption~\ref%
{ass:singularvalues}, which pointwise converge to the true eigenfunctions,
which are in a compact set.

We state convergence results for $\widehat\beta_W^{SS}$, the estimator from
Section~\ref{sect:WW} using the sample splitting described in Section~\ref%
{sect:samplesplitLinear} to estimate smoothing weight matrices $S^{(1)}$ and 
$S^{(2)}$. Lemma~\ref{lemma:withinEstimatorBias} applies results from \cite%
{opsomer1999root} for the semiparametric additive model. Define $\xi_{\Gamma
X}$ as: 
\begin{align*}
\xi_{\Gamma X}: = (NT)^{-1}\sum_{it}({\Gamma }_{X,it}-\widehat{\Gamma }%
_{X,it})\big(\Gamma_{it}- \hat \Gamma_{it}\big).
\end{align*}
We verify $RMSE(\xi_{\Gamma X}) = o_p(NT)^{-1/2}$, which is sufficient for
Theorem~\ref{Th: NO general}. Assume without loss that $X_{it}$ is scalar,
and expand ${\Gamma }_{X}-\widehat{\Gamma }_{X} = \widetilde{\Gamma }_{X} -
\hat\eta$, where 
\begin{align*}
\widetilde{\Gamma }_{X} = (\mathbb{I} - W^{(1)}) {\Gamma }_{X}(\mathbb{I} -
W^{(2)})^{\prime }, & & \hat\eta = W^{(1)}\eta - \eta W^{(2)\,^{\prime }} +
W^{(1)}\eta W^{(2)\,^{\prime }}
\end{align*}
Define $\widetilde{\Gamma } $ and $\hat\varepsilon$ similarly. In turn the
definition of $\xi_{\Gamma X}$ implies, 
\begin{align*}
\xi_{\Gamma X} = (NT)^{-1}\left[ tr\{\widetilde{\Gamma}\widetilde{\Gamma }%
_{X}\} - tr\{\widetilde{\Gamma}\hat\eta\} - tr\{\hat\varepsilon\widetilde{%
\Gamma }_{X}\} + tr\{\hat\varepsilon\hat\eta\} \right].
\end{align*}
To show any variance added to the estimates of $\hat\beta_{NO}^{SS}$ from $%
\hat\eta$ and $\hat\varepsilon$ is sufficiently small order, i.e. $%
o_p(NT)^{-1/2}$, it is convenient to analyse bias and variance of each term
in this expression directly.

\begin{lemma}
\label{lemma:withinEstimatorBias} Impose \eqref{eq: rate cond 2},
Assumptions in Theorem~\ref{thm:SSWWresult}, Assumptions, \ref{ass:ID},~\ref%
{thm:additive model}, ~\ref{ass:singularvalues}-\ref{ass:Proxies}, and
Assumptions~\ref{ass:AppNorm}, \ref{ass:AppError}, and \ref%
{ass:AppCorrelation} from the appendix. 
Let $\xi_f^2$, $\xi_\lambda^2$ be from \eqref{eqn:factorRates}. Then, 
\begin{align*}
\mathbb{E}[\xi_{\Gamma X} ] &= R_2^2\cdot O_p(h_\lambda^2 h_f^2
+\xi_\lambda^2\xi_f^2)
\end{align*}
\end{lemma}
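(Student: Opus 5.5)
We start from the trace decomposition stated just before the lemma,
\[
\xi_{\Gamma X}=(NT)^{-1}\big[tr\{\widetilde{\Gamma}\widetilde{\Gamma}_X\}-tr\{\widetilde{\Gamma}\hat\eta\}-tr\{\hat\varepsilon\widetilde{\Gamma}_X\}+tr\{\hat\varepsilon\hat\eta\}\big],
\]
and take expectations conditional on the data used to build the smoothers $W^{(1)},W^{(2)}$. Under the sample split of Section~\ref{sect:samplesplitLinear}, these smoothers are measurable with respect to $\mathcal{F}_{k_1,k_2}$, whereas $(\eta,\varepsilon)$ on the block $\mathcal{I}_{k_1,k_2}$ is not. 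The two cross terms $tr\{\widetilde{\Gamma}\hat\eta\}$ and $tr\{\hat\varepsilon\widetilde{\Gamma}_X\}$ are linear in $\eta$ or $\varepsilon$ with $\mathcal{F}$-measurable coefficients. Their conditional means therefore vanish, up to the $\mathbb{E}[\eta|\mathcal{F}]$ and $\mathbb{E}[\varepsilon|\mathcal{F}]$ terms, which are controlled by the hypotheses of Theorem~\ref{thm:SSWWresult}.

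The quadratic term $tr\{\hat\varepsilon\hat\eta\}$ has conditional mean given by $\langle W\mathbb{E}[\varepsilon|\mathcal{F}],W\mathbb{E}[\eta|\mathcal{F}]\rangle_F$ plus a trace of $W^{*}W$ against the conditional covariance of $(\varepsilon,\eta)$. We will argue this is negligible: under Assumption~\ref{ass:ID}, $\varepsilon$ and $\eta$ are conditionally uncorrelated given the regressors, and the remaining part is assumed $o_p(NT)^{-1/2}$ in Theorem~\ref{thm:SSWWresult}. Hence $\mathbb{E}[\xi_{\Gamma X}]$ reduces to the deterministic-signal term $(NT)^{-1}\mathbb{E}\,tr\{\widetilde{\Gamma}\widetilde{\Gamma}_X\}$, and the remaining work is to bound it.

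For the signal term we use Theorem~\ref{thm:additive model}, which lets us write $\Gamma$ and $\Gamma_X$ as sums of $R_2$ additive components $h_r(\lambda_{ir},f_{tr})$ (the additive form has $d$ terms, and we embed these in $R_2\ge R_0$ proxies). The backfitting operator $\prod_r(\mathbb{I}-S^{(1)}_r)\,\cdot\,\prod_r(\mathbb{I}-S^{(2)}_r)$ then acts on each component through a double difference. For a single component, $(\mathbb{I}-S^{(1)}_r)h_r(\cdot,\cdot)(\mathbb{I}-S^{(2)}_r)'$ evaluated at $(i,t)$ equals
\[
h_r(\lambda_i,f_t)-\sum_j S_{ij}h_r(\lambda_j,f_t)-\sum_s S_{ts}h_r(\lambda_i,f_s)+\sum_{js}S_{ij}S_{ts}h_r(\lambda_j,f_s).
\]
We split this into two pieces.

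\begin{enumerate}[(a)]
\item With true proxies, a second-order Taylor expansion in both arguments cancels all pure-$\lambda$ and pure-$f$ terms. By Assumption~\ref{ass:Kernels} (symmetric kernel, finite second moment) and Assumption~\ref{ass:Proxies} (densities bounded away from zero), only the mixed term of order $h_\lambda^2h_f^2\,\partial^2_\lambda\partial^2_f h_r$ survives, following the \cite{opsomer1999root} bias calculation.
\item The kernel weights are built from the estimates $\hat\lambda,\hat f$. Expanding the kernels around the true values, the double-difference structure again kills the first-order terms in $\hat\lambda-\lambda$ and in $\hat f-f$ separately. This is the Neyman-orthogonality of the two-way estimator discussed after (\ref{eq: two-way regression}). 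What remains is a mixed product whose squared-average size is $\xi_\lambda\xi_f$, and by Cauchy--Schwarz it is of order $\xi_\lambda^2\xi_f^2$ after combining the $X$ and $Y$ factors.
\end{enumerate}

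Summing over the $R_2$ components of $\Gamma$ and the $R_2$ components of $\Gamma_X$ in the inner product $tr\{\widetilde{\Gamma}\widetilde{\Gamma}_X\}$ produces the factor $R_2^2$. This yields $R_2^2\cdot O_p(h_\lambda^2h_f^2+\xi_\lambda^2\xi_f^2)$.

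The main obstacle is step (b). The kernel weights are nonlinear in $\hat\lambda,\hat f$, and the rotation $H$ from (\ref{eqn:factorRates}) must be absorbed. We would handle this by expanding the weights to first order in $\hat\lambda_r-\lambda_r$ and $\hat f_r-f_r$, and by bounding the kernel-derivative terms uniformly using Assumption~\ref{ass:Kernels}. Lemma~\ref{lemma:bai2009fe} then controls the remainders. The risk is that the remainder terms carry powers of $h^{-1}$ that are not absorbed; if so, we would fall back on the stronger rates in Corollary~\ref{cor:bai2009fe} to dominate them.
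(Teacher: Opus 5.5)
Your outer structure matches the paper's proof. You use the same four-term trace decomposition and sample splitting to make the three noise-bearing terms mean zero. You then reduce to $\frac{1}{NT}tr\{\widetilde{\Gamma}\widetilde{\Gamma}_X\}$ and combine the additive representation of Theorem~\ref{thm:additive model} with double differencing.

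The genuine gap is step (b). Linearising the kernel weights $S^{(1)}_{ij,r}$ in $\hat\lambda_{ir}-\lambda_{ir}$ brings a factor $h_\lambda^{-1}$ per derivative of $K_{1,h_1}(\hat\lambda_{ir}-\hat\lambda_{jr})$. It is only legitimate if $\max_i|\hat\lambda_{ir}-\lambda_{ir}|=o(h_\lambda)$ (after rotation). The hypotheses give you only the averaged rates $\xi_\lambda,\xi_f$ in (\ref{eqn:factorRates}). Even on average the needed control is unavailable. With $R_2\asymp\min\{N,T\}^{1/6\rho}$, Lemma~\ref{lemma:bai2009fe} only yields $\xi_\lambda^2=O_p(\min\{N,T\}^{-(4\rho-1)/(6\rho)})$. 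Hence $\xi_\lambda/h_\lambda\to0$ cannot be guaranteed once $\tau>(4\rho-1)/(12\rho)$, and this covers most of the bandwidth range allowed in Corollary~\ref{cor:withinEstimator}. In that regime the proxy error may exceed the bandwidth, and no expansion of the weights around the true proxies is meaningful. Corollary~\ref{cor:bai2009fe} concerns $\hat\beta_{LS}$ and does not supply this control, so your fallback does not close the gap.

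The paper avoids the problem by never differentiating the kernel. It Taylor-expands the signal $h_k(\lambda_k,f_k)$ around the \emph{estimated} proxies, eq.~(\ref{eqn: additive Taylor Expansion}). Because $W^{(1)},W^{(2)}$ are genuine smoothers in $\hat\lambda,\hat f$, applying $\mathbb{I}-W^{(1)}$ to a smooth function of $\hat\lambda_k$ costs a factor $h_\lambda$ (the additive-smoother argument). Every term containing $\lambda_k-\hat\lambda_k$ or $f_k-\hat f_k$ is bounded through $\|\mathbb{I}-W^{(j)}\|_2=O(R_2)$ and the $L_2$ rates alone. The mixed second-order term gives $\xi_\lambda\xi_f$, and nothing depends on the ratio $\xi/h$.

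There are three further problems with your accounting.
\begin{itemize}
\item Double differencing does not kill the first-order terms separately. The $\lambda$-error multiplies $\nabla^{(1)}_\lambda h_k(\cdot,\hat f_k)$ smoothed by $(\mathbb{I}-W^{(2)})'$, which is $O(h_f)$, not zero. So $\|\widetilde{\Gamma}\|_F/\sqrt{NT}$ contains pieces of size $\xi_\lambda h_f$ and $h_\lambda\xi_f$; the paper obtains $R_2\,O_p(\xi_\lambda h_f)$. These fit the stated rate only via the common bound on $\xi_\lambda,\xi_f$ from Lemma~\ref{lemma:bai2009fe} together with $h_\lambda\asymp h_f$ under $N\sim T$. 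Then $\xi_\lambda h_f+h_\lambda\xi_f\lesssim h_\lambda h_f+\xi_\lambda\xi_f$, and you must carry and absorb them explicitly.
\item In (a) you replace $\prod_r(\mathbb{I}-S^{(1)}_r)$ by the single factor $\mathbb{I}-S^{(1)}_r$ on the $r$-th component. With dependent proxies, the other directional smoothers do not leave $h_r$ invariant (compare the $A_2$--$A_4$ terms in the proof of Lemma~\ref{lemma:withinEstimatorBiasIndependent}). You need the reproducing property of the converged additive smoother, which is what the paper invokes.
\item The $o_p((NT)^{-1/2})$ remainders you retain from $\mathbb{E}[\varepsilon|\mathcal{F}]$ and $\mathbb{E}[\eta|\mathcal{F}]$ are not part of the stated rate. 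The paper treats the three noise terms as exactly mean zero under the split.
\end{itemize}
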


In the following Lemma we show the results hold under mutually independent
proxies, which leads to much simpler theoretical properties. We present
Lemma~\ref{lemma:withinEstimatorBiasIndependent} as it may be of practical
interest.

\begin{lemma}
\label{lemma:withinEstimatorBiasIndependent} Make Lemma~\ref%
{lemma:withinEstimatorBias} assumptions. If proxies are independent over $%
r=1,\dots, R_2$. Then, 
\begin{align*}
\mathbb{E}[\xi_{\Gamma X} ] &= R_2^2\cdot O_p(h_\lambda^2 h_f^2
+\xi_\lambda^2\xi_f^2)
\end{align*}
\end{lemma}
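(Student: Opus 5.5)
We treat $\xi_{\Gamma X}$ through the trace decomposition given just before Lemma~\ref{lemma:withinEstimatorBias},
\[
NT\,\xi_{\Gamma X}=tr\{\widetilde{\Gamma}\widetilde{\Gamma}_X\}-tr\{\widetilde{\Gamma}\hat\eta\}-tr\{\hat\varepsilon\widetilde{\Gamma}_X\}+tr\{\hat\varepsilon\hat\eta\},
\]
and take expectations term by term. The advantage of independent proxies is that the backfitting smoother factorises across $r$. Write $\lambda_{i}=(\lambda_{i1},\dots,\lambda_{iR_2})$ with independent coordinates, and likewise for $f_t$. Then the population projection onto additive functions of $\lambda_i$ is exactly $\sum_r P_r$, where $P_r$ is conditional expectation given $\lambda_{ir}$. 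The $P_r$ are mutually orthogonal after centring, so $\prod_{r=1}^{R_2}(\mathbb{I}-P_r)=\mathbb{I}-\sum_r P_r$ with no cross terms. Consequently a single backfitting sweep ($L=1$) already reaches the fixed point. The empirical analogue $\prod_r(\mathbb{I}_N-S^{(1)}_r)$ then differs from $\mathbb{I}_N-\sum_r S^{(1)}_r$ only by products $S^{(1)}_rS^{(1)}_{r'}$, $r\neq r'$. Under independence, and by Assumption~\ref{ass:Proxies} with Assumption~\ref{ass:Kernels}, these products act like averaging operators of order $N^{-1}$ applied to centred data. This is the simplification relative to the general case in Lemma~\ref{lemma:withinEstimatorBias}, where the cross terms must be handled through the \cite{opsomer1999root} fixed-point argument.

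\textbf{Step 1 (bias term $tr\{\widetilde{\Gamma}\widetilde{\Gamma}_X\}$).} By Theorem~\ref{thm:additive model}, $\Gamma$ and $\Gamma_X$ are sums of $R_2$ components $h_k(\lambda_{ik},f_{tk})$. We split $\hat{\lambda}_{ir}=H^{-1\prime}\lambda_{ir}+(\hat{\lambda}_{ir}-H^{-1\prime}\lambda_{ir})$ and similarly for $\hat f_{tr}$.

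For a single component with true proxies, the two-way within operator $(\mathbb{I}-S^{(1)}_r)\,\cdot\,(\mathbb{I}-S^{(2)}_r)'$ annihilates $h_k$ up to a product of the one-dimensional smoothing biases. A second-order Taylor expansion using the symmetric kernel of Assumption~\ref{ass:Kernels} gives a residual of order $h_\lambda^2h_f^2$ times the mixed partial derivative $\partial_\lambda^2\partial_f^2h_k$. The one-way biases cancel, which is the Jackknife feature of \eqref{eq: two-way regression}.

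With estimated proxies, the same expansion performed in the proxy error gives a mixed residual of order $(\hat\lambda_{ir}-\lambda_{ir})(\hat f_{tr}-f_{tr})$, together with cross terms $h_\lambda^2(\hat f-f)$ and $h_f^2(\hat\lambda-\lambda)$. Averaging over $(i,t)$ and applying Cauchy--Schwarz bounds the first by $\xi_\lambda\xi_f$. The cross terms are dominated by $h_\lambda^2h_f^2+\xi_\lambda^2\xi_f^2$ via AM--GM after squaring.

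The inner product of two such residual matrices, each a sum over $R_2$ components, then yields $R_2^2$ pairs. Hence $(NT)^{-1}tr\{\widetilde{\Gamma}\widetilde{\Gamma}_X\}=R_2^2O_p(h_\lambda^2h_f^2+\xi_\lambda^2\xi_f^2)$.

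\textbf{Step 2 (first-order noise terms).} Sample splitting (Section~\ref{sect:samplesplitLinear}) makes $S^{(1)},S^{(2)}$ measurable with respect to $\mathcal{F}_{k_1,k_2}$, so they are independent of $\eta,\varepsilon$ on the block. By Assumption~\ref{ass:ID}, $\mathbb{E}[tr\{\widetilde{\Gamma}\hat\eta\}\mid\mathcal{F}_{k_1,k_2},\mathcal{G}]=tr\{\widetilde{\Gamma}\,\hat{\mathcal{W}}\,\mathbb{E}[\eta\mid\cdot]\}$, which vanishes by the condition $\langle\mathcal{W}\mathbb{E}[\varepsilon|\mathcal F],\dots\rangle_F=o_p$ of Theorem~\ref{thm:SSWWresult}. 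The same holds for $tr\{\hat\varepsilon\widetilde{\Gamma}_X\}$.

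\textbf{Step 3 (quadratic noise term).} We have $\mathbb{E}[tr\{\hat\varepsilon\hat\eta\}\mid\cdot]=tr\{\mathcal{W}\,\mathbb{E}[\varepsilon\eta'\mid\cdot]\,\mathcal{W}^*\}$, with $\mathcal{W}$ formed from independent-split weights. The operator norm bounds in Theorem~\ref{thm:SSWWresult} together with $\Vert\mathcal{W}^{*}\mathcal{W}\eta\Vert_F^2=o_p(NT)$ make this of smaller order than the Step 1 bound. Assumption~\ref{ass:ID} (strict exogeneity) kills the remaining cross-covariances.

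Collecting Steps 1--3 gives the displayed rate. The main obstacle is Step 1 with estimated proxies: the Taylor expansion of kernel weights evaluated at $\hat\lambda$ requires controlling $\hat\lambda-\lambda$ uniformly relative to $h_\lambda$. The first attempt will be to condition on the split so that $\hat\lambda$ is fixed, and then use the Lipschitz property of $K$ to bound the weight perturbation by $\xi_\lambda/h_\lambda$. The smoothing bias is then absorbed by the $h_\lambda^2$ factor. If that fails, we will fall back on the rate restrictions of Lemma~\ref{lemma:bai2009fe}.
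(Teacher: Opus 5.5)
Your skeleton matches the paper's: the four-term trace decomposition, noise terms disposed of through the sample split, a per-component double-difference expansion, and Cauchy--Schwarz on the product of the $\widetilde{\Gamma}$ and $\widetilde{\Gamma}_X$ bounds. However, the step you single out as the main obstacle is not one, and the tool you propose for it would fail.

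You never need to compare weights built from $\hat\lambda$ with weights built from $\lambda$. Given the split, the weights are fixed functions of $(\hat\lambda,\hat f)$. The paper writes $\lambda_i-\lambda_j=(\hat\lambda_i-\hat\lambda_j)+(\lambda_i-\hat\lambda_i)-(\lambda_j-\hat\lambda_j)$, and likewise for $f$, inside $h_{it}-h_{jt}-h_{is}+h_{js}\approx(\lambda_i-\lambda_j)(f_t-f_s)\nabla^{(2)}_{\lambda f}h_{it}$. Compact kernel support bounds the first piece by $Ch_\lambda$. The proxy errors depend on $i$ alone or on $t$ alone, so Cauchy--Schwarz turns them into $\xi_\lambda\xi_f$ using only $L_2$ rates. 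This gives $(NT)^{-1/2}\Vert A_{11}\Vert_F=O_p(h_\lambda h_f+\xi_\lambda\xi_f)$ per matrix, with the cross terms $h_\lambda\xi_f$ and $\xi_\lambda h_f$ absorbed by AM--GM when $N\sim T$. That suffices, because the lemma's rate is the product of two such bounds.

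Your Lipschitz route needs $\hat\lambda-\lambda$ to be uniformly small relative to $h_\lambda$, and the paper's rates do not provide this. Lemma~\ref{lemma:bai2009fe} only controls the root-mean-square error $\xi_\lambda$, at a rate no better than about $\min\{N,T\}^{-1/3}$. Meanwhile Corollary~\ref{cor:withinEstimator} allows $h_\lambda=cN^{-\tau}$ with $\tau$ up to $(3\rho-2)/(3\rho)$, well above $1/3$.

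Drop also the symmetric-kernel claim of an $h_\lambda^2h_f^2$ residual. Near the boundary of the compact support in Assumption~\ref{ass:Proxies}, the local first moment $\sum_jS_{ij}(\lambda_j-\lambda_i)$ is $O(h_\lambda)$. With estimated proxies it also carries an $O(\xi_\lambda)$ piece. The first-order bound is all you need.

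Two further points. First, Step 1 treats only the own-component operator $(\mathbb{I}-S^{(1)}_k)\,\cdot\,(\mathbb{I}-S^{(2)}_k)'$. The terms in which smoothers in $\lambda_{k'}$ or $f_{k'}$, $k'\neq k$, act on $h_k$ are not cross products $S_rS_{r'}$. The identity $P_rh_k=0$ only says they vanish in population. This is exactly where the paper uses independence:
\begin{itemize}
\item it splits $\mathbb{I}-W^{(1)}=(\mathbb{I}-W^{(1)}_k)-\sum_{k'\neq k}W^{(1)}_{k'}$;
\item it replaces $W^{(1)}_k$ by $S^{(1)}_k+O_p(\iota_N\iota_N'/N)$;
\item it bounds the off-component pieces $A_2,A_3,A_4$ as mean-zero averages of order $R_2h_\lambda T^{-1/2}$, $R_2h_fN^{-1/2}$ and $R_2^2(NT)^{-1/2}$.
\end{itemize}
You need the analogous bounds.

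Second, in Step 2 the condition on $\langle\mathcal{W}\mathbb{E}[\varepsilon|\mathcal{F}],\mathcal{W}\mathbb{E}[\eta|\mathcal{F}]\rangle_F$ concerns the quadratic term, not the first-order ones. It would also leave an $o_p(NT)^{-1/2}$ remainder that is not of the displayed form. The correct reason is exact: given the split, $\widetilde{\Gamma}$, $\widetilde{\Gamma}_X$ and the weights are measurable with respect to the conditioning information, while the in-block $\varepsilon,\eta$ have zero conditional mean. In Step 3 likewise, exogeneity alone gives mean zero; the operator-norm conditions control variances, not means.
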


Next we state a bound on the variance for the multiplicative error:

\begin{lemma}
\label{lemma:withinEstimatorVar} Impose Lemma~\ref{lemma:withinEstimatorBias}
assumptions. Then, 
\begin{align*}
Var[\xi_{\Gamma X}] &= \frac{R_2^6}{NT}\cdot O_p(\xi_\lambda^2\xi_f^2) + 
\frac{R_2^8}{NT}\cdot O_p(h_\lambda^2h_f^2 + (NT h_\lambda h_f)^{-1})
\end{align*}
\end{lemma}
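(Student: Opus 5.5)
The plan is to compute the variance of $\xi_{\Gamma X}$ term by term, using the four-term decomposition stated just before Lemma~\ref{lemma:withinEstimatorBias}:
\begin{equation*}
\xi_{\Gamma X}=(NT)^{-1}\left[ tr\{\widetilde{\Gamma}\widetilde{\Gamma}_{X}\}-tr\{\widetilde{\Gamma}\hat\eta\}-tr\{\hat\varepsilon\widetilde{\Gamma}_{X}\}+tr\{\hat\varepsilon\hat\eta\}\right].
\end{equation*}
Since $Var[\sum_{j=1}^4 a_j]\le 4\sum_j Var[a_j]$, it suffices to bound each variance separately. The sample split of Section~\ref{sect:samplesplitLinear} makes the smoothers $W^{(1)},W^{(2)}$ (built from $\hat\lambda_r,\hat f_r$, $r\le R_2$) measurable with respect to $\mathcal{F}_{k_1,k_2}$. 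We therefore condition on $\mathcal{F}=\mathcal{F}_{k_1,k_2}\vee\mathcal{G}$ and use
\begin{equation*}
Var[\cdot]=\mathbb{E}[Var(\cdot|\mathcal{F})]+Var[\mathbb{E}(\cdot|\mathcal{F})].
\end{equation*}

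\textbf{Deterministic term $tr\{\widetilde\Gamma\widetilde\Gamma_X\}$.} This term is $\mathcal{F}$-measurable, so only its unconditional variance across realisations of the estimated proxies matters. I would reuse the backfitting/additive expansion behind Lemma~\ref{lemma:withinEstimatorBias}. Each of the $R_2$ additive components of $(\mathbb{I}-W^{(1)})\Gamma(\mathbb{I}-W^{(2)})'$ is a product of a $\lambda$-direction smoothing residual and an $f$-direction residual. Each residual is of order $h^2+\xi$ with a stochastic fluctuation around it.
\begin{itemize}
\item Squaring over pairs of components in $\widetilde\Gamma$ and $\widetilde\Gamma_X$ gives $R_2^4$ cross terms per product.
\item The fluctuation of the estimated-proxy part is averaged over $NT$ cells, giving $(NT)^{-1}R_2^{6}\xi_\lambda^2\xi_f^2$. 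The extra $R_2^2$ comes from the Cauchy–Schwarz loss when the $R_2$ proxies are dependent; Lemma~\ref{lemma:withinEstimatorBiasIndependent} shows this loss would disappear under independence.
\end{itemize}

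\textbf{Linear terms $tr\{\widetilde\Gamma\hat\eta\}$ and $tr\{\hat\varepsilon\widetilde\Gamma_X\}$.} Conditionally on $\mathcal{F}$ these are linear in $\eta$ (respectively $\varepsilon$) from the held-out block. Their conditional variance is bounded by
\begin{equation*}
\|\mathbb{E}[vec(\eta)vec(\eta)'|\mathcal{F}]\|_2\cdot \|W^{(1)\prime}\widetilde\Gamma+\widetilde\Gamma W^{(2)}+\cdots\|_F^2,
\end{equation*}
where the first factor is $O(1)$ by the assumptions of Theorem~\ref{thm:SSWWresult}. The operator norms of the products $\prod_r(\mathbb{I}-S_r)$ are bounded by constants to the power $R_2$ in the backfitting, which gives the $R_2$ powers. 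Bounding $\|\widetilde\Gamma\|_F^2/NT$ by the squared bias $R_2^{2}(h_\lambda^2h_f^2+\xi_\lambda^2\xi_f^2)$ and dividing by $(NT)^2$ yields contributions inside both stated terms. The conditional mean of each linear term is zero, or $o_p$ by the condition $\langle\mathcal{W}\mathbb{E}[\varepsilon|\mathcal{F}],\mathcal{W}\mathbb{E}[\eta|\mathcal{F}]\rangle_F=o_p(NT)^{-1/2}$.

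\textbf{Quadratic term $tr\{\hat\varepsilon\hat\eta\}$.} This is a bilinear form $\varepsilon' M\eta$ with $M$ built from $W^{(1)}\otimes W^{(2)}$. Its conditional variance is bounded by $\|M\|_F^2$ times the $O(1)$ covariance norms. For kernel smoothers, the entries of $W^{(1)}$ are of size $(Nh_\lambda)^{-1}$ on a band of width $Nh_\lambda$, so $\|W^{(1)}\|_F^2\asymp h_\lambda^{-1}$, and similarly $\|W^{(2)}\|_F^2\asymp h_f^{-1}$. Across the $R_2$ backfitting factors, with the cross-products of the three pieces of $\hat\eta$ and $\hat\varepsilon$, this yields $(NT)^{-2}R_2^8\cdot NT(NTh_\lambda h_f)^{-1}$-type terms, i.e. $\frac{R_2^8}{NT}(NTh_\lambda h_f)^{-1}$. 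The conditional mean $tr\,\mathbb{E}[\hat\varepsilon\hat\eta|\mathcal{F}]$ is handled by the same covariance condition.

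\textbf{Main obstacle.} I expect the hardest step to be controlling the propagation of the $R_2$-fold product of smoothers. This means establishing uniformly bounded operator norms of $\prod_{r}(\mathbb{I}-S_r^{(1)})$ and its $L$ backfitting iterates, following the contraction argument of \cite{opsomer1999root} under Assumptions~\ref{ass:Kernels}–\ref{ass:Proxies}. Without this, the $R_2$ powers could blow up geometrically rather than polynomially. I would first try to show each $S_r$ has spectral norm at most $1+O(h)$ on the centred space, and then count cross terms crudely to obtain the $R_2^6$ and $R_2^8$ factors.
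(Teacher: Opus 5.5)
Your skeleton matches the paper's proof. You split $\xi_{\Gamma X}$ into four terms, condition on the out-of-block data so that $W^{(1)},W^{(2)}$ are fixed, and use operator-norm/Frobenius bounds for the linear and bilinear pieces. Your handling of the two linear terms is essentially the paper's.

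\textbf{First gap: the term $(NT)^{-1}tr\{\widetilde\Gamma\widetilde\Gamma_X\}$.} By invoking the law of total variance you take on $Var[\mathbb{E}(\xi_{\Gamma X}|\mathcal{F})]$, i.e.\ the variability of this term across draws of the estimated proxies. You then assert it is $(NT)^{-1}R_2^6\xi_\lambda^2\xi_f^2$ because fluctuations are ``averaged over $NT$ cells''. Nothing supports this. The error $\hat\lambda_i-\lambda_i$ is shared by all of row $i$, $\hat f_t-f_t$ by all of column $t$, and the smoothers mix every cell. Lemma~\ref{lemma:bai2009fe} controls only the averages $\xi_\lambda^2,\xi_f^2$, so there is no $1/(NT)$ reduction. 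The best you get is the squared-bias order $(NT)^{-2}\mathbb{E}[\|\widetilde\Gamma\|_F^2\|\widetilde\Gamma_X\|_F^2]$.

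The paper never bounds this variance. Its variances are conditional (given $\eta$ and the split), where this term is constant, and the term itself is controlled pathwise by Cauchy--Schwarz in Lemma~\ref{lemma:withinEstimatorBias}. The $R_2^6(NT)^{-1}\xi_\lambda^2\xi_f^2$ piece of the statement comes instead from the linear terms, namely $(NT)^{-2}\|W^{(1)}\|_2^2\|\mathbb{E}[ee'|\eta]\|_2\|\widetilde\Gamma_X\|_F^2$ with $\|W^{(1)}\|_2=O(R_2)$ and $\|\widetilde\Gamma_X\|_F^2/(NT)=R_2^4O_p(h_\lambda^2h_f^2+\xi_\lambda^2\xi_f^2)$. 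Your own linear-term paragraph already produces this. So work conditionally and drop that step.

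\textbf{Second gap: the count for $(NT)^{-1}tr\{\hat\varepsilon\hat\eta\}$.} Apply your own bound $Var(\varepsilon'M\eta|\mathcal{F})\lesssim\|M\|_F^2$ to the single-direction piece $M=(\mathbb{I}_T\otimes W^{(1)})'(\mathbb{I}_T\otimes W^{(1)})$. This gives $\|M\|_F^2=T\|W^{(1)\prime}W^{(1)}\|_F^2\asymp T/h_\lambda$, hence a contribution $(NT)^{-1}(Nh_\lambda)^{-1}$. Symmetrically there is a $(NT)^{-1}(Th_f)^{-1}$ contribution. Only the $W^{(1)}\otimes W^{(2)}$ piece gives the $(NT)^{-1}(NTh_\lambda h_f)^{-1}$ you report.

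For i.i.d.\ errors these single-direction contributions are of exact order. They are not covered by the displayed right-hand side in general, e.g.\ with $R_2$ fixed, $N\sim T$, $h_\lambda\asymp h_f\asymp N^{-\tau}$ for $1/5<\tau<1$, and $\xi_\lambda^2\xi_f^2=O_p(N^{-2})$. The paper's proof does pick them up: its final display carries the extra term $\frac{R_2^4}{NT}O_p((Nh_\lambda)^{-1}+(Th_f)^{-1})$, which the lemma as stated omits. So you cannot land on the displayed bound by this route. What you can prove is the bound with that term added, which is still $o_p((NT)^{-1})$ and suffices for Corollary~\ref{cor:withinEstimator}.

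\textbf{On your ``main obstacle''.} The paper never bounds $\prod_r(\mathbb{I}-S_r)$. It writes the converged backfit as $\mathbb{I}-W^{(1)}$ with the additive smoother $W^{(1)}=\sum_{\ell\le R_2}W^{(1)}_\ell$ of \cite{opsomer1999root}. Each summand has bounded row and column sums, so $\|W^{(1)}\|_2=O(R_2)$ by Ostrowski and the triangle inequality; this is where the polynomial powers of $R_2$ come from. By contrast, a per-factor bound $\|S_r\|_2\le 1+O(h)$ gives only $\|\mathbb{I}-S_r\|_2\le 2+O(h)$. That yields a bound geometric in $R_2$, which is what your ``constants to the power $R_2$'' would deliver.
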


\subsection{Estimation of $\protect\beta$}

\label{sect:EstBeta}

Finally, we use the rate results derived in the previous subsection to
verify the general conditions of Theorem \ref{thm:SSWWresult} when our novel
estimator of $g_Z$ is employed:

\begin{corollary}
\label{cor:withinEstimator} Impose Lemma~\ref{lemma:withinEstimatorBias}
assumptions. Let $h_\lambda = cN^{-\tau}, h_f = cT^{-\tau}$ and $\rho >8/3$.
For $\tau\in (1/4,(3\rho - 2)/3\rho)$, $R_2 \lesssim
c\cdot\min\{N,T\}^{1/6\rho}$; $N\sim T$, and Theorem \ref{lemma:bai2009fe}
implies, 
\begin{align*}
\mathbb{E}[\hat\beta_{NO}^{SS} - \beta_0] = o_p(NT)^{-1/2}, & & 
Var[\hat\beta_{NO}^{SS}|X]= O_p(NT)^{-1} + o_p(NT)^{-1}.
\end{align*}
\end{corollary}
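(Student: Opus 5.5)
The plan is to reduce Corollary~\ref{cor:withinEstimator} to Theorem~\ref{thm:SSWWresult} and to the decomposition of $\hat\beta_{NO}^{SS}-\beta_0$ used there. From the linear-smoother expansion preceding Theorem~\ref{thm:SSWWresult}, we have $\hat\beta_{NO}^{SS}-\beta_0=\hat\Omega_X^{-1}\{\xi_{\Gamma X}+\langle\Gamma_X-\hat\Gamma_X,\varepsilon\rangle_F+\langle\eta,\Gamma-\hat\Gamma\rangle_F+\langle\eta,\varepsilon\rangle_F\}$, with $\hat\Omega_X\to^P\Omega_X$ by (\ref{eq: rate cond 3}) and the $L_2$ consistency argument for Theorem~\ref{Th: NO general}. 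The sample split of Section~\ref{sect:samplesplitLinear} makes the smoothers $\mathcal{W}^{(k_1,k_2)}$ measurable with respect to data outside $\mathcal{I}_{k_1,k_2}$. Conditionally on that data, the two cross terms therefore have mean zero, up to the $\langle\mathcal{W}\mathbb{E}[\varepsilon|\mathcal{F}],\mathcal{W}\mathbb{E}[\eta|\mathcal{F}]\rangle_F$ term assumed negligible in Theorem~\ref{thm:SSWWresult}. Their conditional variances are bounded by $(NT)^{-2}\|\mathbb{E}[ee'|\mathcal F]\|_2\,\|\Gamma_X-\hat\Gamma_X\|_F^2=o_p((NT)^{-1})$, using the operator-norm conditions and the bound $\|\mathcal{W}^*\mathcal{W}\eta_k\|_F^2=o_p(NT)$. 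The leading term $\langle\eta,\varepsilon\rangle_F$ has mean zero and variance $O((NT)^{-1})$. Hence the bias statement reduces to $\mathbb{E}[\xi_{\Gamma X}]=o_p((NT)^{-1/2})$, and the variance statement reduces to $Var[\xi_{\Gamma X}]=o_p((NT)^{-1})$.

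The bias is handled by Lemma~\ref{lemma:withinEstimatorBias}, which gives $\mathbb{E}[\xi_{\Gamma X}]=R_2^2\,O_p(h_\lambda^2h_f^2+\xi_\lambda^2\xi_f^2)$. There are two pieces to check.
\begin{enumerate*}[(a)]
\item With $h_\lambda=cN^{-\tau}$, $h_f=cT^{-\tau}$, $N\sim T$ and $R_2\lesssim\min\{N,T\}^{1/6\rho}$, we get $R_2^2h_\lambda^2h_f^2\asymp N^{1/(3\rho)-4\tau}$. This is $o(N^{-1})$ exactly when $\tau>1/4+1/(12\rho)$. I expect the stated lower bound $\tau>1/4$ has to be read as allowing this slack (or the constant $c$ absorbed with $R_2$ slightly slower); I will make this explicit.
\item By Lemma~\ref{lemma:bai2009fe} with $\hat\beta=\hat\beta_{LS}$, $R_2^2\xi_\lambda^2\xi_f^2=O_p(\min\{N,T\}^{1/(3\rho)}\|\hat\beta_{LS}-\beta_0\|^4+\min\{N,T\}^{(2-4\rho)/(3\rho)})$. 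Corollary~\ref{cor:bai2009fe}, which needs $\rho>7/3$ and is implied by $\rho>8/3$, makes the first part $o_p(N^{-1})$. The second is $o(N^{-1})$ iff $(2-4\rho)/(3\rho)<-1$, i.e.\ $\rho>2$.
\end{enumerate*}

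The variance is handled by Lemma~\ref{lemma:withinEstimatorVar}: $Var[\xi_{\Gamma X}]=\frac{R_2^6}{NT}O_p(\xi_\lambda^2\xi_f^2)+\frac{R_2^8}{NT}O_p(h_\lambda^2h_f^2+(NTh_\lambda h_f)^{-1})$. The first term is $o_p(N^{-2})$ by the second display of Lemma~\ref{lemma:bai2009fe} combined with Corollary~\ref{cor:bai2009fe}. For the second term, $R_2^8\lesssim N^{4/(3\rho)}$, so $\frac{R_2^8}{NT}h_\lambda^2h_f^2\asymp N^{4/(3\rho)-2-4\tau}=o(N^{-2})$ for any $\tau>1/(3\rho)$, which holds since $\tau>1/4$ and $\rho>8/3$. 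The last piece is $\frac{R_2^8}{(NT)^2h_\lambda h_f}\asymp N^{4/(3\rho)-4+2\tau}$. This is $o(N^{-2})$ iff $\tau<1-2/(3\rho)=(3\rho-2)/(3\rho)$, which is exactly the stated upper bound. The interval $(1/4,(3\rho-2)/(3\rho))$ is nonempty in the relevant range, and the condition $\rho>8/3$ arises from requiring all the exponent inequalities jointly.

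Finally, I will combine these pieces. Since $\hat\Omega_X^{-1}=\Omega_X^{-1}+o_p(1)$, the bias of $\hat\beta^{SS}_{NO}$ is $\Omega_X^{-1}\mathbb{E}[\xi_{\Gamma X}]+o_p((NT)^{-1/2})=o_p((NT)^{-1/2})$. Conditional on $X$, the variance is $(NT)^{-1}\Omega_X^{-1}\Sigma\Omega_X^{-1}$, coming from $\langle\eta,\varepsilon\rangle_F$, plus $o_p((NT)^{-1})$ from the remaining terms via Cauchy--Schwarz on the covariances. The main obstacle is the exponent bookkeeping in the bias step: reconciling the lower bound $\tau>1/4$ with the $R_2^2$ factor. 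I would handle it by tracking the constant in $R_2$, or by noting that $R_2$ may grow arbitrarily slowly below $\min\{N,T\}^{1/6\rho}$, so that $R_2^2h_\lambda^2h_f^2=o(N^{-1})$ for every $\tau>1/4$.
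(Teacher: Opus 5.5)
Your route matches the paper's. The paper's justification (the paragraph after the corollary, together with the $\hat\Omega_X$ argument in the proof of Corollary~\ref{cor:Final}) reduces everything to the mean and variance of $\xi_{\Gamma X}$. It then substitutes Lemmas~\ref{lemma:withinEstimatorBias}, \ref{lemma:withinEstimatorVar} and \ref{lemma:bai2009fe} and Corollary~\ref{cor:bai2009fe}, and checks exponents with $N\sim T$. Your bookkeeping is correct. In particular, you find that the $(NTh_\lambda h_f)^{-1}$ variance term forces $\tau<(3\rho-2)/(3\rho)$, which recovers the stated upper endpoint exactly. The extra term $\frac{R_2^4}{NT}O_p((Nh_\lambda)^{-1}+(Th_f)^{-1})$ appears at the end of the proof of Lemma~\ref{lemma:withinEstimatorVar} but not in its statement; it is also $o(N^{-2})$ under that same condition.

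The problem you flag at the lower endpoint is real, and the paper does not resolve it either. Its claim that the remaining bias terms are $o_p(\min\{N,T\}^{-1})$ for the stated bandwidths does not account for the factor $R_2^2$. With $R_2\asymp\min\{N,T\}^{1/(6\rho)}$ one needs $\tau>1/4+1/(12\rho)$.

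Your remedy, that $R_2$ ``may grow arbitrarily slowly'', does not prove the statement as written. The condition $R_2\lesssim c\min\{N,T\}^{1/6\rho}$ is only an upper bound, so the conclusion must also hold at the maximal rate. There it fails for $\tau\in(1/4,1/4+1/(12\rho)]$. You should present this as an amended hypothesis instead: either $\tau>1/4+1/(12\rho)$, or $R_2=o(N^{2\tau-1/2})$ (for instance $R_2$ bounded, as in the simulations).

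Two minor points:
\begin{enumerate*}[(i)]
\item Your inequalities only need $\rho>7/3$, so do not claim that $\rho>8/3$ is what they jointly produce; say only that it suffices.
\item Rather than passing $\hat\Omega_X^{-1}\rightarrow^{P}\Omega_X^{-1}$ through an expectation, use your mean and variance bounds to conclude $\xi_{\Gamma X}=o_p((NT)^{-1/2})$ (Markov on the second moment). That gives $\hat\Omega_X^{-1}\xi_{\Gamma X}=o_p((NT)^{-1/2})$ directly.
\end{enumerate*}
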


In Corollary~\ref{cor:bai2009fe}, we show for $R_2 = O(\min\{N,T\}^{\frac{1}{%
6\rho}})$ that $R_2^2\cdot\xi_\lambda^2\xi_f^2 = o_p(\min\{N,T\}^{-1})$, and 
$R_2^6 (NT)^{-1}\xi_\lambda^2\xi_f^2 = o_p(\min\{N,T\}^{-2})$ for $N\sim T$.
For the range of $h_\lambda$, and $h_f$ stated in Corollary~\ref%
{cor:withinEstimator} all other terms in bias are $o_p(\min\{N,T\}^{-1})$,
and in variance are $o_p(\min\{N,T\}^{-2})$.

\begin{corollary}
\label{cor:Final} Impose Lemma~\ref{lemma:withinEstimatorBias} conditions.
Then, 
\begin{align*}
\sqrt{NT}(\widehat{\beta}_{NO}^{SS}-\beta ) \xrightarrow[]{d} \mathcal{N}%
\big(0, \Omega_X^{-1} \Sigma \Omega_X^{-1}\big).
\end{align*}
\end{corollary}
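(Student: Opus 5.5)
The plan is to obtain Corollary~\ref{cor:Final} by verifying the high-level conditions of Theorem~\ref{thm:SSWWresult} for the additive backfitting estimator with the linear-smoother sample split of Section~\ref{sect:samplesplitLinear}. We write $\hat{\Gamma}_Z^{(k_1,k_2)}=\mathcal{W}^{(k_1,k_2)}Z^{(k_1,k_2)}$, where $\mathcal{W}$ is the composition of the backfitting operators $\prod_r(\mathbb{I}_N-S_r^{(1)})$ and $\prod_{r'}(\mathbb{I}_T-S_{r'}^{(2)})$ built from $(\hat\lambda,\hat f)$ on the complementary blocks. Assumption~\ref{ass:ID}, eq.~(\ref{eq: rate cond 3}), and the conditional covariance bounds on $\varepsilon,\eta$ are maintained directly among the conditions of Lemma~\ref{lemma:withinEstimatorBias}. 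Likewise, the condition $\langle \mathcal{W}\mathbb{E}[\varepsilon|\mathcal{F}],\mathcal{W}\mathbb{E}[\eta|\mathcal{F}]\rangle_F=o_p(NT)^{-1/2}$ is part of the assumptions of Theorem~\ref{thm:SSWWresult}; it holds trivially under independence and under Remark~\ref{rem:weakDep} otherwise. The remaining work is to check the two conditions in (\ref{eqn:WeightedWithinCondtions}) and the requirement $\|\mathcal{W}\|\xi_\Gamma,\|\mathcal{W}\|\xi_X=o_p(1)$.

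\textbf{Step 1: the smoothed-noise condition.} For $\|\mathcal{W}^*\mathcal{W}\eta_k\|_F^2=o_p(NT)$, I would bound each one-dimensional kernel smoother $S_r^{(j)}$ in operator norm by a constant. Row-stochastic kernel weights with densities bounded away from zero (Assumptions~\ref{ass:Kernels}--\ref{ass:Proxies}) give this. Then $\|\mathcal{W}\|_{op}\lesssim C^{R_2}$, or better $O(R_2)$ using the additive structure. Next, I would compute $\mathbb{E}\|\mathcal{W}\eta_k\|_F^2$ conditionally on the independent block, using $\|\mathbb{E}[vec(\eta)vec(\eta)']\|_2=O(1)$. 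This yields $NT\cdot O(R_2^2/(Nh_\lambda)+R_2^2/(Th_f))$, which is $o(NT)$ for the stated bandwidths $h\asymp N^{-\tau}$ with $\tau<1$ and $R_2\lesssim\min\{N,T\}^{1/6\rho}$, as in the footnote discussion. The same argument gives $\|\mathcal{W}\|\xi_\Gamma,\|\mathcal{W}\|\xi_X=o_p(1)$, since $L_2$ consistency of the backfitting estimator follows from the bias part of Lemma~\ref{lemma:withinEstimatorBias}.

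\textbf{Step 2: the product-bias condition.} For $\langle\Gamma_X-\mathcal{W}\Gamma_X,\Gamma-\mathcal{W}\Gamma\rangle_F=o_p(NT)^{-1/2}$, I would use the decomposition of $\xi_{\Gamma X}$ into the trace terms preceding Lemma~\ref{lemma:withinEstimatorBias}. By Chebyshev, it suffices that $\mathbb{E}[\xi_{\Gamma X}]=o(NT)^{-1/2}$ and $Var[\xi_{\Gamma X}]=o(NT)^{-1}$. Lemma~\ref{lemma:withinEstimatorBias} gives the mean as $R_2^2 O_p(h_\lambda^2h_f^2+\xi_\lambda^2\xi_f^2)$. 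With $h_\lambda h_f\asymp (NT)^{-\tau}$ and $\tau>1/4$, the first part is $o((NT)^{-1/2})$ once $R_2^2$ is absorbed using the slow growth of $R_2$. The second part is $o(\min\{N,T\}^{-1})=o((NT)^{-1/2})$ for $N\sim T$, by Lemma~\ref{lemma:bai2009fe} combined with Corollary~\ref{cor:bai2009fe} ($\rho>7/3$, $\hat\beta=\hat\beta_{LS}$). Lemma~\ref{lemma:withinEstimatorVar} bounds the variance. Its first term is $o(\min\{N,T\}^{-2})$, again by Corollary~\ref{cor:bai2009fe}. Its second term is $R_2^8(NT)^{-1}O(h^4+(NTh_\lambda h_f)^{-1})=o((NT)^{-1})$, because $NTh_\lambda h_f\to\infty$ for $\tau<(3\rho-2)/3\rho<1$. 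This is exactly the bookkeeping of Corollary~\ref{cor:withinEstimator}, which I would invoke directly.

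\textbf{Step 3: conclusion.} With these conditions in hand, Theorem~\ref{thm:SSWWresult} delivers $\sqrt{NT}(\hat\beta^{SS}_{NO}-\beta)\to^d\mathcal{N}(0,\Omega_X^{-1}\Sigma\Omega_X^{-1})$. The denominator $\hat\Omega_X\to^p\Omega_X$ follows from the $L_2$ consistency argument preceding Theorem~\ref{Th: NO general}.

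\textbf{Main obstacle.} The hardest step is making Step 1 and the absorption of the $R_2$ powers rigorous. The operator norm of the product of $R_2$ backfitting smoothers could grow geometrically, and the smoothers are built from estimated, rotated proxies $\hat\lambda$, $\hat f$ rather than the true ones. I would try first to exploit the Opsomer-type contraction of the backfitting operator under Assumption~\ref{ass:Proxies}, to get a polynomial-in-$R_2$ bound. I would then use the estimated-proxy rates $\xi_\lambda,\xi_f$ to transfer the bounds from smoothers built on the true proxies to those built on the estimated ones.
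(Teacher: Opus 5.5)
Your proposal is correct at the paper's own level of rigor and follows essentially its route: the $o_p((NT)^{-1/2})$ control of $\xi_{\Gamma X}$ comes from the bias and variance bounds of Lemmas~\ref{lemma:withinEstimatorBias} and~\ref{lemma:withinEstimatorVar}, combined with the rate bookkeeping of Corollaries~\ref{cor:bai2009fe} and~\ref{cor:withinEstimator}, after which Theorem~\ref{thm:SSWWresult} applies. The main difference is the denominator, which the paper handles via Theorem~2 of \cite{opsomer1999root} plus Weyl's inequality, whereas your $L_2$-consistency/Slutsky argument is cruder but sufficient; note also that for your Step~1 bound to hold, $\mathcal{W}$ must be the identity minus the backfitting residual map $Z\mapsto \prod_r(\mathbb{I}_N-S_r^{(1)})\,Z\,\prod_{r'}(\mathbb{I}_T-S_{r'}^{(2)})$, not that map itself.
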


Recall in \eqref{eq: rate cond 2} that $\Omega_X := \text{plim} (NT)^{-1}
\sum_{it} \eta_{it}\eta_{it}^{\prime }$, and $\Sigma$ from $(NT)^{-1/2}
\sum_{it} \eta_{it}\varepsilon_{it} \xrightarrow[]{d}N(0,\Sigma)$.

\subsection{Implementation}

Estimation of the preliminary model, along with final estimation of the
multi step debiasing estimators involved the following hyperparameters: $%
\{R_1, R_2, h_\lambda, h_f\}$.

We motivate that $R_2$, i.e. the number of eigenfunctions used in the final
estimation step, should be relatively small with respect to sample size. In
practice, we advocate these being moderate, but fixed, in large enough
samples. For example, if $\alpha,\gamma \in \mathbb{R}^d$ then $R_2 \geq d$
should in most cases control all latent fixed-effect variation. Hence, as
long as eigenfunctions estimated in the first stage $R_1 \geq R_2 \geq d$,
then our model should identify all latent variation in $g(\alpha, \gamma)$.

Our first-step estimation of eigenfunction proxies, however, showed a clear
bias variance trade-off in $R_1$, regardless of the ambient dimension of $%
\alpha,\gamma \in \mathbb{R}^d$. This is because variation in the tail terms
of $g(\alpha, \gamma) = \sum_{r =1}^\infty \sigma_r u_r(\alpha) v_r(\gamma)$
may still influence estimation of leading terms, even as $\sigma_r \to 0$.
We conjecture here, without formal proof, that tests of a matrix rank
observed with noise should estimate the $R_1$ that optimally trades-off bias
and variance in this setting. That is, since in finite samples we would
optimally set $R_1$ equal to the highest $r$ such that $\sigma_r > \left\|
\varepsilon\right\|_2$, information criterion tests from e.g. \cite%
{BaiNg2002} or eigenvalue ratio test in \cite{ahn2013eigenvalue} and Section
5 of \cite{ke2024robust} for linear regression models specifically, should
work well. These tests are naturally designed to choose the point at which
the distribution of eigenvalues are related to noise, hence should do well
to establish that optimal $R_1$ for estimation.

Lastly, for bandwidths $\{h_\lambda, h_f\}$ we again conjecture without
proof that usual split sample cross-validation arguments should pick the
optimal bandwidths. Since bandwidths from Section~\ref{sect:EstSecondStep}
required to perform inference are the bandwidths that optimise mean squared
error, the objective function of these cross-validation hyperparameter
optimisers are aligned with our purposes. We leave formalising this for
future work.

Numerical implementations in simulation Section~\ref{sect:Simulations} and
Appendix~\ref{sect:AppSims} simply use the asymptotically optimal rates for $%
\{R_1, R_2, h_\lambda, h_f\}$ supposed by the theory. We make no strong
claim to estimating optimal hyperparameters, just that there exist some that
conform to these rates which produce good numerical results.

Lastly, in finite samples, there are degrees of freedom corrections to
consider. For the factor model, we use \cite{freeman2023linear} adjustments
to rescale variance estimators by, 
\begin{align*}
dfc = \frac{NT}{(N-R_1)(T - R_1)}
\end{align*}
Likewise, for the nonparametric estimators, 
\begin{align*}
dfc = \frac{NT}{(N-tr\{W^{(1)}\})(T - tr\{W^{(2)}\})}
\end{align*}
where $tr\{W^{(1)}\}$ and $tr\{W^{(2)}\}$ are effective degrees of freedom
from nonparametric estimation, see \cite{hastie2009introduction} Section
7.5.2, and $W^{(1)}, W^{(2)}$ are the nonparametric kernel weights derived
in Section~\ref{sect:MI} and \ref{sect:WW}. We also implement the correction
akin to \cite{rueda2013degrees}: 
\begin{align*}
dfc = \frac{NT}{\left(N-\min\{(1.5\cdot
tr\{W^{(1)}\},N/2\}\right)\cdot\left(T - \min\{1.5\cdot
tr\{W^{(2)}\},T/2\}\right)}.
\end{align*}
This correction produces wider confidence intervals, with better coverage
under dependence structures in the noise term. We make no formal claim to
the validity of these corrections, but note in simulations that calculated
standard errors approximate simulated standard deviations well.

We implement the weighted-within by using the multi-index weights in Section~%
\ref{sect:MI} to initialise estimates before using the backfitting in
Section~\ref{sect:WW}. Estimates under multi-index weights in Section~\ref%
{sect:MI} perform well with nominal coverage, but are dominated by the
backfitting in Section~\ref{sect:WW}, so we report only those results. For
the higher dimensional simulations in Appendix~\ref{sect:AppSims} we also
find it helps to initialise the backfitting with the standard factor model.

\section{Simulations\label{sect:Simulations}}

Data is generated according to 
\begin{align}  \label{eqn:sims}
Y_{it}=X_{it}\beta +g(\alpha _{i},\gamma _{t})+\varepsilon _{it}, & & 
X_{it}=g_{X}(\alpha _{i},\gamma _{t})+\eta _{it},
\end{align}%
where $\beta = 2$, $\alpha_i, \gamma_t \sim U(-1,1)$. For $\mu_{it}\sim
N(0,1), \mu_{it}^* \sim N(0,4)$, $\nu_{it} \sim N(0,\eta_{it}^2)$, noise
terms $\varepsilon _{it}$, and $\eta_{it}$ are generated, 
\begin{align*}
\eta_{it} = \mu_{it} + (\mu_{it-1} + \mu_{i-1t})/\sqrt{2} + \mu_{i-1t-1}/2 +
\mu_{it}^*, & & \varepsilon _{it} = \nu_{it} + (\nu_{it-1} + \nu_{i-1t})/%
\sqrt{2} + \nu_{i-1t-1}/2
\end{align*}
then both normalised to have variance 1. Term $\mu_{it}^*$ normalises $%
\eta_{it}$ to not have too much correlation. In this way they admit
conditional heteroskedasticity and correlation over $i$, and $t$. To
simulate that the cross-sectional ordering is unknown to the econometrician,
we randomise the order of $i$, but maintain the order of $t$. Finally, 
\begin{equation*}
g(\alpha _{i},\gamma _{t})=\frac{1}{\theta \sqrt{2\pi }}\exp \left( -\frac{%
(\alpha _{i}-\gamma _{t})^{2}}{\theta ^{2}}\right) ,\quad \quad g_{X}(\alpha
_{i},\gamma _{t})=\frac{1}{(|\alpha _{i}-\gamma _{t}|+1)^{\theta }},\quad
\quad \theta =1/2.
\end{equation*}%
Terms $g(\alpha _{i},\gamma _{t})$ and $g_{X}(\alpha _{i},\gamma _{t})$ are
normalised to variance four.\footnote{%
Rescaling variance of these terms does not impact asymptotic results, but
elucidates differences across estimators with much smaller sample sizes in
simulations.} Figure~\ref{fig:SimEmpiricalBounds} displays the results
graphically, Table~\ref{tbl:Simulation} tables the bias, root mean squared
error, and coverage for 95\% nominal test.

Standard errors across all estimators use the partial sum estimator from 
\cite{BaiNg2006} in conjunction with the \cite{newey1987hac} kernel
estimator in the time dimension.

Tuning parameters are chosen as follows. For the factor model estimators,
denote $R_1=\min \{N,T\}^{1/3}$. For the nonparametric weighted-within
estimator these are set to $R_2= 4$, i.e. constant, since the optimal rate $%
R_2 = \min\{N,T\}^{1/6\rho}$ implies such slow rate of growth for reasonably
smooth functions, that in practice optimal $R_2$ would not change over the
sample sizes we consider. Notice here that $R_2 < R_1$, which follows our
theoretical arguments: we find for first-stage preliminary estimators the
bias and variance is better traded off with higher number of estimated
components than in the second stage. Bandwidths for the nonparametric
weighted-within estimator are set to $h=(25/\min \{N,T\})^{1/2}$ and for the
Oracle estimator are set to $h=(1/\min \{N,T\})^{1/2}$. Both weighted-within
and Oracle estimators use the Gaussian kernel.

In addition to the estimators compared in Table~\ref{tbl:Simulation} and
Figure~\ref{fig:SimEmpiricalBounds} we also implement our estimator with the 
\cite{zhang2017estimating} pseudo-metric from \eqref{eqn:ZhangDistance}, and
also with cross-sectional/time-serial first moments, e.g. those used in \cite%
{bonhomme2021discretizing}. We report these results in Appendix~\ref%
{sect:AppSims}. As conjectured, smoothers using the \cite%
{zhang2017estimating} pseudo-metric performs well when $\alpha _{i}$ and $%
\gamma _{t}$ are scalars, but scales poorly as their dimension increases.%
\footnote{%
The \cite{zhang2017estimating} pseudo-metric is computationally very costly,
so we restrict Monte Carlo rounds to 1,000 and limit sample size to at most $%
N=T=300$.} Our eigenfunction based method performs well for
higher-dimensional $\alpha _{i}$ and $\gamma _{t}$, but does require
smoother functions. This is predicted in our theory, where error decreases
in $\rho =p/\min \{d_{\alpha },d_{\gamma }\}$. The moment based estimator
performs poorly regardless of dimension for $\alpha _{i}$ and $\gamma _{t}$,
since no cross-sectional or time-serial moments are injective for radial
type functions proposed in \eqref{eqn:sims}.\footnote{%
In Appendix~\ref{sect:AppSims} we implement DGPs that produce nominal
coverage for the moment based estimator, and compare performance in those
settings.}

\begin{figure}[tbp]
\centering
\includegraphics[width=0.45\linewidth]{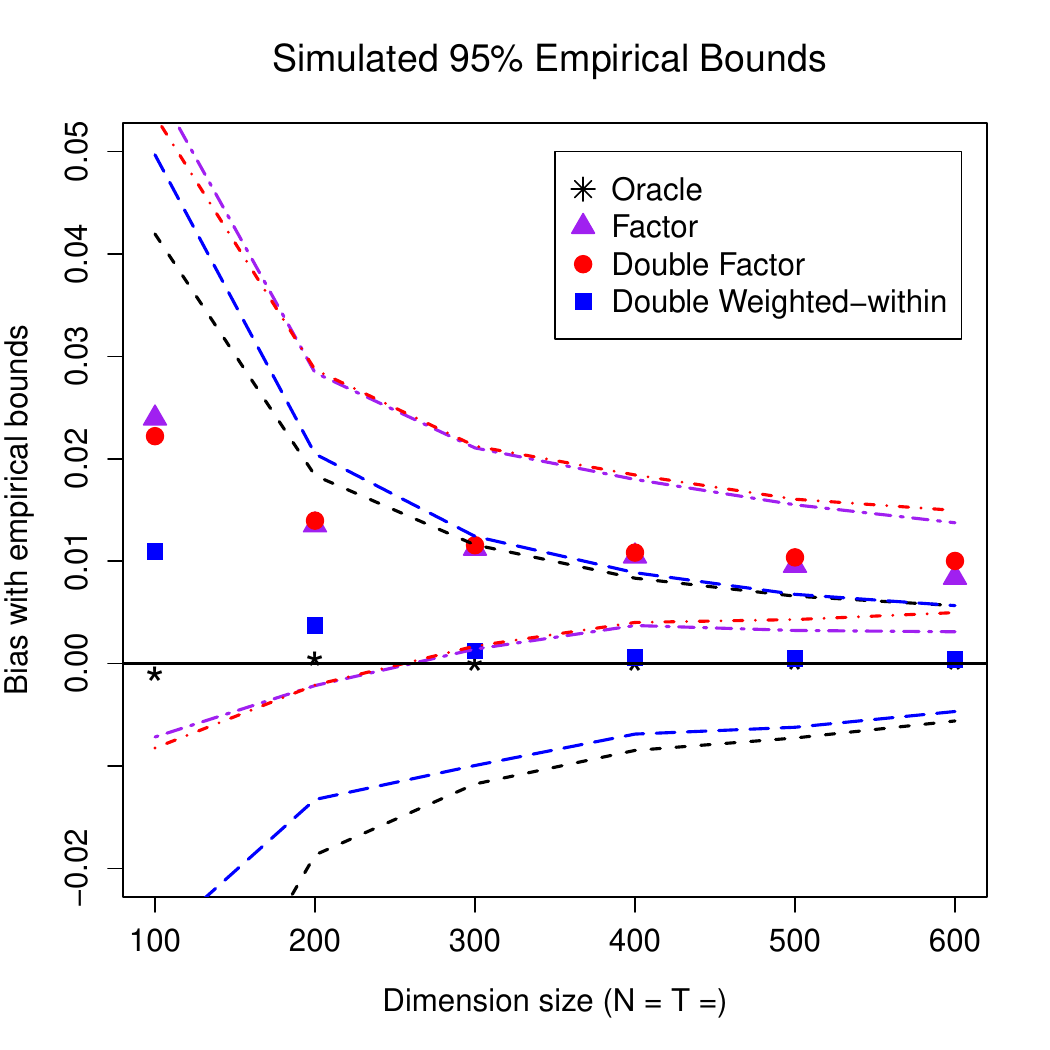} %
\includegraphics[width=0.45\linewidth]{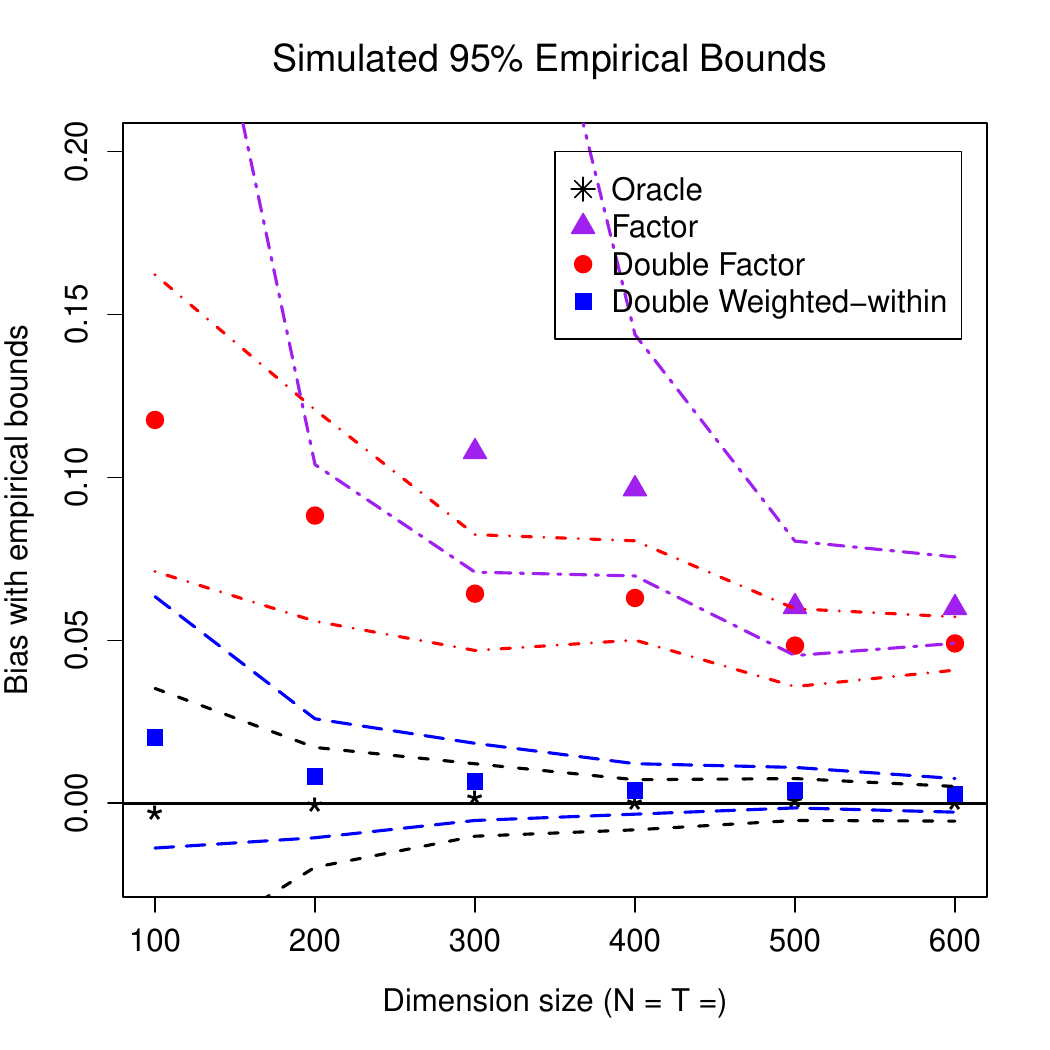}
\caption{Bias with 95\% empirical bounds. \newline
{\protect\footnotesize Points are bias. Lines are 95\% empirical bounds
across simulations. Oracle is the Neyman orthogonal estimator using the
weight-within transformation with known $(\protect\alpha_i, \protect\gamma%
_t) $. Factor is the standard factor model. Double factor estimates a factor
model for both $Y$ and $X$, and projects both out for $Y$ and $X$.
Weighted-within is the Neyman orthogonal estimator with weighted-within
transformation. Left panel is \eqref{eqn:sims} with $\protect\theta = 1/2$,
right panel sets $\protect\theta = 1/8$. $R_2 = 4$ in left panel and $R_2 =6$
in right panel.} }
\label{fig:SimEmpiricalBounds}
\end{figure}

\begin{table}[tbp]
\centering
\begin{tabular}{r|rrrr|}
& \multicolumn{4}{c}{Bias} \\ \hline
Dim. & Oracle & Factor & D. Factor & WW \\ \hline
100 & -0.0011 & 0.0240 & 0.0222 & 0.0109 \\ 
200 & 0.0004 & 0.0135 & 0.0140 & 0.0037 \\ 
300 & -0.0001 & 0.0112 & 0.0116 & 0.0012 \\ 
400 & -0.0001 & 0.0105 & 0.0108 & 0.0006 \\ 
500 & -0.0000 & 0.0096 & 0.0104 & 0.0005 \\ 
600 & 0.0000 & 0.0084 & 0.0100 & 0.0004 \\ \hline
\end{tabular}
\centering
\begin{tabular}{|rrrr}
\multicolumn{4}{c}{RMSE} \\ \hline
Oracle & Factor & D. Factor & WW \\ \hline
0.022 & 0.029 & 0.027 & 0.022 \\ 
0.010 & 0.016 & 0.016 & 0.009 \\ 
0.006 & 0.012 & 0.013 & 0.006 \\ 
0.004 & 0.011 & 0.011 & 0.004 \\ 
0.004 & 0.010 & 0.011 & 0.003 \\ 
0.003 & 0.009 & 0.010 & 0.003 \\ \hline
\end{tabular}
\begin{tabular}{r|rrrr}
& \multicolumn{4}{c}{Coverage} \\ \hline
Dim. & Oracle & Factor & D. Factor & WW \\ \hline
100 & 0.95 & 0.53 & 0.61 & 0.98 \\ 
200 & 0.93 & 0.47 & 0.45 & 0.99 \\ 
300 & 0.93 & 0.31 & 0.31 & 0.97 \\ 
400 & 0.93 & 0.14 & 0.12 & 0.96 \\ 
500 & 0.92 & 0.08 & 0.05 & 0.96 \\ 
600 & 0.91 & 0.08 & 0.01 & 0.95 \\ \hline
\end{tabular}%
\par
{\footnotesize {Dim. is the size of each dimension, with $N=T$. Oracle uses
the weight-within transformation with known $(\alpha_i, \gamma_t)$. Factor
is a factor model approximation. D. Factor for double factor uses a factor
model approximation for $Y$ and $X$. WW is the Neyman Orthogonal Weighted
Within estimator. }}
\caption{Bias, RMSE and coverage (5\% nominal test) for 1,000 Monte Carlo
rounds}
\label{tbl:Simulation}
\end{table}

\section{Conclusion\label{sect:Conclusion}}

In this paper we present novel theory for the linear panel setting with a
general function specification of unobserved heterogeneity. Using the
double-debias approach proposed in \cite{freeman2022multidimensional},
deriving from \cite{chernozhukov2022locally} methods, along with preliminary
estimators from \cite{freeman2023linear,freeman2022multidimensional}, we
show that under low level regularity conditions on the function of
unobserved heterogeneity, statistical inference on estimates follows. In
particular, we show properties of eigenfunctions that offer an incidental
debias in the tail terms in the functional singular value decomposition of
functions in a Hilbert space, when the weighted-within transformation from 
\cite{freeman2022multidimensional} is implemented in this setting. The
second orthogonalisation over covariates in the Neyman orthogonal estimator
allows this incidental debias to be asymptotically weaker than just applying
the weighted-within transformation to the equation for dependent variables.
The Neyman orthogonal estimator in general allows for weaker asymptotic
convergence for estimates of fixed-effects in either the dependent or
independent variables.

\appendix

\renewcommand{\thetheorem}{A.\arabic{theorem}} \setcounter{theorem}{0} %
\renewcommand{\thelemma}{A.\arabic{lemma}} \setcounter{lemma}{0} %
\renewcommand{\theproposition}{A.\arabic{proposition}} %
\setcounter{proposition}{0} \renewcommand{\theequation}{A.\arabic{equation}} %
\setcounter{equation}{0} \renewcommand{\theassumption}{A.\arabic{assumption}}
\setcounter{assumption}{0}

\section{Proof of technical results\label{sect:AppProofs}}

\subsection{Proof of Section~\protect\ref{sect:NO} results}

\begin{proof}[Proof of Lemma~\protect\ref{Lem: Stoch Eq}]
For any given $\epsilon >0$, there exists $\delta _{1},\delta _{2}>0$ so
that, for $A=X,Y$,%
\begin{eqnarray*}
&&\lim_{n,T\rightarrow \infty }\sup \Pr \left( \sqrt{NT}\left\Vert \hat{\nu}%
_{A}^{\ast }\left( \hat{g}_{0,A}\right) -\hat{\nu}_{A}^{\ast }\left(
g_{0,A}\right) \right\Vert >\epsilon \right) \\
&\leq &\lim_{n,T\rightarrow \infty }\sup \Pr \left( \sqrt{NT}\left\Vert \hat{%
\nu}_{A}^{\ast }\left( \hat{g}_{0,A}\right) -\hat{\nu}_{A}^{\ast }\left(
g_{0,A}\right) \right\Vert >\epsilon ,\left\Vert \hat{g}_{0,A}-g_{0,A}\right%
\Vert _{\mathcal{G}}<\delta _{1}\right) \\
&\leq &\lim_{n,T\rightarrow \infty }\sup \Pr \left( \sup_{\left\Vert \hat{g}%
_{0,A}-g_{0,A}\right\Vert _{\mathcal{G}}<\delta _{1}}\left\Vert \sqrt{NT}%
\hat{\nu}_{A}^{\ast }\left( \hat{g}_{0,A}\right) -\sqrt{NT}\hat{\nu}%
_{A}^{\ast }\left( g_{0,A}\right) \right\Vert >\epsilon \right) <\delta _{2},
\end{eqnarray*}%
where the last inequality follows by stochastic equicontinuity of $%
g_{A}\mapsto \sqrt{NT}\hat{\nu}_{A}^{\ast }(g_{A})$ at the data--generating
values $g_{0,A}$.

The second part follows from the Proposition on p. 50 \cite{Andrews1994}.
\end{proof}

\begin{proof}[Proof of Theorem \protect\ref{thm:sampleSplit}]
First note that the expansion of $\hat{\beta}_{NO}$ also applies to $%
\widehat{\beta }_{NO}^{SS}$, except that $\sum_{it}$ has to be replaced by $%
\sum_{k_{1},k_{2}=1}^{2}\sum_{\left( i,t\right) \in \mathcal{I}%
_{k_{1},k_{2}}}$ everywhere. Thus, the result of Theorem \ref{Th: NO general}
follows under the same conditions (\ref{eq: rate cond 0}) and (\ref{eq: rate
cond 2}) with $\sum_{it}$ replaced by $\sum_{k_{1},k_{2}=1}^{2}\sum_{\left(
i,t\right) \in \mathcal{I}_{k_{1},k_{2}}}$. Next,\ choose any partition $%
\left( k_{1},k_{2}\right) $ and, for any two matrices $A,B\in \mathbb{R}%
^{m\times n}$, let $\langle A,B\rangle
_{F}:=\sum_{i=1}^{m}\sum_{t=1}^{n}A_{it}B_{it}/\left( mn\right) =tr\left\{
A^{\prime }B\right\} /\left( mn\right) $ denote the scaled entrywise
Frobenius inner product. We then with to show that $\langle \Gamma _{X,k}-%
\hat{\Gamma}_{X.k}^{\left( k_{1},k_{2}\right) },\varepsilon ^{\left(
k_{1},k_{2}\right) }\rangle _{F}=o_{p}(1/\sqrt{NT})$ and \ \ $\langle \eta
_{k}^{\left( k_{1},k_{2}\right) },\Gamma _{Y}-\hat{\Gamma}_{Y}^{\left(
k_{1},k_{2}\right) }\rangle _{F}=o_{p}(1/\sqrt{NT})$, $k=1,...,K$. We have%
\begin{eqnarray*}
\left\vert \mathbb{E}\left[ \langle \Gamma _{X,k}-\hat{\Gamma}_{X.k}^{\left(
k_{1},k_{2}\right) },\varepsilon ^{\left( k_{1},k_{2}\right) }\rangle _{F}|%
\mathcal{F}^{\left( k_{1},k_{2}\right) }\right] \right\vert &=&\left\vert
\langle \Gamma _{X,k}-\hat{\Gamma}_{X.k}^{\left( k_{1},k_{2}\right) },%
\mathbb{E}\left[ \varepsilon ^{\left( k_{1},k_{2}\right) }|\mathcal{F}%
^{\left( k_{1},k_{2}\right) }\right] \rangle _{F}\right\vert \\
&\leq &\sqrt{\left\Vert \Gamma _{X,k}-\hat{\Gamma}_{X.k}^{\left(
k_{1},k_{2}\right) }\right\Vert _{F}\left\Vert \mathbb{E}\left[ \varepsilon
^{\left( k_{1},k_{2}\right) }|\mathcal{F}^{\left( k_{1},k_{2}\right) }\right]
\right\Vert _{F}}.
\end{eqnarray*}%
Next, recall that $tr\left\{ A^{\prime }B\right\} =vec\left( A\right)
^{\prime }vec\left( B\right) $ so that%
\begin{eqnarray*}
&&\mathbb{E}\left[ \langle \Gamma _{X,k}-\hat{\Gamma}_{X.k}^{\left(
k_{1},k_{2}\right) },\varepsilon ^{\left( k_{1},k_{2}\right) }\rangle
_{F}^{2}|\mathcal{F}^{\left( k_{1},k_{2}\right) }\right] \\
&=&\mathbb{E}\left[ vec\left( \Gamma _{X,k}-\hat{\Gamma}_{X.k}^{\left(
k_{1},k_{2}\right) }\right) ^{\prime }vec\left( \varepsilon ^{\left(
k_{1},k_{2}\right) }\right) vec\left( \varepsilon ^{\left(
k_{1},k_{2}\right) }\right) ^{\prime }vec\left( \Gamma _{X,k}-\hat{\Gamma}%
_{X.k}^{\left( k_{1},k_{2}\right) }\right) |\mathcal{F}^{\left(
k_{1},k_{2}\right) }\right] \\
&=&vec\left( \Gamma _{X,k}-\hat{\Gamma}_{X.k}^{\left( k_{1},k_{2}\right)
}\right) ^{\prime }\mathbb{E}\left[ vec\left( \varepsilon ^{\left(
k_{1},k_{2}\right) }\right) vec\left( \varepsilon \right) ^{\prime }|%
\mathcal{F}^{\left( k_{1},k_{2}\right) }\right] vec\left( \Gamma _{X,k}-\hat{%
\Gamma}_{X.k}^{\left( k_{1},k_{2}\right) }\right) \\
&\leq &\left\Vert \Gamma _{X,k}-\hat{\Gamma}_{X.k}^{\left(
k_{1},k_{2}\right) }\right\Vert _{F}^{2}\left\Vert \mathbb{E}\left[
vec\left( \varepsilon ^{\left( k_{1},k_{2}\right) }\right) vec\left(
\varepsilon ^{\left( k_{1},k_{2}\right) }\right) ^{\prime }|\mathcal{F}%
^{\left( k_{1},k_{2}\right) }\right] \right\Vert _{op}
\end{eqnarray*}

The result now follows from Lemma 6.1 in \cite{Chernozhukov2018}. \hfill
\end{proof}

\begin{proof}[Proof of Theorem~\protect\ref{thm:SSWWresult}]
We consider a given split $\left( k_{1},k_{2}\right) \in \left\{ \left(
1,1\right) ,\left( 1,2\right) ,\left( 2,1\right) ,\left( 2,2\right) \right\} 
$ and, for ease of notation, suppress dependence of $Z^{\left(
k_{1},k_{2}\right) }$, $\mathcal{W}^{\left( k_{1},k_{2}\right) }$, $\mathcal{%
F}^{\left( k_{1},k_{2}\right) }$, etc., on $\left( k_{1},k_{2}\right) $ in
the following, and so, for example, write $Z$, $\mathcal{W}$ and $\mathcal{F}
$. 
Moreover, conditional on $\mathcal{G}=\mathcal{F}\left( \lambda
_{i},f_{t}:i=1,...,N,t=1,...,T\right) $, we again make Theorem~\ref%
{thm:sampleSplit} assumptions including \eqref{eq: sample split cond}
conditions.  
With $\Gamma -\hat{\Gamma}=\Gamma -\mathcal{W}(\Gamma +\varepsilon )$ and $%
\Gamma _{X}-\hat{\Gamma}_{X}=\Gamma _{X}-\mathcal{W}(\Gamma _{X}+\eta )$, 
\begin{align*}
\mathbb{E}[\xi _{\Gamma X}|\mathcal{F}]& =\mathbb{E}[\langle \Gamma -%
\mathcal{W}\Gamma -\mathcal{W}\varepsilon ,\Gamma _{X}-\mathcal{W}\Gamma
_{X}-\mathcal{W}\eta \rangle _{F}|\mathcal{F}] \\
& =\mathbb{E}[\langle \Gamma -\mathcal{W}\Gamma ,\Gamma _{X}-\mathcal{W}%
\Gamma _{X}\rangle _{F}|\mathcal{F}]-\mathbb{E}[\langle \Gamma -\mathcal{W}%
\Gamma ,\mathcal{W}\eta \rangle _{F}|\mathcal{F}] \\
&\quad -\mathbb{E}[\langle \mathcal{W}\varepsilon ,\Gamma _{X}-\mathcal{W}%
\Gamma _{X}\rangle _{F}|\mathcal{F]}+\mathbb{E}[\langle \mathcal{W}%
\varepsilon ,\mathcal{W}\eta \rangle _{F}|\mathcal{F}]
\end{align*}%
With $\mathcal{F}$ being the filtration that defines $\mathcal{W}$. In
general, we have $\mathbb{E}\left[ \mathcal{W}\varepsilon |\mathcal{F},%
\mathcal{G}\right] =\mathcal{W}\mathbb{E}\left[ \varepsilon |\mathcal{F},%
\mathcal{G}\right]$. Full independence is sufficient, but not strictly
necessary for our result. Consider, 
\begin{align*}
\mathbb{E}[\langle \mathcal{W}\varepsilon ,\mathcal{W}\eta \rangle _{F}|%
\mathcal{F}] = \langle\mathcal{W}\mathbb{E}[\varepsilon|\mathcal{F}], 
\mathcal{W}\mathbb{E}[\eta|\mathcal{F}] \rangle _{F}.
\end{align*}
This can be shown to be $o_p(NT)^{-1/2}$ under mild stationarity conditions,
e.g., for AR(1) processes, see Remark~\ref{rem:weakDep}. Likewise, 
\begin{equation*}
\mathbb{E}[\langle \mathcal{W}\varepsilon ,\Gamma _{X}-\mathcal{W}\Gamma
_{X}\rangle _{F}|\mathcal{F},\mathcal{G}]=\langle \mathcal{W}\mathbb{E}\left[
\varepsilon |\mathcal{F},\mathcal{G}\right] ,\Gamma _{X}-\mathcal{W}\Gamma
_{X}\rangle _{F} \leq O_p((NT)^{-1/2}\xi_X) \|\mathcal{W}\|.
\end{equation*}%
thus, $\xi_X\|\mathcal{W}\| = o_p(1)$ is sufficient for the term to be $%
o_p(NT)^{-1/2}$.

Next, we analyse the variances of the three last terms. In the $N\times T$
matrix case, the operator $\mathcal{W}$ nontrivially, but linearly,
transforms rows and columns. Generically, the vectorised term $vec(\mathcal{W%
}\varepsilon )=\mathbb{W}vec(\varepsilon )$ where $\mathbb{W}\in \mathbb{R}%
^{NT\times NT}$, which has an adjoint $\mathbb{W}^{\ast }\in \mathbb{R}%
^{NT\times NT}$. Hence, we can write, 
\begin{equation*}
\mathbb{E}[\langle \mathcal{W}\varepsilon ,\mathcal{W}\eta \rangle _{F}^{2}]=%
\frac{1}{(NT)^{2}}\mathbb{E}\left[ \langle \mathbb{W}vec(\varepsilon ),%
\mathbb{W}vec(\eta )\rangle ^{2}\right] =\frac{1}{(NT)^{2}}\mathbb{E}\left[
\langle vec(\varepsilon ),\mathbb{W}^{\ast }\mathbb{W}vec(\eta )\rangle ^{2}%
\right]
\end{equation*}%
where the adjoint $\mathbb{W}^{\ast }$ exists since we only consider bounded
operators. Define $x_k:= \mathbb{W}^{\ast }\mathbb{W}vec(\eta_k )$ and $%
\mathcal{E}:= \mathbb{E}\left[ vec(\varepsilon )vec(\varepsilon )^{\prime
}|X ,\mathcal{F}\right]$. Then, denoting $\|\cdot\|_2$ as a matrix's largest
singular value, 
\begin{align*}
\frac{1}{(NT)^{2}}\mathbb{E}& \left[ [\mathbb{W}^{\ast }\mathbb{W}vec(\eta_k
)]^{\prime } vec(\varepsilon )vec(\varepsilon )^{\prime } \mathbb{W}^{\ast }%
\mathbb{W}vec(\eta_k )|X ,\mathcal{F}\right] = \frac{1}{(NT)^{2}} x_k^\prime 
\mathcal{E} x_k \leq \frac{1}{(NT)^{2}} \|x_k\|^2 \|\mathcal{E}\|_2
\end{align*}%
The class of linear operators we consider are collections of row and column
transformation. Hence, 
\begin{equation*}
\|x_k\|^2 = \Vert \mathcal{W}^{\ast }\mathcal{W}\eta_k \Vert _{F}^{2}\leq
\Vert \mathcal{W}\Vert ^{2}\cdot \Vert \mathcal{W}\eta_k \Vert _{F}^{2}
\end{equation*}%
where $\Vert \mathcal{W}\Vert $ is the operator norm for the row and column
transformations embedded in $\mathcal{W}$.

Likewise we can also bound the term, 
\begin{align*}
\mathbb{E}\left[ \langle \mathcal{W}\varepsilon ,\Gamma _{X_k}-\mathcal{W}%
\Gamma _{X_k}\rangle_F^2 |X ,\mathcal{F}\right] &= \frac{1}{(NT)^2} \mathbb{E%
}\left[ \langle vec(\varepsilon), \mathbb{W}^* vec(\Gamma _{X_k}-\mathcal{W}%
\Gamma _{X_k}) \rangle^2 |X ,\mathcal{F}\right] \\
&\leq \frac{1}{(NT)^2} \|\mathcal{W}\|^2 \|\mathcal{E}\|_2 \|\Gamma _{X_k}-%
\mathcal{W}\Gamma _{X_k}\|_F^2.
\end{align*}

\hfill
\end{proof}

The $\mathcal{E}$ term turns out to be a Toeplitz matrix, with singular
values bounded as follows:

\begin{lemma}
Assume there exists a permutation of $\{1, \dots, N-1\}$, called $\pi$ such
that $\gamma(n,m) := \{\mathbb{E}[\varepsilon_{\pi_i t} \varepsilon_{\pi_js}
| \mathcal{F}] : |\pi_i-\pi_j| = n, |t-s| = m\}$ is identical for all $%
\pi_i,t,\pi_j,s$, i.e. $\varepsilon$ is covariance stationary. Restrict $%
\gamma(n,m) \to 0$ for all $n$ as $m\to\infty$, which insists that time is
ordinal, and that distance in time necessarily reduces correlation. Also
assume weak dependence in the cross-section such that $\sum_{n=1}^{N-1}
\gamma(n,m) \to C\cdot \gamma(0,m)$ as $N\to\infty$. Then $\|\mathcal{E}\|_2
= O_p(1)$.
\end{lemma}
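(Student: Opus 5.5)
We bound the spectral norm of the symmetric covariance matrix $\mathcal{E}=\mathbb{E}[vec(\varepsilon)vec(\varepsilon)'|\mathcal{F}]$ by its maximum absolute row sum, using $\|\mathcal{E}\|_2\le \sqrt{\|\mathcal{E}\|_1\|\mathcal{E}\|_\infty}=\|\mathcal{E}\|_\infty$ (Schur's test, equivalently Gershgorin's theorem). This is the same device as the Ostrowski-type argument after Assumption~\ref{ass:boundedNorms}. We never need the exact Toeplitz spectral theory; the stationarity assumption only serves to make every row sum the same function of the lags.

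\emph{Step 1 (reindexing).} Relabel the cross-section by the permutation $\pi$. This replaces $\mathcal{E}$ by $P\mathcal{E}P'$ for a permutation matrix $P$, which leaves $\|\cdot\|_2$ unchanged. After reordering, covariance stationarity says $\mathbb{E}[\varepsilon_{it}\varepsilon_{js}|\mathcal{F}]=\gamma(|i-j|,|t-s|)$. In the vectorised ordering, $\mathcal{E}$ is therefore block Toeplitz with Toeplitz blocks.

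\emph{Step 2 (row sums).} For a fixed row $(i,t)$, the row sum is
\[
\sum_{j,s}|\gamma(|i-j|,|t-s|)|\le 2\sum_{m=0}^{T-1}\sum_{n=0}^{N-1}|\gamma(n,m)|.
\]
The cross-sectional weak dependence assumption gives $\sum_{n}\gamma(n,m)\to C\gamma(0,m)$ for each $m$, so the inner sum is bounded by $(1+C')|\gamma(0,m)|$, uniformly in $N$ for large $N$. The whole bound then reduces to controlling $\sum_{m\ge0}|\gamma(0,m)|$.

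\emph{Step 3 (time dimension, the main obstacle).} As stated, the hypothesis $\gamma(n,m)\to0$ as $m\to\infty$ does not give summability: decay to zero alone does not bound $\sum_m|\gamma(0,m)|$, and Toeplitz theory needs an absolutely summable (or at least bounded-symbol) autocovariance sequence. I would therefore strengthen the reading of the hypothesis to summability, $\sum_m|\gamma(0,m)|<\infty$. This holds for the processes of Remark~\ref{rem:weakDep}:
\begin{itemize}
\item for the AR(1) case, $\gamma(0,m)\propto\rho^m$, which is geometrically summable;
\item for the MA($\infty$) case, $\gamma(0,m)=\sigma^2\sum_s\theta_s\theta_{s+m}$, and the tail condition on $\theta$ gives summability via $\sum_m|\gamma(0,m)|\le\sigma^2(\sum_s|\theta_s|)^2$.
\end{itemize}
A weaker alternative is to invoke Szeg\H{o}/Grenander--Szeg\H{o}, under which $\|\mathcal{E}\|_2$ is bounded by the sup of the bivariate spectral density $\sum_{n,m}\gamma(n,m)e^{i(n\omega_1+m\omega_2)}$; boundedness of that density is again implied by absolute summability. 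The absolute values are also needed in Step 2; if one only has the signed limit, I would assume $\gamma\ge0$ or impose absolute summability in $n$ as well.

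\emph{Conclusion.} Combining the steps, every row sum of $\mathcal{E}$ is bounded by $2(1+C')\sum_m|\gamma(0,m)|=O(1)$ uniformly in $N,T$. Since $\gamma$ is defined conditionally on $\mathcal{F}$, the bound holds in probability, giving $\|\mathcal{E}\|_2=O_p(1)$.
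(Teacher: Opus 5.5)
Your proposal is correct and follows essentially the paper's route. The paper also bounds $\|\mathcal{E}\|_2$ by the maximal absolute row/column sum via Ostrowski's (Gershgorin) disc theorem, uses the cross-sectional condition to collapse the sum over $n$ to a multiple of the $n=0$ terms, and closes by appealing to the ``standard long run variance condition'' $\sum_{m\ge 1}|\gamma(0,m)|=O(1)$, which, as you rightly point out, does not follow from $\gamma(n,m)\to 0$ and is tacitly added to the hypotheses. You are more careful than the paper in flagging this, and in noting that absolute rather than signed cross-sectional summability is needed; the paper passes from the signed condition to absolute sums without comment.
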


\begin{proof}
Without loss assume $N,T$ are even. Define the matrix $\mathcal{E}$ as
before with permuted $i$ indices, 
\begin{align*}
\mathcal{E} := \mathbb{E}[vec(\varepsilon) vec(\varepsilon)^{\prime }| 
\mathcal{F}] = 
\begin{bmatrix}
\gamma(0,0) & \gamma(1,0) & \dots & \gamma(N-1,T -1) \\ 
\gamma(1,0) & \gamma(0,0) & \dots & \gamma(N-2,T -1) \\ 
\vdots &  &  &  \\ 
\gamma(N-1,T -1) & \gamma(N-2,T -1) & \dots & \gamma(0,0)%
\end{bmatrix}%
\end{align*}
Then by Ostrowski's disc theorem, all singular values are bounded in the
disc, 
\begin{align*}
\sigma_r(\mathcal{E}) \subseteq \bigcup_{\ell} \Big\{z: |z - \gamma(0,0)|
\leq \sum_{k\neq \ell} |\mathcal{E}_{k \ell}|\Big\}.
\end{align*}
From $\gamma(n,m) \to 0$, these discs are widest for $\ell$ such that $%
m=T/2, \dots, 0, \dots T/2-1$ in $\gamma(n,m)$, i.e. columns of $\mathcal{E}$
that contains $\gamma(n,m)$ furthest from $0$. Call $M = \{T/2, \dots, 0,
\dots T/2-1\}$. Call $n^{\prime }$ possible sequences for $n$ in $%
\gamma(n,m) $ through the columns of $\mathcal{E}$. Then, 
\begin{align*}
\max_{r}\sigma_r(\mathcal{E}) = \max_{\ell}( \gamma(0,0) + \sum_{k\neq \ell}
|\mathcal{E}_{k \ell}|) &= \max_{n^{\prime }} \sum_{n \in n^{\prime }}
\sum_{m\in M} |\gamma(n,m)| \to C\gamma(0,0) + 2C\sum_{m = 1}^{\infty}
|\gamma(0,m)|.
\end{align*}
This is bounded under standard long run variance conditions such that $%
\sum_{m = 1}^{\infty} |\gamma(0,m)| = O(1)$. Hence, our assumption that $\|%
\mathcal{E}\|_2$ is bounded amounts to a bounded long run variance
condition, with uniform weak dependence over $i$. It is trivially satisfied
for e.g., i.i.d. data since $\|\mathbb{E}\left[vec(\varepsilon)
vec(\varepsilon)^\prime | \mathcal{F}\right]\|_2 = O_p(\sigma_\varepsilon^2)$%
, and i.n.i.d leads to $\|\mathbb{E}\left[vec(\varepsilon)
vec(\varepsilon)^\prime |\mathcal{F}\right]\|_2 = \max_{it}O_p(\mathbb{E}%
[\varepsilon_{it}^2])$.

\hfill
\end{proof}

\begin{proof}[Proof of Remark~\protect\ref{rem:weakDep}]
Take the AR(1) process for covariate noise term, $\eta_{it}$, 
\begin{align*}
\eta_{it} = \rho \eta_{it-1} + e_{\eta,it}, & & |\rho|<1, & & e_{\eta,it}
\sim iid, \mathbb{E}e_{\eta,it} = 0, & & \mathbb{E}e_{\eta,it}^4 <\infty,
\end{align*}
where $e_{\eta,it}$ is absolutely integrable. Denote $\sigma^2:= \mathbb{E}%
e_{\eta,it}^2$. Without loss, assume the partition $\mathcal{I}%
_{k_{1},k_{2}} $ we consider are for $t\geq1$, and the conditioning
filtration $\mathcal{F}_{k_{1},k_{2}}$ are for $t\leq0$. Then the term 
\begin{eqnarray*}
\mathbb{E}&\left[ \frac{1}{NT}\sum_{\left( i,t\right) \in \mathcal{I}%
_{k_{1},k_{2}}}\eta _{it}(\hat{\Gamma}_{Y,it}^{\left( k_{1},k_{2}\right)
}-\Gamma _{Y,it})|\mathcal{F}_{k_{1},k_{2}}\right] =\frac{1}{NT}\sum_{\left(
i,t\right) \in \mathcal{I}_{k_{1},k_{2}}}\mathbb{E}\left[ \eta _{it}|%
\mathcal{F}_{k_{1},k_{2}}\right] (\hat{\Gamma}_{Y,it}^{\left(
k_{1},k_{2}\right) }-\Gamma _{Y,it}) \\
&= \frac{1}{NT}\sum_{\left( i,t\right) \in \mathcal{I}_{k_{1},k_{2}}}\rho^t
\eta_{i0}(\hat{\Gamma}_{Y,it}^{\left( k_{1},k_{2}\right) }-\Gamma _{Y,it})
\leq \left(\frac{1}{NT}\sum_{\left( i,t\right) \in \mathcal{I}%
_{k_{1},k_{2}}}\rho^{2t}\eta_{i0}^2 \right)^{1/2}O_p(\xi_{Y})
\end{eqnarray*}%
This is then bound by $O_p(T^{-1/2}\xi_{Y})$, which is $o_p(NT)^{-1/2}$ for $%
N\sim T$ and $\xi_{Y} = o_p(NT)^{-1/4}$.

To bound the second order term, begin with the term, 
\begin{align*}
\frac{1}{\left( NT\right) ^{2}}\sum_{\left( i,s\right) \in \mathcal{I}%
_{k_{1},k_{2}}}\sum_{\left( j,t\right) \in \mathcal{I}_{k_{1},k_{2}}}\mathbb{%
E}\left[ \eta _{is}^{\prime }\eta _{jt}|\mathcal{F}_{k_{1},k_{2}}\right] {(%
\hat{\Gamma}_{Y,is}^{\left( k_{1},k_{2}\right) }-\Gamma _{Y,is})(\hat{\Gamma}%
_{Y,jt}^{\left( k_{1},k_{2}\right) }-\Gamma _{Y,jt}) } \\
= \frac{1}{\left( NT\right) ^{2}}\sum_{\left( i,s\right) \in \mathcal{I}%
_{k_{1},k_{2}}}\sum_{\left( i,t\right) \in \mathcal{I}_{k_{1},k_{2}}}\mathbb{%
E}\left[ \eta _{is}^{\prime }\eta _{it}|\mathcal{F}_{k_{1},k_{2}}\right] {(%
\hat{\Gamma}_{Y,is}^{\left( k_{1},k_{2}\right) }-\Gamma _{Y,is})(\hat{\Gamma}%
_{Y,it}^{\left( k_{1},k_{2}\right) }-\Gamma _{Y,it}) }
\end{align*}
where the second line simply uses independence over $i$. By standard AR(1)
arguments there is, 
\begin{align*}
\mathbb{E}\left[ \eta _{is}^{\prime }\eta _{it}|\mathcal{F}_{k_{1},k_{2}}%
\right] = \rho^{t+s}\eta_{i0}^2 + \sum_{m =
0}^{\min\{t,s\}-1}\rho^{2m}\sigma^2
\end{align*}
Label 
$A_{it} := (\hat{\Gamma}_{Y,it}^{\left( k_{1},k_{2}\right) }-\Gamma _{Y,it})$%
. Then, 
\begin{align*}
\frac{1}{\left( NT\right) ^{2}}\sum_{its} \mathbb{E}\left[ \eta
_{is}^{\prime }\eta _{it}|\mathcal{F}_{k_{1},k_{2}}\right]A_{it}A_{is} = 
\frac{1}{\left( NT\right) ^{2}}\sum_{its} \rho^{t+s}\eta_{i0}^2A_{it}A_{is}
+ \frac{\sigma^2}{\left( NT\right) ^{2}}\sum_{its} \sum_{m =
0}^{\min\{t,s\}}\rho^{2m}A_{it}A_{is}.
\end{align*}
Use geometric series, $\sum_{m = 0}^{\min\{t,s\}-1}\rho^{2m} =
(1-\rho^{2\min\{t,s\}})/(1-\rho^2) = O(1)$, to bound second term, 
\begin{align*}
\frac{\sigma^2}{\left( NT\right) ^{2}}&\sum_{its} \sum_{m =
0}^{\min\{t,s\}-1}\rho^{2m}A_{it}A_{is} = \frac{\sigma^2}{\left( NT\right)
^{2}}\sum_{its} A_{it}A_{is}\cdot O(1) \\
&\leq O(1)\cdot\frac{\sigma^2}{\left( NT\right) ^{2}}\left(\sum_{its}
A_{it}^2\right)^{1/2}\left(\sum_{its} A_{is}^2\right)^{1/2} \leq \frac{%
\sigma^2}{\left( NT\right)}O_p(T\cdot \xi_Y^2)
\end{align*}
which is $o_p((NT)^{-1})$ for $N\sim T$ and $\xi_{Y} = o_p(NT)^{-1/4}$.
Next, two Cauchy-Schwarz bounds give, 
\begin{align*}
\frac{1}{\left( NT\right) ^{2}}\sum_{its} \rho^{t+s}\eta_{i0}^2A_{it}A_{is}
\leq \frac{1}{\left( NT\right) ^{2}} \left(\sum_{its}A_{it}^4\right)^{1/2}
\left(\sum_{ts}\rho^{2(t+s)}\sum_{i}\eta_{i0}^4\right)^{1/2} \\
\leq \frac{\sqrt{NT}}{\left( NT\right) ^{2}} \sum_{it}A_{it}^2 \left(\frac{1%
}{(1-\rho^2)^2}\frac{1}{N}\sum_{i}\eta_{i0}^4\right)^{1/2} = \frac{1}{\sqrt{%
NT}}O_p(\xi_Y^2)\cdot O_p(1).
\end{align*}
This is again $o_p((NT)^{-1})$ for $N\sim T$ and $\xi_{Y} = o_p(NT)^{-1/4}$.
The MA($\infty$) result follows similarly.

Condition $\langle\mathcal{W}\mathbb{E}[\varepsilon|\mathcal{F}], \mathcal{W}%
\mathbb{E}[\eta|\mathcal{F}] \rangle _{F} = o_p(NT)^{-1/2}$ holds in the
Remark~\ref{rem:weakDep} AR(1) process, considering again the additive
univariate weight example from above: 
\begin{align*}
\frac{1}{NT}\sum_{it} \sum_j w_{ij} \mathbb{E}[\varepsilon_{jt}|\mathcal{F}]
\sum_{j^\prime} w_{ij^\prime} \mathbb{E}[\eta_{j^\prime t}|\mathcal{F}] = 
\frac{1}{NT} \sum_j\varepsilon_{j0} \sum_{j^\prime} \eta_{j^\prime 0}
\sum_{i} w_{ij} w_{ij^\prime} \sum_t \rho^{t} \tilde\rho^{t} = \frac{R^2}{T}%
O_p({Nh})^{-1}
\end{align*}
where, without loss, we rearrange the conditioning set $\mathcal{F}$ to
contain any $t\leq 0$. By mean zero and finite variance in $%
\varepsilon_{i0},\eta_{i0}$ this is $o_p(NT)^{-1/2}$ for $R^2h^{-1} =
o(NT)^{-1/2}$, which is easy to satisfy.

\hfill \hfill
\end{proof}

\subsection{Proof of Section~\protect\ref{sect:Asymptotics} Results}

\begin{proof}[Proof of Lemma~\protect\ref{lemma:frewei2023}]
Refer to the online supplement to \cite{freeman2023linear} for Proof of
Theorem 1. The term $\|S\|_F^2$ there can be bound by $O_p(NTR^{-2\rho})$.
Then, again referencing terminology used in that paper, $R\eta/\sqrt{NT} =
O_p\big(R\min\{N,T\}^{-1/2}\big) + O_p(R^{1-\rho})$, which is the bounding
factor for $\widehat{\beta}_{LS} - \beta^0$. \hfill
\end{proof}

Here we state regularity conditions to apply theory from \cite%
{bai2023approximate}. We define $\delta \in (0,1]$ to be the constant such
that $N^{ 1-\delta}\Sigma_R^2$ converges in probability to pd, where $%
\Sigma_R$ is the diagonal matrix of singular values, $\sigma_r$ for $r
=1,\dots, R$. From Assumption~\ref{ass:singularvalues}, this constrains $%
R\lesssim N^{\frac{1-\delta}{2\rho +1}}$. For $R = R_X =
\min\{N,T\}^{1/6\rho}$ rate chosen in Lemma~\ref{lemma:bai2009fe} below,
this translates to $\delta = (4\rho - 1)/6\rho$ which is in $(0,1)$ for all $%
\rho\geq1$, hence admissable by the theory in \cite{bai2014theory}.

\begin{assumption}
\label{ass:AppNorm} 
\begin{enumerate*}[(i).,series = tobecont, itemjoin = \quad]
    \item $\mathbb{E}X_{it}^4 = O(1)$,
    \item $N^{-\delta}\lambda'\lambda \xrightarrow[]{p} \Sigma_\lambda >0$ for $\delta \in (0,1]$,
    \item  $T^{-1}f'f \xrightarrow[]{p} \Sigma_f >0$. 
        \item $\mathbb{E}[u_{ir}^4] <\infty$, $\mathbb{E}[v_{ir}^4] <\infty$
    \end{enumerate*}
\end{assumption}

\begin{assumption}
\label{ass:AppError}

\begin{enumerate}
\item $\mathbb{E}\left[ N^{-1/2} \sum_i (\varepsilon_{it}\varepsilon_{is} - 
\mathbb{E}[\varepsilon_{it}\varepsilon_{is}]) \right]^2 = O(1)$, 

\item For all $i$: $(\sqrt{N}T)^{-1}\|\varepsilon_i^{\prime }\varepsilon\|_F
= O_p(\min\{N,T\}^{-1})$, for all $t$: $(\sqrt{T}N)^{-1}\|\varepsilon_t^{%
\prime }\varepsilon\|_F = O_p(\min\{N,T\}^{-1})$.
\end{enumerate}
\end{assumption}

\begin{assumption}
\label{ass:AppCorrelation} For each $t$, 
\begin{enumerate*}[(i).,series = tobecont, itemjoin = \,\,\,]
    \item $\mathbb{E}\|\frac{1}{N^{\delta/2}}\sum_i\lambda_i\varepsilon_{it}\|^2 = O(1)$,
    \item $\frac{1}{NT}\varepsilon_t'\varepsilon'f = O_p(\min\{N,T\}^{-2}$;
    For each $i$,
    \item $\mathbb{E}\|\frac{1}{\sqrt{T}}\sum_tf_t\varepsilon_{it}\|^2 = O(1)$,
    \item $\frac{1}{NT}\varepsilon_i'\varepsilon'\lambda = O_p(N^{-\delta})$;
    \item $\lambda'\varepsilon'f = O_p(\sqrt{N^\delta T})$
    \end{enumerate*}
\end{assumption}

\begin{proof}[Proof of Lemma~\protect\ref{lemma:bai2009fe}]
Refer to Proposition A.1 in \cite{Bai2009}. 
The proof is mostly the same, with the addition of the approximation error$%
\sum_{r = R + 1}^{\min\{N,T\}} \lambda_{ir}f_{tr}$. Define $\Sigma_R$ is the
diagonal matrix of the $R$ first singular values $\sigma_r$. Take, 
\begin{align*}
\left[ \frac{1}{NT}\sum_i (Y_i - X_i\widehat{\beta})(Y_i - X_i\widehat{\beta}%
)^\prime \right] \widehat{f} = \widehat{f} \Sigma_R
\end{align*}
with $Y_i - X_i\widehat{\beta} = X_i(\beta^0 - \widehat{\beta}) + \sum_{r =
1}^R \lambda_{ir}f_{tr} + \sum_{r = R + 1}^{\min\{N,T\}} \lambda_{ir}f_{tr}
+ \varepsilon_{it}$. The additional term, $\sum_{r = R + 1}^{\min\{N,T\}}
\lambda_{ir}f_{tr}$ adds the following term to the proof of Proposition A.1
in \cite{Bai2009}, 
\begin{align*}
\frac{1}{\sqrt{T}} &\left\| \frac{1}{NT}\sum_{r=R+1}^{\min\{N,T\}}\sum_{r^%
\prime=R+1}^{\min\{N,T\}} \sum_i \lambda_{ir}\lambda_{ir^\prime} f_r
f_{r^\prime}^\prime \widehat{f}\, \right\|_F \leq \sqrt{R}\left\| \frac{1}{NT%
}\sum_{r=R+1}^{\min\{N,T\}} \sum_i \lambda_{ir}^2 f_r f_{r}^\prime \right\|_F
\\
&= \sqrt{R}\left\| \frac{1}{NT}\sum_{r=R+1}^{\min\{N,T\}} \sigma_r^2(\Gamma)
v_r v_r^\prime \right\|_F \leq \sqrt{R}\frac{1}{NT}\sum_{r=R+1}^{\min\{N,T%
\}} \sigma_r^2(\Gamma) \left\|v_r v_r^\prime\right\|_F = O_p\left(R^{1/2 -
2\rho}\right).
\end{align*}
The term $v_r$ is an orthonormal right singular vector. Square this for the
final result. There are also cross product terms between the approximation
error and $(\beta^0 - \widehat{\beta})$, which are also smaller order.

Consider now only the approximation error interacting with $\varepsilon$.
Take $\lambda_{ir} = u_{ir} {\sigma}_r$, $f_{tr} = v_{tr}$ for $r = 1,\dots,
R$. Define $H = (\lambda^{\prime }\lambda/N)(f^{\prime }\hat f/T)
\Sigma_R^{-2}$, where $\Sigma_R$ is the diagonal matrix of the $R$ first
singular values $\sigma_r$. \cite{bai2023approximate} show that, 
\begin{align*}
\frac{1}{{T}}\left\|\hat f - fH\right\|^2 = O_p(\min\{N,T\}^{-1})\cdot
\|\Sigma_R^{-2}\|.
\end{align*}
$\|\Sigma_R^{-2}\| = \sum_{r=1}^R \sigma_r^{-2} \lesssim R^{2\rho +1}$ from
Assumption~\ref{ass:singularvalues} which gives the final expression in the
lemma. 

\hfill
\end{proof}

\begin{proof}[Proof of Corollary~\protect\ref{cor:bai2009fe}]
By Lemma~\ref{lemma:frewei2023} $\|\widehat{\beta}_{LS} - \beta^0\|^4 =
\min\{N,T\}^\frac{2 - 2\rho}{\rho}$. Then $\min\{N,T\}^\frac{1}{3\rho}\|%
\widehat{\beta}_{LS} - \beta^0\|^4 = \min\{N,T\}^{\frac{1}{3\rho} + \frac{2
- 2\rho}{\rho}} = \min\{N,T\}^{\frac{7-6\rho }{3\rho}}$. For $\rho\geq7/3$
this is $o_p(\min\{N,T\}^{-1})$.

\hfill
\end{proof}

The fact factor loadings and factors are only identified up-to-rotations is
ignored here, since analysis can be performed after multiplying factor
loadings and factors by $HH^{-1}$, the rotation matrix from Lemma~\ref%
{lemma:bai2009fe}. As shown in the proof of Lemma~\ref{lemma:bai2009fe},
number of estimated $R$ is upper bounded in order to ensure $H$ is
asymptotically non-singular, hence invertible at all points in the sequence
for which $R\to\infty$. Hence, we analyse the problem without considering
these rotations.

\begin{proof}[Proof of Lemma~\protect\ref{lemma:withinEstimatorBias}]
For $(i,t)\in \mathcal{I}_{k_1,k_2}$, weights $W_{r,jj^{\prime }}^{(1)}$ and 
$W_{r,ss^{\prime }}^{(2)}$ are estimated from $(j,s) \in \mathcal{I}%
\backslash \mathcal{I}_{k_1,k_2}$, and importantly $(i,t)\in \mathcal{I}%
_{k_1,k_2}$ only has non-zero weights for other $(i,t)\in \mathcal{I}%
_{k_1,k_2}$. Hence, we study asymptotic variance by partition without any
stochastic dependence between weights and idiosyncratic terms. In the
following, consider one partition $(i,t)\in \mathcal{I}_{k_1,k_2}$.

For $\xi_{\Gamma X} := (NT)^{-1}\sum_{it}({\Gamma }_{X,it}-\widehat{\Gamma }%
_{X,it})\big({\Gamma }_{it}-\widehat{\Gamma }_{it}\big)$ we verify, 
\begin{align*}
\mathbb{E}[\xi_{\Gamma X} ] &= R_2^2\cdot O_p(h_\lambda^2 h_f^2
+\xi_\lambda^2\xi_f^2).
\end{align*}
Assume without loss that $dim(X_{it}) = 1$ to avoid heavy notation.\footnote{%
The following analysis can be done separately for each $k=1\dots,
dim(X_{it}) $, so it does not matter.} Expand $\Gamma_{X,it} -
\hat\Gamma_{X,it}$, 
\begin{align*}
\Gamma_{X} - \hat\Gamma_{X} = (\mathbb{I}_N - W^{(1)})\Gamma_{X}(\mathbb{I}%
_N - W^{(2)}) + W^{(1)\prime } \eta - \eta W^{(2)\,^{\prime }} + W^{(1)}\eta
W^{(2)\,^{\prime }}.
\end{align*}
Define $\tilde\Gamma_X := (\mathbb{I}_N - W_1)\Gamma_{X}(\mathbb{I}_N -
W_2)^{\prime }$, $\hat\eta := W^{(1)}\eta + \eta W^{(2)\,^{\prime }} -
W^{(1)}\eta W^{(2)\,^{\prime }}$, likewise for $\tilde\Gamma $ and $%
\hat\varepsilon$. 
\begin{align*}
\xi_{\Gamma X} = \frac{1}{NT} tr\left\{\tilde\Gamma_X^{\prime
}\tilde\Gamma\right\} - \frac{1}{NT} tr\left\{\tilde\Gamma_X^{\prime
}\hat\varepsilon\right\} - \frac{1}{NT} tr\left\{\hat\eta^{\prime
}\tilde\Gamma\right\} + \frac{1}{NT} tr\left\{\hat\eta^{\prime
}\hat\varepsilon\right\}.
\end{align*}

The sample split ensures $\mathbb{E}[W^{(1)}\eta] = \mathbb{E}[\eta
W^{(2)\,^{\prime }}] = \mathbb{E}[W^{(1)}\eta W^{(2)\,^{\prime }}] = 0$,
likewise for $\varepsilon$. Mean independence between $\{\varepsilon,\eta\}$
and $\{\Gamma_X,\Gamma\}$ thus ensures the last three terms are mean zero.
Hence we must show $\mathbb{E}\frac{1}{NT} tr\left\{\tilde\Gamma_X^{\prime
}\tilde\Gamma\right\} = R^2\cdot O_p(h_\lambda^2 h_f^2
+\xi_\lambda^2\xi_f^2) $. By Cauchy-Schwarz, 
\begin{align*}
\left|\mathbb{E}\left[\frac{1}{NT} tr\left\{\tilde\Gamma_X^{\prime
}\tilde\Gamma\right\}\right]\right| \leq \mathbb{E}\left[\frac{1}{NT}
\|\tilde\Gamma_X\|_F \|\tilde\Gamma\|_F\right].
\end{align*}

Here we bound the following, 
\begin{align*}
\frac{1}{\sqrt{NT}}\left\| \tilde\Gamma\right\|_F = \frac{1}{\sqrt{NT}}%
\left\|(\mathbb{I}_N - W^{(1)}) \Gamma(\mathbb{I}_T - W^{(2)})^\prime
\right\|_F = R_2^2\cdot O_p(h_\lambda h_f + \xi_\lambda\xi_f).
\end{align*}

For each $k = 1,\dots, R_0$ in $\tilde\Gamma= \left(\mathbb{I}_N - W^{(1)}
\right) \sum_{k = 1}^{R_0} h_k(\lambda_k, f_k) \left(\mathbb{I}_T - W^{(2)}
\right)$ take row and column wise product $\odot$, and mean value theorem,
for $\tilde\lambda_k$ entrywise between $\lambda_k$ and $\hat\lambda_k$,
likewise $\tilde f_k$: 
\begin{align}  \label{eqn: additive Taylor Expansion}
h_k(\lambda_k, f_k) = h_k(\hat\lambda_k, \hat f_k) + (\lambda_k -
\hat\lambda_k)\odot \nabla_\lambda^{(1)}h_k(\tilde\lambda_k, \tilde f_k) +
\nabla_f^{(1)}h_k(\tilde\lambda_k, \tilde f_k) \odot (f_k - \hat f_k)^\prime.
\end{align}
Two more mean value theorem expansions on the second and third terms, using $%
\check\lambda_k$ entrywise between $\hat\lambda_k$ and $\tilde\lambda_k$,
likewise again for $\check f_k$: 
\begin{align*}
(\lambda_k - \hat\lambda_k)\odot \nabla_\lambda^{(1)}h_k(\tilde\lambda_k,
\tilde f_k) &= (\lambda_k - \hat\lambda_k)\odot
\nabla_\lambda^{(1)}h_k(\tilde\lambda_k, \hat f_k) + (\lambda_k -
\hat\lambda_k)\odot \nabla_{\lambda f}^{(2)}h_k(\tilde\lambda_k, \check
f_k)\odot (\tilde f_k - \hat f_k)^\prime \\
\nabla_f^{(1)}h_k(\tilde\lambda_k, \tilde f_k) \odot (f_k - \hat f_k)^\prime
&= \nabla_f^{(1)}h_k(\hat\lambda_k, \tilde f_k) \odot (f_k - \hat
f_k)^\prime + (\tilde\lambda_k - \hat\lambda_k)\odot \nabla_{f \lambda
}^{(2)}h_k(\check\lambda_k, \tilde f_k)\odot (\tilde f_k - \hat f_k)^\prime.
\end{align*}

The last terms in these expansions can be bounded, 
\begin{align*}
\frac{1}{\sqrt{NT}}&\left\|(\mathbb{I}_N - W^{(1)}) (\lambda_k -
\hat\lambda_k)\odot \nabla_{\lambda f}^{(2)}h_k(\tilde\lambda_k, \check
f_k)\odot (\tilde f_k - \hat f_k)^\prime(\mathbb{I}_T - W^{(2)})^\prime
\right\|_F \\
&\leq \frac{1}{\sqrt{NT}} \left\| \mathbb{I}_N - W^{(1)}\right\|_2 \left\| 
\mathbb{I}_T - W^{(2)}\right\|_2 \left\| (\lambda_k - \hat\lambda_k)\odot
\nabla_{\lambda f}^{(2)}h_k(\tilde\lambda_k, \check f_k)\odot (\tilde f_k -
\hat f_k)^\prime\right\|_F.
\end{align*}
Terms $\left\| \mathbb{I}_N - W^{(1)}\right\|_2 = O(R_2)$, and $\left\| 
\mathbb{I}_T - W^{(2)}\right\|_2 = O(R_2)$ since additive weights are
absolutely summable. Term $\nabla_{\lambda f}^{(2)}h_k(\tilde\lambda_k,
\check f_k) = \nabla_{\lambda f}^{(2)}h_k(\lambda_k, f_k) + o_p(1)$ by
consistency of $\hat\lambda_k$ and $\hat f_k$, and continuity since $h_k\in
H_f^p(\Omega_\lambda, \Omega_f)$ with $p\geq 3$. Also, by Markov inequality
and $h_k\in H_f^p(\Omega_\lambda, \Omega_f)$ with $p\geq 3$, 
\begin{align*}
Pr\left[\nabla_{\lambda f}^{(2)}h_k( \lambda_k, f_k)^2> M \right] \leq 
\mathbb{E}\left[ \nabla_{\lambda f}^{(2)}h_k( \lambda_k, f_k)^2 \right]/M
=O(1/M).
\end{align*}
Hence, the last term is, noting that $\tilde f_k$ is between $\hat f_k$ and $%
f_k$, hence also consistent for $f_k$, 
\begin{align*}
\frac{1}{\sqrt{NT}}\left\| (\lambda_k - \hat\lambda_k)\odot \nabla_{\lambda
f}^{(2)}h_k(\tilde\lambda_k, \check f_k)\odot (\tilde f_k - \hat
f_k)^\prime\right\|_F &= \frac{O_p(1)}{\sqrt{NT}}\left\| \lambda_k -
\hat\lambda_k\right\|\left\| \tilde f_k - \hat f_k\right\| + o_p(\xi_\lambda
\xi_f) \\
&= O_p(\xi_\lambda \xi_f) + o_p(\xi_\lambda \xi_f)
\end{align*}

Term $(\lambda_k - \hat\lambda_k)\odot
\nabla_\lambda^{(1)}h_k(\tilde\lambda_k, \hat f_k)$ can be bound using \cite%
{opsomer2000asymptotic} over $\hat f_k$ argument, 
\begin{align*}
\frac{1}{\sqrt{NT}}&\left\|(\mathbb{I}_N - W^{(1)}) (\lambda_k -
\hat\lambda_k)\odot \nabla_\lambda^{(1)}h_k(\tilde\lambda_k, \hat f_k)(%
\mathbb{I}_T - W^{(2)})^\prime \right\|_F \\
&\leq \frac{1}{\sqrt{NT}} \left\| \mathbb{I}_N - W^{(1)}\right\|_2
\left\|(\lambda_k - \hat\lambda_k)\odot
\nabla_\lambda^{(1)}h_k(\tilde\lambda_k, \hat f_k)(\mathbb{I}_T -
W^{(2)})^\prime \right\|_F \\
&= \frac{O_p(R_2)}{\sqrt{NT}} \left\|(\lambda_k - \hat\lambda_k)\odot
\nabla_{\lambda f}^{(2)}h_k(\tilde\lambda_k, \hat f_k)\cdot h_f\right\|_F
\end{align*}
Then by similar arguments as above, this is $R_2\cdot O_p(\xi_\lambda h_f)$.

Last term to consider is $h_k(\hat\lambda_k, \hat f_k) $. Additive model
arguments in \cite{opsomer2000asymptotic} give: 
\begin{align*}
\sum_{k=1}^{R_0}\left(\mathbb{I}_N - W^{(1)} \right)h_k(\hat\lambda_k, \hat
f_k) \left(\mathbb{I}_T - W^{(2)} \right)^\prime &=
\sum_{k=1}^{R_0}O_p(h_\lambda ) \cdot \nabla^{(1)}_\lambda
h_k(\tilde\lambda_k \hat f_k)\left(\mathbb{I}_T - W^{(2)} \right)^\prime \\
&= \sum_{k=1}^{R_0}O_p(h_\lambda)\cdot \nabla^{(2)}_{\lambda f}
h_k(\tilde\lambda_k \tilde f_k)\cdot O_p(h_f)
\end{align*}
Thus, again using consistency and Hilbert properties to bound $%
\nabla^{(2)}_{\lambda f} h_k(\tilde\lambda_k \tilde f_k)$, 
\begin{align*}
\frac{1}{\sqrt{NT}}\left\| \sum_{k=1}^{R_0}\left(\mathbb{I}_N - W^{(1)}
\right)h_k(\hat\lambda_k, \hat f_k) \left(\mathbb{I}_T - W^{(2)}
\right)^\prime\right\|_f \leq R_0\cdot R_2^2\cdot O_p(h_\lambda h_f)
\end{align*}

Combine these results for the final statement of the result.

\hfill
\end{proof}

\begin{proof}[Proof of Lemma~\protect\ref{lemma:withinEstimatorVar}]
Here we verify, 
\begin{align*}
Var[\xi_{\Gamma X} ] &= \frac{R_2^6}{NT}\cdot O_p(\xi_\lambda^2\xi_f^2) + 
\frac{R_2^8}{NT}\cdot O_p(h_\lambda^2h_f^2 + (NT h_\lambda h_f)^{-1}).
\end{align*}

Consider $\frac{1}{NT}tr\{\hat\eta^{\prime }\hat\varepsilon\}$. Use $e:=
vec(\varepsilon)$ and $x := vec(\eta)$. $vec (\hat\eta)$ can be written $(%
\mathbb{I} \otimes W^{(1)})x + (W^{(2)} \otimes \mathbb{I})x -
(W^{(1)}\otimes W^{(2)}) x$. Then, 
\begin{align*}
\frac{1}{NT}tr\{\hat\eta^{\prime }\hat\varepsilon\} &= \frac{1}{NT}x^{\prime
}(\mathbb{I} \otimes W^{(1)})^{\prime }(\mathbb{I} \otimes W^{(1)})e + \frac{%
1}{NT}x^\prime (W^{(2)} \otimes \mathbb{I})^{\prime }(\mathbb{I} \otimes
W^{(1)})e \\
&- \frac{1}{NT}x^{\prime}(W^{(1)}\otimes W^{(2)})^{\prime }(\mathbb{I}
\otimes W^{(1)})e + \frac{1}{NT}x^{\prime }(\mathbb{I} \otimes
W^{(1)})^{\prime}(W^{(1)}\otimes\mathbb{I} )e \\
& + \frac{1}{NT}x^{\prime}(W^{(2)}\otimes\mathbb{I} )^{\prime}(W^{(2)}\otimes%
\mathbb{I} )e- \frac{1}{NT}x^{\prime }(W^{(1)}\otimes W^{(2)})^{\prime
}(W^{(2)}\otimes\mathbb{I} )e \\
&+ \frac{1}{NT}x^{\prime }(\mathbb{I} \otimes W^{(1)})^{\prime}(W^{(1)}
\otimes W^{(2)})e + \frac{1}{NT}x^{\prime}(W^{(2)}\otimes\mathbb{I}
)^{\prime}(W^{(1)}\otimes W^{(2)})e \\
& - \frac{1}{NT}x^{\prime }(W^{(1)}\otimes W^{(2)})^{\prime}(W^{(1)}\otimes
W^{(2)} )e
\end{align*}
All terms are mean zero. Variance of the first term is bounded in norm by, 
\begin{align*}
\frac{1}{(NT)^2}\mathbb{E}&\left[ \left\|x^{\prime }(\mathbb{I} \otimes
W^{(1)})^{\prime }(\mathbb{I} \otimes W^{(1)})e e^\prime (\mathbb{I} \otimes
W^{(1)})^{\prime }(\mathbb{I} \otimes W^{(1)})x\right\|_F\right] \\
&\leq \frac{1}{(NT)^2}\mathbb{E}\left[\|x^{\prime}( W^{(1)}\otimes \mathbb{I}%
)^\prime (W^{(1) }\otimes \mathbb{I} )\|_F\|ee^{\prime }(W^{(1)}\otimes 
\mathbb{I})^\prime(W^{(1) }\otimes \mathbb{I} )x\|_F \right] \\
&\leq \frac{1}{(NT)^2}\mathbb{E}\left[\| W^{(1)\,^{\prime }}W^{(1)}\eta
\|_F^2\| ee^{\prime }\|_2\right]
\end{align*}
We use $(W^{(1)}\otimes \mathbb{I})^{\prime }(W^{(1)} \otimes \mathbb{I} )x
= vec( W^{(1)\,^{\prime }}W^{(1)}\eta )$. Term $W^{(1)}\eta = R_2\cdot
O_p((Nh_\lambda)^{-1/2})$ by mean zero $\eta$. From approximate symmetry in
weights, see \cite{opsomer1999root}, $W^{(1)\,^{\prime }}W^{(1)}\eta =
R_2^2\cdot O_p((Nh_\lambda)^{-1/2})$, since $W^{(1)} =
\sum_{\ell=1}^{R_2}W^{(1)}_\ell$ sums over rows, and columns, are $O(R_2)$.
Hence, 
\begin{align*}
\frac{1}{(NT)^2}\| W^{(1)\,^{\prime }}W^{(1)}\eta \|_F^2\| ee^{\prime }\|_2
= \frac{R_2^4}{NT}\cdot O_p((Nh_\lambda)^{-1})
\end{align*}
Similarly, $\frac{1}{(NT)^2}\| W^{(1)\,^{\prime }}W^{(1)}\eta
W^{(2)\,^{\prime }}W^{(2)} \|_F^2\|\| ee^{\prime }\|_2 = \frac{R_2^8}{NT}%
\cdot O_p((NT h_\lambda h_f)^{-1})$. All other terms in the variance of $%
\frac{1}{NT}tr\{\hat\eta^{\prime }\hat\varepsilon\} $ are similarly bounded
such that, 
\begin{align*}
Var\left[\frac{1}{NT}tr\{\hat\eta^{\prime }\hat\varepsilon\}|\eta\right]
\leq \frac{R^4}{NT}\cdot O_p((Nh_\lambda)^{-1}+ (Th_f)^{-1}) + \frac{R^8}{NT}%
\cdot O_p((NT h_\lambda h_f)^{-1}) .
\end{align*}

The two remaining terms in $\xi_{\Gamma X} $ are bound now. Use $%
\hat\varepsilon = W^{(1)}\varepsilon + W^{(1)}\varepsilon W^{(2)} -
W^{(1)}\varepsilon W^{(2)}$, 
\begin{align*}
\frac{1}{(NT)^2}Var\left[ tr\left\{\tilde\Gamma_X^{\prime}
W^{(1)}\varepsilon\right\} |\eta \right] &= \frac{1}{(NT)^2} vec\left[W^{
(1)\,^{\prime }}\tilde\Gamma_X \right]^{\prime }\mathbb{E}\left[ee^{\prime
}|\eta \right] vec\left[W^{ (1)\,^{\prime }}\tilde\Gamma_X \right] \\
&\leq \frac{1}{(NT)^2} \|\tilde\Gamma_X\|_F^2 \|W^{(1)}\|_2^2 \left\| 
\mathbb{E}\left[ee^{\prime }|\eta \right]\right\|_2
\end{align*}
From Ostrowski theorem $\|W^{(1)}\|_2 = O(R_2)$ since each row and column of 
$W_\ell^{(1)}$ in $W^{(1)} = \sum_{\ell =1}^{R_2}W_\ell^{(1)}$ sums are
bounded. By Assumption~\ref{ass:boundedNorms} $\|\mathbb{E}\left[ee^{\prime
}|\eta \right]\|_2 = O_p(1)$. Hence the term is bounded, 
\begin{align*}
\frac{1}{(NT)^2}Var\left[ tr\left\{\tilde\Gamma_X^{\prime}
W^{(1)}\varepsilon\right\} |\eta \right] = \frac{R_2^2}{(NT)^2}
\|\tilde\Gamma_X\|_F^2
\end{align*}
By similar arguments, $\frac{1}{(NT)^2}Var\left[tr\left\{W^{(2)\,^{\prime
}}\tilde\Gamma_X^{\prime} W^{(1)}\varepsilon\right\}|\eta \right] = \frac{%
R_2^4}{(NT)^2} \|\tilde\Gamma_X\|_F^2$, the bounding factor for variance
terms in $\frac{1}{NT}tr\{\tilde\Gamma_X\hat\varepsilon\}$. Variance from $%
\frac{1}{NT}tr\{\tilde\Gamma\hat\eta\}$ is bound similarly.

Hence, variance of $\xi_{\Gamma X} $ is bounded, 
\begin{align*}
Var(\xi_{\Gamma X} ) &= \frac{R_2^6}{NT}\cdot O_p(\xi_\lambda^2\xi_f^2) + 
\frac{R_2^4}{NT}\cdot O_p((Nh_\lambda)^{-1}+ (Th_f)^{-1}) + \frac{R_2^8}{NT}%
\cdot O_p(h_\lambda^2h_f^2 + (NT h_\lambda h_f)^{-1}) .
\end{align*}
\hfill
\end{proof}

\begin{proof}[Proof of Corollary~\protect\ref{cor:Final}]
We must bound ${\hat\Omega_X}^{-1}$. Let $vec_K\in \mathbb{R}^{NT\times K}$
vectorise $X$ over its third dimension. Denote $\times_n$ the multilinear
product over dimension $n$, e.g. for $X\in \mathbb{R}^{N\times T\times K}$,
and $M\in \mathbb{R}^{N\times N}$, then $M\times_1 X = \sum_i M_{\cdot
i}X_{i\cdot\cdot}\in \mathbb{R}^{N\times T\times K}$, and for $\tilde M \in 
\mathbb{R}^{T\times T}$, $(\tilde M\times_2 X)_{it} = \sum_sM_{t s}\tilde
X_{i s \cdot}\in \mathbb{R}^{K}$. 
\begin{align*}
\hat\Omega_X &= \frac{1}{NT} \left\{ vec_K[ (\mathbb{I} - W_1)\times_1 (%
\mathbb{I} - W_2)^{\prime }\times_2 (\Gamma_X + \eta) ]^{\prime }vec_K[ (%
\mathbb{I} - W_1)\times_1 (\mathbb{I} - W_2)^{\prime }\times_2 (\Gamma_X +
\eta) ] \right\} \\
&= \frac{1}{NT} \left\{ vec_K[ (\mathbb{I} - W_1)\times_1 (\mathbb{I} -
W_2)^{\prime }\times_2 \eta ]^{\prime }vec_K[ (\mathbb{I} - W_1)\times_1 (%
\mathbb{I} - W_2)^{\prime }\times_2 \eta ] \right\} + \dots.
\end{align*}
Denote this first term $\hat\Omega_{\eta}\in \mathbb{R}^{K\times K}$. We
need to show $\|\hat\Omega_X^{-1}\|_2 < \infty$ wpa1.. For the $k,\ell$
entry, 
\begin{align*}
\hat\Omega_{\eta, k\ell} = \frac{1}{NT}tr \left\{(\mathbb{I} - W_2)^{\prime
}\eta_k^{\prime }(\mathbb{I} - W_1)(\mathbb{I} - W_1)^{\prime }\eta_\ell (%
\mathbb{I} - W_2)\right\}
\end{align*}
The term $(\mathbb{I} - W_1)^{\prime }\eta_k (\mathbb{I} - W_2) $ simplifies  
to, 
\begin{align*}
(\mathbb{I} - W_1)^{\prime }\eta_k (\mathbb{I} - W_2) &= \eta_k +
O_p(h_\lambda^2 + (Nh_\lambda)^{-1/2} + h_f^2 + (Th_f)^{-1/2}).
\end{align*}
From $h_\lambda = cN^{-\tau}$, $h_f = cT^{-\tau}$ with $\tau\in(1/4, 1)$, 
\cite{opsomer1999root} Theorem 2 implies, 
\begin{align*}
\hat\Omega_{\eta}^{-1} = \left[\frac{1}{NT}vec_K[\eta]^{\prime }vec_K[\eta]%
\right]^{-1} + o_p(NT)^{-1/2}
\end{align*}
Thus, $\hat\Omega_{\eta}^{-1} = \mathbb{E}[\eta_{it}^{\prime
}\eta_{it}]^{-1} + O_p(NT)^{-1/2} + o_p(NT)^{-1/2}$, by cofactor expansion,
see \cite{opsomer1999root}. Other terms in $\hat\Omega_X$ are $O_p(\xi_X)$
by arguments in Section~\ref{sect:NO}. 
\begin{align*}
\hat\Omega_X^{-1} &= \hat\Omega_{\eta}^{-1} + O_p(\xi_X) = \mathbb{E}%
[\eta_{it}^{\prime }\eta_{it}]^{-1} + O_p(NT)^{-1/2} + O_p(\xi_X).
\end{align*}
Above we show $\xi_X = R_2^2\cdot O_p(h_\lambda h_f + \xi_\lambda \xi_f) =
o_p(1) $, hence $\hat\Omega_X^{-1} = \mathbb{E}[\eta_{it}^{\prime
}\eta_{it}]^{-1} + o_p(1)$. This implies singular values $\hat\Omega_\eta $
are asymptotically lower bounded by singular values of $\mathbb{E}%
\eta_{it}^{\prime }\eta_{it}$ from Weyl's theorem, which are all non-zero by
Assumption~\ref{ass:ID}, such that $\|\hat\Omega_X^{-1}\|_2 = O_p(1)$.

The variance of $\hat{\beta}_{NO}^{SS}$ can thus be characterised, 
\begin{align*}
Var[\hat{\beta}_{NO}^{SS}|\eta] &= \mathbb{E}[\eta_{it}^{\prime
}\eta_{it}]^{-1}\frac{1}{NT} \left\{vec_K(\eta)^{\prime }\mathbb{E}%
[ee^{\prime }|\eta] vec_K(\eta)\right\} \mathbb{E}[\eta_{it}^{\prime
}\eta_{it}]^{-1} \\
&+ \frac{R_2^6}{NT}\cdot O_p(\xi_\lambda^2\xi_f^2) + \frac{R_2^8}{NT}\cdot
O_p(h_\lambda^2h_f^2 + (NT h_\lambda h_f)^{-1})
\end{align*}

\hfill
\end{proof}

\begin{proof}[Proof of Lemma~\protect\ref%
{lemma:withinEstimatorBiasIndependent}]
When proxies are independent proof of Lemma~\ref{lemma:withinEstimatorBias}
simplifies as follows. 
\begin{align*}
\left(\mathbb{I}_N - W^{(1)} \right) h_k(\lambda_k, f_k) \left(\mathbb{I}_T
- W^{(2)} \right)^\prime = \left(\mathbb{I}_N - W_k ^{(1)} - \sum_{k^\prime
\neq k} W_{k^\prime}^{(1)} \right) h_k(\lambda_k, f_k) \left(\mathbb{I}_T -
W_k ^{(2)} - \sum_{k^\prime \neq k} W_{k^\prime}^{(2)} \right)^\prime
\end{align*}
This expands to four terms, $A_1 - A_2 - A_3 + A_4$: 
\begin{align*}
A_1 := \left(\mathbb{I}_N - W_k ^{(1)}\right) h_k(\lambda_k, f_k) \left(%
\mathbb{I}_T - W_k ^{(2)}\right)^\prime & & A_2 := \left(\mathbb{I}_N - W_k
^{(1)}\right) h_k(\lambda_k, f_k) \sum_{k^\prime \neq k}
W_{k^\prime}^{(2)\,\prime} \\
A_3 := \sum_{k^\prime \neq k} W_{k^\prime}^{(1)}h_k(\lambda_k, f_k) \left(%
\mathbb{I}_T - W_k ^{(2)}\right)^\prime & & A_4 := \sum_{k^\prime \neq k}
W_{k^\prime}^{(1)}h_k(\lambda_k, f_k) \sum_{\ell^\prime \neq k}
W_{\ell^\prime}^{(2)\,\prime}
\end{align*}

Term $(NT)^{-1/2}\|A_4\|_F$ can be bound using that $h_k(\lambda_k, f_k)$
and smoothers are mean zero, 
\begin{align*}
\sum_{k^\prime \neq k} \sum_{\ell^\prime \neq k} \sum_{js} W_{ij,
k^\prime}^{(1)} W_{ts, k^\prime}^{(2)}h_k(\lambda_{jk}, f_{sk}) =
\sum_{k^\prime \neq k} \sum_{\ell^\prime \neq k} O_p(NT)^{-1/2} = (R_2 -1)^2
O_p(NT)^{-1/2}
\end{align*}
where $R_2\geq R_0$ are estimated eigenfunctions. Hence, $%
(NT)^{-1/2}\|A_4\|_F = O_p(R_2^2\cdot (NT)^{-1/2})$.

Next to bound $(NT)^{-1/2}\|A_1\|$. First note that by \cite%
{opsomer2000asymptotic}, additive regression weights are $W^{(1)}_k =
S_k^{(1)} + O_p(\iota_N \iota_N^{\prime }/N)$, where recall that $S_k^{(1)}$
are the smoothers over $\lambda_k$, respectively $S_k^{(2)}$ over $f_k$.
Hence, $A_1$ can be further expanded as $A_1 = A_{11} - A_{12} - A_{13}+
A_{14}$, 
\begin{align*}
A_{11} - A_{12} - A_{13}+ A_{14} &= \left(\mathbb{I}_N - S_k ^{(1)}\right)
h_k(\lambda_k, f_k) \left(\mathbb{I}_T - S_k ^{(2)}\right) - O_p(\iota_N
\iota_N^{\prime }/N) h_k(\lambda_k, f_k) \left(\mathbb{I}_T - S_k
^{(2)}\right) \\
&- \left(\mathbb{I}_N - S_k ^{(1)}\right) h_k(\lambda_k, f_k) O_p(\iota_T
\iota_T^{\prime }/T) + O_p(\iota_N \iota_N^{\prime }/N) h_k(\lambda_k, f_k)
O_p(\iota_T \iota_T^{\prime }/T)
\end{align*}
By mean zero $h_k(\lambda_k, f_k)$, term $(NT)^{-1/2}\|A_{14}\| = 0$.

Terms $A_{12}$ and $A_{13}$ can be bounded smaller order than $A_{11}$, so
we focus on this term. Using the Taylor expansions, entries from the term $%
A_{11}$ can be expanded as, 
\begin{align*}
h_{it} - \sum_j S_{ij, k}^{(1)} h_{jt} - \sum_s S_{ts, k}^{(2)} h_{is} +
\sum_{js} S_{ij, k}^{(1)} S_{ts, k}^{(2)} h_{js} = \sum_{js} S_{ij, k}^{(1)}
S_{ts, k}^{(2)} [h_{it} - h_{jt}- h_{is} + h_{js}].
\end{align*}
From the Taylor expansions the term $h_{it} - h_{jt}- h_{is} + h_{js}$
simplifies to, 
\begin{align*}
h_{it} - h_{jt}- h_{is} + h_{js} = \frac{1}{2}(\lambda_{ik} -
\lambda_{jk})\cdot (f_{tk} - f_{sk}) \nabla_{\lambda f}^{(2)}h_{it} +
o(|\lambda_{ik} - \lambda_{jk}|\cdot |f_{tk} - f_{sk}|).
\end{align*}
Expand $(\lambda_i - \lambda_j)$, respectively $(f_t - f_s)$ around their
estimated quantities 
\begin{align*}
(\lambda_i - \lambda_j) = (\lambda_i - \hat\lambda_i) + (\lambda_j -
\hat\lambda_j) + (\hat\lambda_i - \hat\lambda_j)
\end{align*}
and take the squared norm term, 
\begin{align*}
\frac{1}{NT} \sum_{it} \left( \sum_{js} S_{ij, k}^{(1)} S_{ts, k}^{(2)} [
h_{it} - h_{jt}- h_{is} + h_{js} ]\right)^2
\end{align*}
Thus, by similar arguments as the proof of Lemma~\ref%
{lemma:withinEstimatorBias}, and standard nonparametric arguments, 
\begin{align*}
\frac{1}{\sqrt{NT}}\|A_{11}\| = O_p(\xi_\lambda \xi_f) + O_p(h_\lambda h_f)
+ o_p(h_\lambda h_f).
\end{align*}
Terms $(NT)^{-1/2}\|A_{2}\|$ and $(NT)^{-1/2}\|A_{3}\|$ are bound as $%
R_2\cdot O_p(h_\lambda (T)^{-1/2})$ and $R_2\cdot O_p(h_f (N)^{-1/2})$
respectively by similar arguments.

Define $A_n^{(k)}$ as the $A_n$ objects above for each $k=1,\dots,R_0$.
Then, 
\begin{align*}
\frac{1}{\sqrt{NT}}\|\tilde\Gamma\|_F &\leq \frac{1}{\sqrt{NT}}
\sum_{k=1}^{R_0} \left[\|A_1^{(k)}\| + \|A_2^{(k)}\| + \|A_3^{(k)}\| +
\|A_4^{(k)}\|\right] \\
&= R_0\cdot \left\{O_p(\xi_\lambda \xi_f) + O_p(h_\lambda h_f) +
R_2^2O_p(NT)^{-1/2} \right\}
\end{align*}
This is $o_p(NT)^{-1/4}$ for $h_\lambda = c\cdot N^{-\rho}$ and $h_f =
c\cdot T^{-\rho}$ with $\rho \in (1/4,1)$. \hfill
\end{proof}

\section{Additional Simulations}

\label{sect:AppSims}

Here we study more simulated data generating processes, and further compare
our advocated additive eigenfunction approach to other proxies for $\alpha
_{i}$ and $\gamma _{t}$ standard in the literature. In particular, we
implement our nonparametric smoothing estimator with our proposed fixed
effects estimators $\hat{\lambda}_{i}$ and $\hat{f}_{t}$ replaced by the 
\cite{zhang2017estimating} pseudo distance in \eqref{eqn:ZhangDistance} over 
$Y$ and $X$, which we call \textquotedblleft Pseudo\textquotedblright . We
also implement our estimator with our $\hat{\lambda}_{i}$ and $\hat{f}_{t}$
replaced by the cross-sectional and time-serial averages of $Y$ and $X$ as
proxies employed by \cite{beyhum2025inference} , which we call
\textquotedblleft Moment\textquotedblright . We find that our nonparametric
smoothing estimator is robust to higher dimensional fixed-effects and more
complicated functions under our additive eigenfunction approach, but not
under the alternatives Pseudo and Moment. However, we make no claim to the
relative performance of our estimator to those of \cite{deaner2025inferring,
beyhum2025inference}, since they use these distance metrics in different
ways.

The Moment proxies are defined as follows: 
\begin{align*}
\tilde{\lambda}_{i,Y}=\frac{1}{T}\sum_{t}Y_{it},&  & \tilde{\lambda}_{i,X}=%
\frac{1}{T}\sum_{t}X_{it},&  & \tilde{\lambda}& =(\tilde{\lambda}_{Y},\tilde{%
\lambda}_{X})^{\prime }, \\
\tilde{f}_{t,Y}=\frac{1}{N}\sum_{i}Y_{it},&  & \tilde{f}_{t,X}=\frac{1}{N}%
\sum_{i}X_{it},&  & \tilde{f}& =(\tilde{f}_{Y},\tilde{f}_{X})^{\prime }.
\end{align*}%
Variables $\{Y,X\}$ are both residualised according to 
\eqref{eq: two-way
kernel regression} over $\tilde{\lambda}$ and $\tilde{f}$. The
Pseudo--distance proxies w.r.t. $Y$ are defined in \eqref{eqn:ZhangDistance}%
, and we construct similar ones using $X$, denoted  $\hat{d}_{X}^{(1)}$, $%
\hat{d}_{X}^{(2)}$. We then set $\tilde{\lambda}=(\hat{d}_{Y}^{\left(
1\right) },\hat{d}_{X}^{\left( 1\right) })^{\prime }$ and $\tilde{f}=(\hat{d}%
_{Y}^{\left( 2\right) },\hat{d}_{X}^{\left( 2\right) })^{\prime }$. The
Pseudo proxies do not update, hence no backfitting is required.

Consider the data generating process: 
\begin{align*}
Y_{it}=X_{it}\beta +g(\alpha _{i},\gamma _{t})+\varepsilon _{it},
&&
X_{it}=g_{X}(\alpha _{i},\gamma _{t})+\eta _{it}.
\end{align*}%
where $\varepsilon _{it},\eta _{it}$ are $i.i.d.N(0,1)$ for purposes of
Section~\ref{sect:AppSims} simulations.\footnote{%
In the main text, simulations in Section~\ref{sect:Simulations} allows
correlations and heteroskedasticity in these noise terms. However, for
cleaner comparisons to other methods, we use i.i.d. assumptions for this
section's set of simulations.}

Output is reported in Tables~\ref{tbl:AppSimulationD=2} and~\ref%
{tbl:AppSimulationD=3} for the following functions $g$ and $g_X$: 
\begin{align}  \label{eqn:simsApp}
g(\alpha_i,\gamma_t) = \frac{1}{\theta\sqrt{2\pi}}\sum_{\ell = 1}^d
\exp\left(- \frac{(\alpha_{i\ell} - \gamma_{t\ell})^2}{\theta^2}%
\right), \quad\quad g_X(\alpha_i,\gamma_t) = \sum_{\ell = 1}^d\frac{1}{%
(|\alpha_{i\ell} - \gamma_{t\ell}| + 1)^{\theta}}.
\end{align}
Variation from these functions is then normalised to variance 4.
Heterogeneity $\alpha_i,\gamma_t\in \mathbb{R}^d$. For \eqref{eqn:simsApp}
we consider $d =2$ and $d = 3$. Smoothness parameter, $\theta = 1/2$ when $d
= 2$, and $\theta = 1$ when $d = 3$. This ensures numerical stability in the
DGP.\footnote{%
Higher $\theta$ implies numerically smoother functions, and must increase as 
$d$ increases to sufficiently bound variation in the function for small
perturbations around $\alpha$, $\gamma$. This is predicted by our theory,
cf. $\rho$ in Lemmas~\ref{lemma:frewei2023} and ~\ref{lemma:bai2009fe}.}

Tables~\ref{tbl:AppSimulationD=1Simple},~\ref{tbl:AppSimulationD=2Simple}
and~\ref{tbl:AppSimulationD=3Simple} report simulations for $%
\alpha_i,\gamma_t\in \mathbb{R}^d$ with $d\in\{1,2,3\}$ for functions
replicated as multivariate extensions to \cite{beyhum2025inference}
simulations: 
\begin{align}  \label{eqn:simsAppSimple}
g(\alpha_i,\gamma_t) = \|\alpha_i\|^2 + \alpha_i^\prime \gamma_t +
\sum_{\ell=1}^dsin(\alpha_{i\ell} \gamma_{t\ell} ), & & g_X(\alpha_i,%
\gamma_t) = \|\gamma_t\|^2 + \alpha_i^\prime \gamma_t +
\sum_{\ell=1}^dsin(\alpha_{i\ell} \gamma_{t\ell} )
\end{align}
In Table~\ref{tbl:AppSimulationD=1SimpleRE} we generate $\tilde g_X(a,b) =
g_X(a,b) + a_{x,i} + b_{x,t}$ with random effects $a_{x,i},b_{x,t}$. All
simulations for DGP~\eqref{eqn:simsAppSimple} use $\alpha_i,\gamma_t \sim
U(0,1)$ such that $\alpha_i,\gamma_t$ have positive mean.

Hyperparameters are set as follows. For the factor model estimators, $R_1 =
2\cdot\min\{N,T\}^{1/3}$, and for the nonparametric weighted-within
estimator these are set to $R_2 = 2\cdot\min\{N,T\}^{1/5}$.\footnote{%
More factors are required with higher dimension fixed-effects, $d_\alpha,
d_\gamma$, which is predicted by our theory.} Bandwidths for the
nonparametric weighted-within estimators are set to $h =
(25/\min\{N,T\})^{1/2}$. For the Moment approach, $h =
(10/\min\{N,T\})^{1/2} $, and the psuedo distance estimator uses $h =
(25/\min\{N,T\})^{1/4}$.\footnote{%
The pseudo distance estimator is very sensitive to bandwidth selection, and
can produce estimates with increasing standard errors. This is likely due to
the greedy nature for that pseudo distance.} All nonparametric
weighted-within estimators use the Gaussian kernel.

Tables~\ref{tbl:AppSimulationD=2} and~\ref{tbl:AppSimulationD=3} show
consistency under the Psuedo metric, but poor coverage since the bias is
still present, and large here. As the dimension, $d$, of $%
\alpha_i,\gamma_t\in \mathbb{R}^d$ increases, this bias becomes worse across
all estimators. However, the Factor, and Weighted-Within estimators still
perform well, with nominal coverage for our Weighted-Within estimator.

Tables~\ref{tbl:AppSimulationD=1Simple},~\ref{tbl:AppSimulationD=2Simple}
and~\ref{tbl:AppSimulationD=3Simple} for the simpler DGP in %
\eqref{eqn:simsAppSimple} show that whilst the moment based distance metric
performs well for scalar valued $\alpha_i,\gamma_t$, the estimator performs
poorly when $d=2,3$. Hence, the moment based distance, under simple DGPs
that admit injectivity over the effect space works for scalar valued
effects, but evidently struggles under even moderate dimensionality in $%
\alpha_i,\gamma_t$. The Weighted-Within estimator performs well regardless
of dimensions considered. The factor model also performs well, albeit with
under coverage likely only from standard errors being too small.

Table~\ref{tbl:AppSimulationD=1SimpleRE} shows a concerning aspect of the
moment distance metric. Under scalar effects and DGP %
\eqref{eqn:simsAppSimple}, the inclusion of even very simple random effects
breaks the injective moments condition seemingly necessary for this distant
metric to be useful.

\begin{table}[tbp]
\centering
\begin{tabular}{r|rrrr|}
& \multicolumn{4}{c}{Bias} \\ \hline
Dim. & Factor & WW & Pseudo & Moment \\ \hline
50 & 0.0484 & 0.0645 & 0.2467 & 0.7182 \\ 
100 & 0.0228 & 0.0207 & 0.1002 & 0.7255 \\ 
150 & 0.0159 & 0.0125 & 0.0608 & 0.7310 \\ 
200 & 0.0129 & 0.0092 & 0.0427 & 0.7347 \\ 
250 & 0.0118 & 0.0074 & 0.0327 & 0.7356 \\ \hline
\end{tabular}
\begin{tabular}{|rrrr|}
\multicolumn{4}{c}{Coverage} \\ \hline
Factor & WW & Pseudo & Moment \\ \hline
0.44 & 0.92 & 0.00 & 0.00 \\ 
0.45 & 0.97 & 0.00 & 0.00 \\ 
0.38 & 0.96 & 0.00 & 0.00 \\ 
0.30 & 0.96 & 0.00 & 0.00 \\ 
0.18 & 0.97 & 0.00 & 0.00 \\ \hline
\end{tabular}%
\caption{$d = 2$ DGP~\eqref{eqn:simsApp} Bias and coverage (5\% nominal
test) for 1,000 Monte Carlo rounds}
\label{tbl:AppSimulationD=2}
\centering
\begin{tabular}{r|rrrr}
& \multicolumn{4}{c}{Bias} \\ \hline
Dim. & Factor & WW & Pseudo & Moment \\ \hline
50 & 0.0454 & 0.0836 & 0.1854 & 0.6827 \\ 
100 & 0.0261 & 0.0276 & 0.0799 & 0.6938 \\ 
150 & 0.0239 & 0.0158 & 0.0506 & 0.6983 \\ 
200 & 0.0227 & 0.0109 & 0.0377 & 0.7024 \\ 
250 & 0.0220 & 0.0087 & 0.0304 & 0.7038 \\ \hline
\end{tabular}
\begin{tabular}{|rrrr|}
\multicolumn{4}{c}{Coverage} \\ \hline
Factor & WW & Pseudo & Moment \\ \hline
0.41 & 0.85 & 0.00 & 0.00 \\ 
0.26 & 0.93 & 0.00 & 0.00 \\ 
0.04 & 0.94 & 0.01 & 0.00 \\ 
0.00 & 0.95 & 0.01 & 0.00 \\ 
0.00 & 0.95 & 0.01 & 0.00 \\ \hline
\end{tabular}%
\par
{\footnotesize {Dim. is the size of each dimension, with $N=T$. Factor is a
factor model approximation. WW is our Weighted Within estimator. Pseudo
utilises the Weighted Within estimator with pseudo-distance in \cite%
{zhang2017estimating}. Moment is the Weighted Within estimator with
cross-sectional and time-serial means used to form distance metrics.} }
\caption{$d = 3$ DGP~\eqref{eqn:simsApp} Bias and coverage (5\% nominal
test) for 1,000 Monte Carlo rounds}
\label{tbl:AppSimulationD=3}
\end{table}

\begin{table}[tbp]\centering
\begin{tabular}{r|rrrr|}
& \multicolumn{4}{c}{Bias} \\ \hline
Dim. & Factor & WW & Pseudo & Moment \\ \hline
50 & 0.0055 & 0.0154 & 0.0122 & -0.0002 \\ 
100 & 0.0023 & 0.0065 & 0.0203 & 0.0004 \\ 
150 & 0.0008 & 0.0028 & 0.0334 & 0.0001 \\ 
200 & 0.0002 & 0.0017 & 0.0365 & -0.0001 \\ 
250 & 0.0005 & 0.0016 & 0.0353 & 0.0001 \\ \hline
\end{tabular}
\begin{tabular}{|rrrr|}
\multicolumn{4}{c}{Coverage} \\ \hline
Factor & WW & Pseudo* & Moment \\ \hline
0.86 & 1.00 & 0.93 & 0.96 \\ 
0.90 & 0.99 & 0.84 & 0.96 \\ 
0.92 & 1.00 & 0.78 & 0.95 \\ 
0.92 & 0.99 & 0.75 & 0.96 \\ 
0.93 & 1.00 & 0.70 & 0.95 \\ \hline
\end{tabular}%
\caption{$d = 1$ DGP~\eqref{eqn:simsAppSimple} (5\% nominal test) for 1,000
Monte Carlo rounds}
\label{tbl:AppSimulationD=1Simple}
\centering
\begin{tabular}{r|rrrr|}
& \multicolumn{4}{c}{Bias} \\ \hline
Dim. & Factor & WW & Pseudo & Moment \\ \hline
50 & 0.0157 & 0.0276 & 0.0249 & 0.0318 \\ 
100 & 0.0065 & 0.0168 & 0.0271 & 0.0316 \\ 
150 & 0.0033 & 0.0105 & 0.0412 & 0.0306 \\ 
200 & 0.0015 & 0.0057 & 0.0629 & 0.0301 \\ 
250 & 0.0010 & 0.0038 & 0.0988 & 0.0297 \\ \hline
\end{tabular}
\begin{tabular}{|rrrr|}
\multicolumn{4}{c}{Coverage} \\ \hline
Factor & WW & Pseudo* & Moment \\ \hline
0.79 & 0.99 & 0.89 & 0.80 \\ 
0.83 & 0.98 & 0.64 & 0.30 \\ 
0.89 & 0.99 & 0.53 & 0.05 \\ 
0.90 & 1.00 & 0.50 & 0.00 \\ 
0.91 & 1.00 & 0.42 & 0.00 \\ \hline
\end{tabular}%
\caption{$d = 2$ DGP~\eqref{eqn:simsAppSimple} (5\% nominal test) for 1,000
Monte Carlo rounds}
\label{tbl:AppSimulationD=2Simple}
\centering
\begin{tabular}{r|rrrr|}
& \multicolumn{4}{c}{Bias} \\ \hline
Dim. & Factor & WW & Pseudo & Moment \\ \hline
50 & 0.0254 & 0.0388 & 0.0277 & 0.0418 \\ 
100 & 0.0124 & 0.0267 & 0.0373 & 0.0405 \\ 
150 & 0.0067 & 0.0190 & 0.0412 & 0.0395 \\ 
200 & 0.0035 & 0.0126 & 0.0608 & 0.0392 \\ 
250 & 0.0023 & 0.0094 & 0.0712 & 0.0389 \\ \hline
\end{tabular}
\begin{tabular}{|rrrr|}
\multicolumn{4}{c}{Coverage} \\ \hline
Factor & WW & Pseudo* & Moment \\ \hline
0.69 & 0.96 & 0.89 & 0.66 \\ 
0.72 & 0.97 & 0.59 & 0.13 \\ 
0.80 & 0.96 & 0.44 & 0.00 \\ 
0.87 & 0.97 & 0.35 & 0.00 \\ 
0.89 & 0.98 & 0.33 & 0.00 \\ \hline
\end{tabular}%
\par
{\footnotesize {Dim. is the size of each dimension, with $N=T$. Factor is a
factor model approximation. WW is our Weighted Within estimator. Pseudo
utilises the Weighted Within estimator with pseudo-distance in \cite%
{zhang2017estimating} - *standard errors can be increasing with sample size
for Pseudo estimator. Moment is the Weighted Within estimator with
cross-sectional and time-serial means used to form distance metrics.} }
\caption{$d = 3$ DGP~\eqref{eqn:simsAppSimple} (5\% nominal test) for 1,000
Monte Carlo rounds}
\label{tbl:AppSimulationD=3Simple}
\end{table}

\begin{table}[tbp]
\centering
\begin{tabular}{r|rrrr|}
& \multicolumn{4}{c}{Bias} \\ \hline
Dim. & Factor & WW & Pseudo & Moment \\ \hline
50 & 0.0061 & 0.0142 & 0.0106 & 0.0519 \\ 
100 & 0.0011 & 0.0053 & 0.0200 & 0.0519 \\ 
150 & 0.0004 & 0.0024 & 0.0313 & 0.0519 \\ 
200 & 0.0004 & 0.0020 & 0.0371 & 0.0519 \\ 
250 & 0.0001 & 0.0008 & 0.0341 & 0.0516 \\ \hline
\end{tabular}
\begin{tabular}{|rrrr|}
\multicolumn{4}{c}{Coverage} \\ \hline
Factor & WW & Pseudo & Moment \\ \hline
0.82 & 1.00 & 0.95 & 0.54 \\ 
0.89 & 0.99 & 0.84 & 0.02 \\ 
0.90 & 1.00 & 0.80 & 0.00 \\ 
0.91 & 1.00 & 0.75 & 0.00 \\ 
0.93 & 1.00 & 0.70 & 0.00 \\ \hline
\end{tabular}%
\par
{\footnotesize {Dim. is the size of each dimension, with $N=T$. Factor is a
factor model approximation. WW is our Weighted Within estimator. Pseudo
utilises the Weighted Within estimator with pseudo-distance in \cite%
{zhang2017estimating}. Moment is the Weighted Within estimator with
cross-sectional and time-serial means used to form distance metrics.} }
\caption{$d = 1$ DGP~\eqref{eqn:simsAppSimple} with random effects; (5\%
nominal test) for 1,000 MC rounds}
\label{tbl:AppSimulationD=1SimpleRE}
\end{table}

\setlength{\bibsep}{2pt} 
\bibliographystyle{chicago3}
\bibliography{refs}

\end{document}